\title{Fully Dynamic Connectivity in \texorpdfstring{$O(\log n(\log \log n)^2)$}{O(log n(log log n)2)} Amortized Expected Time}
\tikzset{snake it/.style={decorate, decoration=snake}}
\newcommand{\paragraphb}[1]{\paragraph{#1}}
\newcommand{\ignore}[1]{}
\newcommand{\floor}[1]{\lfloor #1 \rfloor}
\newcommand{\f}[2]{\frac{#1}{#2}}
\newcommand{\fr}[2]{\mbox{$\frac{#1}{#2}$}}
\newcommand{\paren}[1]{\left( #1 \right)}
\newcommand{\bottom}{\perp}
\newcommand{\scut}[2]{\ensuremath{#1 \leftrightharpoons #2}}
\newcommand{\Endpoint}[2]{\ensuremath{\langle #1,#2\rangle}}
\newcommand{\T}{\mathcal{T}}
\newcommand{\W}{\mathcal{W}}
\renewcommand{\H}{\ensuremath{\mathcal{H}}}
\renewcommand{\P}{\ensuremath{\mathcal{P}}}
\newcommand{\F}{\mathcal{F}}
\newcommand{\DF}{{\mathfrak{D}_\F}}
\renewcommand{\DH}{{\mathfrak{D}_\H}}
\renewcommand{\L}{\mathcal{L}}
\newcommand{\dmax}{{d_{\mathit{max}}}}
\newcommand{\isConnected}{{\texttt{Conn?}}}
\newcommand{\Insert}{{\texttt{Insert}}}
\newcommand{\Delete}{{\texttt{Delete}}}
\newcommand{\ang}[1]{\left< #1 \right>}
\newcommand{\const}[1]{\operatorname{#1}}
\DeclareMathOperator{\LSBIndex}{\mathrm{LSBIndex}}
\DeclareMathOperator{\ShortCuts}{\textsc{Cover}}
\newcommand{\Usc}{\textsc{Up}}
\newcommand{\Uptr}{\textsc{UpIdx}}
\newcommand{\Dsc}{\textsc{Down}}
\newcommand{\Dptr}{\textsc{DownIdx}}
\newcommand{\Occ}{\textsc{Occ}}
\newcommand{\Patrascu}{P\v{a}tra\c{s}cu}
\def\compactify{\itemsep=0pt \topsep=0pt \partopsep=0pt \parsep=0pt}
\let\oldcodebox\codebox
\def\codebox{\oldcodebox
\hspace*{2em}\=\hspace*{2em}\=\hspace*{2em}\=\hspace*{2em}\=\hspace*{24em}\=\=\=\=\=\=\=\=\kill}
\begin{document}
\maketitle

\begin{abstract}
Dynamic connectivity is one of the most fundamental problems in dynamic graph algorithms.
We present a randomized Las Vegas dynamic connectivity data structure with
$O(\log n(\log\log n)^2)$ amortized expected update time
and $O(\log n/\log\log\log n)$ worst case query time,
which comes very close to the cell probe lower bounds of
\Patrascu{} and Demaine (2006) and \Patrascu{} and Thorup (2011).
\end{abstract}


\section{Introduction}

The \emph{dynamic connectivity} problem is one of the most fundamental problems in dynamic graph algorithms.
The goal is to support the following three operations on an undirected graph $G=(V,E)$ 
with $n=|V|$ vertices and $m=|E|$ edges,
where $E$ is initially empty.
\begin{itemize}
    \item $\Insert(u, v)$: Set $E \leftarrow E \cup \{\{u,v\}\}$.
    \item $\Delete(u, v)$: Set $E \leftarrow E - \{\{u,v\}\}$.
    \item $\isConnected(u, v)$: Return {\tt{true}} if and only if $u$ and $v$ are in the same connected component in $G$.
\end{itemize}

Dynamic connectivity has been studied in numerous models, 
under both worst case and amortized notions of time,
with deterministic, randomized Las Vegas, and randomized Monte Carlo guarantees, and with both \emph{public} and \emph{private} 
witnesses of connectivity.  Las Vegas algorithms always answer $\isConnected$ queries correctly but their running time is a random variable.  In contrast, the running time of a Monte Carlo algorithm
is guaranteed deterministically, but it only answers $\isConnected$ queries correctly with high probability.  All known dynamic connectivity algorithms maintain a spanning forest $F$ of $G$ as a sparse certificate of connectivity.  If $F$ is \emph{public} then the sequence of 
$\Insert$ and $\Delete$ operations may depend on $F$, and may 
therefore depend on random bits generated earlier by the data structure.
When $F$ is \emph{private} the $\Insert$/$\Delete$ sequence
is selected obliviously.

In this paper we prove near-optimal bounds on the \emph{amortized} complexity of dynamic connectivity in the \emph{Las Vegas} randomzed model, with a \emph{public} connectivity witness.

\begin{theorem}\label{main-result}
There exists a Las Vegas randomized dynamic connectivity data structure
that supports insertions and deletions of edges in amortized expected
$O(\log n(\log \log n)^2)$ time,
and answers connectivity queries in worst case 
$O(\log n / \log \log\log n)$ time.
The time bounds hold even if the adversary is aware of
the internal state of the data structure.
In particular, the data structure maintains 
a public spanning forest as a connectivity witness.
\end{theorem}

\subsection{A Brief History of Dynamic Connectivity Data Structures}

\paragraph{Worst Case Time.}
Frederickson~\cite{Frederickson85} developed a dynamic connectivity structure in the strictest
model---deterministic worst case time---with $O(\sqrt{m})$ update time and $O(1)$ query time.
Eppstein, Galil, Italiano, and Nissenzweig~\cite{EppsteinGIN97} showed that the update times
for many dynamic graph algorithms could be made to depend on $n$ rather than $m$, provided they maintain
an $O(n)$-edge witness of the property being maintained, e.g., a spanning forest in the case of dynamic connectivity.
Together with~\cite{Frederickson85}, Eppstein et al.'s reduction implied an $O(\sqrt{n})$ update 
time for dynamic connectivity,
a bound which stood for many years.  
Kejlberg, Kopelowitz, Pettie, and Thorup~\cite{KejlbergKPT16}
simplified Frederickson's data structure, and improved the update time of~\cite{Frederickson85,EppsteinGIN97}
to $O\paren{\sqrt{\frac{n(\log\log n)^2}{\log n}}}$. Recently Chuzhoy, Gao, Li, Nanongkai, Peng, and Saranurak~\cite{ChuzhoyGLNPS20} improved the worst case update time to $n^{o(1)}$.

Kapron, King, and Mountjoy~\cite{KapronKM13}
gave a Monte Carlo randomized structure with update 
time $O(c\log^5 n)$ and one-sided error\footnote{An error occurs from reporting that
two vertices are disconnected when they are actually connected.} probability $n^{-c}$.
Their data structure maintains a \emph{private} connectivity witness, i.e., it keeps a spanning tree, but 
the adversary controlling $\Insert$ and $\Delete$ operations 
does not have access to the spanning tree.
The update time was later improved to $O(c\log^4 n)$ independently by
Gibb et al.~\cite{GibbKKT15} and Wang~\cite{Wang15}, 
and Wang further reduced the time for $\Insert$ to $O(c\log^3 n)$.
Nanongkai, Saranurak, and Wulff-Nilsen~\cite{NanongkaiSW17}
discovered a Las Vegas randomized structure with $n^{o(1)}$ 
update time that maintains a public connectivity witness.  
\nocite{Wulff-Nilsen17,NanongkaiS17}
This data structure was recently derandomized~\cite{ChuzhoyGLNPS20},
leading to a deterministic $n^{o(1)}$ dynamic 
connectivity algorithm maintaining a public witness.

\paragraph{Amortized Time.}
By allowing amortization in the running time, dynamic connectivity can be solved even faster.
Henzinger and King~\cite{Henzinger1999} proved that with Las Vegas randomization,
dynamic connectivity could be solved with amortized expected $O(\log^3 n)$ update time.  This was subsequently improved
to amortized expected $O(\log^2 n)$ update time by Henzinger and Thorup~\cite{Henzinger97} using a more sophisticated sampling routine.
Holm, de Lichtenberg, and Thorup~\cite{Holm2001} discovered a \emph{deterministic} data structure with amortized $O(\log^2 n)$ update time.  Thorup~\cite{Thorup2000} later improved the \emph{space} of~\cite{Holm2001}
from $O(m+n\log n)$ to optimal $O(m)$, and Wulff-Nilsen~\cite{Wulff-Nilsen13} further improved~\cite{Holm2001,Thorup2000} to have
amortized $O(\log^2 n/\log\log n)$ update time using $O(m)$ space.

At \emph{STOC 2000}, Thorup~\cite{Thorup2000} presented
a Las Vegas randomized data structure with
amortized expected $O(\log n(\log\log n)^3)$ update time and worst case $O(\log n/\log\log\log n)$ query time.
At \emph{SODA 2017}, Huang, Huang, Kopelowitz, and Pettie~\cite{HuangHKP17}
improved the update time of \cite{Thorup2000} to $O(\log n(\log\log n)^2)$,
and substantiated several claims that were only sketched in~\cite{Thorup2000}.
The data structures presented in~\cite{Thorup2000,HuangHKP17}
are especially notable in light of the lower bounds of \Patrascu{} and Demaine~\cite{PatrascuD06}
and \Patrascu{} and Thorup~\cite{PatrascuT11}.  The first shows that any (amortized or randomized)
dynamic connectivity structure with $O(t(n)\log n)$ update time,
$t(n)=\Omega(1)$,
requires $\Omega(\log n/\log t(n))$ query time.  In particular, the maximum of update and query time is $\Omega(\log n)$.
The second shows that any dynamic connectivity structure with $o(\log n)$ update time requires $n^{1-o(1)}$ query time.
Thus, any data structure with $O(\log n(\log\log n)^2)$ update time must have $\Omega(\log n/\log\log\log n)$ query time,
and for \emph{any} reasonable query time, we cannot improve our update time by more than a $(\log\log n)^2$ factor.
On certain restricted classes of inputs, e.g., trees~\cite{SleatorT83} and planar graphs~\cite{EppsteinITTWY92},
both updates and queries can be supported in $O(\log n)$ worst case time.

\paragraphb{Contribution.} This paper should be considered the successor and full version of both the \emph{STOC 2000} and the \emph{SODA 2017} extended abstracts \cite{Thorup2000,HuangHKP17}, improving the complexity of dynamic connectivity from amortized $O(\log^2 n/\log\log n)$ update time~\cite{Wulff-Nilsen13} to the near-optimal amortized expected $O(\log n(\log\log n)^2)$ update time.


\paragraphb{Organization of the Paper.}
In Section~\ref{sect:prelim} we review several fundamental concepts of dynamic connectivity algorithms.
Section~\ref{section:overview-of-the-data-structure} gives a detailed overview of the algorithm, and lists the primitive operations from the data structure that implements the algorithm.
In Section~\ref{section:main-modules} we describe the main modules of the data structure.
The main modules include: maintaining a binary {\em hierarchical} representation of the graph,
maintaining {\em shortcuts} for efficient navigation around the hierarchy, and maintaining a system
of {\em approximate counters} to support nearly-uniform random sampling.
Each of these modules is explained in detail in
Sections \ref{section:shortcut-infrastructure-on-H}--\ref{section:amortized-analysis-of-shortcut-maintenance}.
Finally, we piece up all the modules from the data structures and describe how primitive operations listed in Section~\ref{section:overview-of-the-data-structure} (Lemma~\ref{lemma-main}) are implemented and analyzed in Section~\ref{section:main-oerations}.
We make some concluding remarks
in Section~\ref{section:conclusion}.

\section{Preliminaries}\label{sect:prelim}

In this section we review some basic concepts and invariants used in prior
dynamic connectivity algorithms~\cite{Henzinger1999,Holm2001,Thorup2000,Wulff-Nilsen13}.

\paragraphb{Witness Edges, Witness Forests, and Replacement Edges.}
A common method for supporting connectivity queries
is to maintain a spanning forest $\F$ of $G$ called the \emph{witness forest}, 
together with a dynamic connectivity data
structure on $\F$ (see Theorem~\ref{lemma:witness-forest} below). 
Each edge in the witness forest is called a {\emph{witness}} edge,
and all other edges are called {\emph{non-witness}} edges.
Deleting a non-witness edge does not change the connectivity.

\paragraph{Update Time and Query Time.}
When describing the dynamic connectivity data structure we only focus on the (amortized) running time of the update operations.  Once this time bound is fixed,
Theorem~\ref{lemma:witness-forest} provides a fast query time, which,
according to \Patrascu{} and Demaine~\cite{PatrascuD06}, cannot be unilaterally improved.

\begin{theorem}[Henzinger and King~\cite{Henzinger1999}]\label{lemma:witness-forest}
For any function $t(n) =\Omega(1)$, there exists a dynamic connectivity data structure
for forests with $O(t(n)\log n)$ update time and $O(\log n/\log t(n))$ query time.
\end{theorem}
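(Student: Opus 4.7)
The plan is to represent the witness forest $\F$ using Euler tour trees (ET-trees) with a tunable branching factor. For each component $T$ of $\F$, store its Euler tour, a sequence of length $O(|T|)$ recording each directed traversal of an edge, inside a balanced search tree. Instead of the usual binary balanced BST, I would use a B-tree whose internal nodes have degree between $t(n)$ and $2t(n)$, so the total height is $h = O(\log n / \log t(n))$. Each vertex $u$ keeps a pointer to one canonical occurrence in its Euler sequence, and each internal B-tree node stores a parent pointer.

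To answer $\isConnected(u,v)$, locate the occurrences of $u$ and $v$ in their respective ET-trees and walk upward along parent pointers to the roots; $u$ and $v$ lie in the same tree of $\F$ iff the two roots coincide. This takes $O(h) = O(\log n / \log t(n))$ time. For $\Insert(u,v)$ or $\Delete(u,v)$, standard ET-tree manipulation reduces the operation to $O(1)$ splits and concatenations of Euler sequences, i.e., $O(1)$ split/concat operations on the underlying B-tree. Each such operation processes a single root-to-leaf spine and, at every level, may split a node, redistribute keys with a sibling, or create a new internal node; all of this costs $O(t(n))$ per level, giving total update time $O(t(n) \cdot h) = O(t(n)\log n / \log t(n)) \subseteq O(t(n)\log n)$.

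The main technical obstacle is showing that the B-tree split and concatenation primitives can indeed be implemented in $O(t(n))$ work per level while keeping the degree invariant $[t(n), 2t(n)]$ intact. This is a known but fiddly piece of B-tree technology: a split at a given leaf position is handled by walking up the spine and, at each level, constructing two sibling nodes from the current node's children, which may cause an underflow that is repaired by borrowing from or merging with an adjacent sibling; concatenation proceeds symmetrically, with excess children possibly causing a propagating overflow. Since each level performs $O(t(n))$ work and rebalancing never propagates more than one level per primitive, the per-operation cost telescopes to $O(t(n)\cdot h)$, which matches the claimed update bound for any $t(n) = \Omega(1)$ and reduces to the classical $O(\log n)$-time ET-tree when $t(n) = O(1)$.
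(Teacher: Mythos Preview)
The paper does not actually prove Theorem~\ref{lemma:witness-forest}; it merely cites it as a known result of Henzinger and King~\cite{Henzinger1995}. Your proposal---Euler tour trees stored in balanced search trees with branching factor $\Theta(t(n))$, so that root-finding takes $O(\log n/\log t(n))$ and each split/concat costs $O(t(n))$ per level---is precisely the standard construction behind that cited result, and your sketch is correct.
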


\begin{proof}[Proof sketch]
Maintain an Euler tour of each tree in the witness forest and a balanced $t(n)$-ary rooted tree over the Euler tour elements.
The height of each rooted tree is $O(\log_{t(n)} n)$.  A witness edge insertion/deletion imposes $O(1)$ changes
to the Euler tour, which necessitates $O(t(n) \log_{t(n)} n)$ time to update the rooted trees.
A query $\isConnected(u, v)$ finds the representative copies of $u$ and $v$ in the Euler tours,
walks up to their respective roots, and checks if they are equal.
\end{proof}

The difficulty in maintaining a dynamic connectivity data structure is to find a \emph{replacement edge} $e'$ when a witness edge $e\in\F$
is deleted, or determine that no replacement edge exists.  To speed up the search for replacement edges we maintain
Invariant~\ref{invariant:spanning-forest} (below) governing edge {\em depths}.

\paragraphb{Edge Depths.} Each edge $e$ has a depth $d_e\in [1,\dmax]$, where $\dmax = \lfloor \log n \rfloor$.
Let $E_i$ be the set of edges with depth $i$.  All edges are inserted at depth 1 and depths are non-decreasing over time.
Incrementing the depth of an edge is called a \emph{promotion}.
Since we are aiming for $O(\log n(\log \log n)^2)$ amortized time per update,
if the actual time to promote an edge set $S$ is $O(|S|\cdot (\log\log n)^2)$, the amortized time per promotion is zero.
Promotions are performed in order to maintain Invariant~\ref{invariant:spanning-forest}.

\begin{invariant}[The Depth Invariant]\label{invariant:spanning-forest}\label{invariant:weight-property}~
Define $G_i = (V,\bigcup_{j\ge i} E_j)$.
\begin{enumerate}[(1)]\compactify
\item (Spanning Forest Property) $\F$ is a maximum spanning forest of $G$ with respect to the depths.
\item (Weight Property) For each $i\in [1,d_{max}]$, each connected component in the subgraph
$G_i$ contains at most $n/2^{i-1}$ vertices.
\end{enumerate}
\end{invariant}

\paragraphb{Hierarchy of Connected Components.}
Define $\hat{V}_i$ to be in one-to-one correspondence with the connected components of $G_{i+1}$, which are called \emph{$(i+1)$-components}.
If $u\in V$, let $u^i \in \hat{V}_i$ be the unique $(i+1)$-component containing $u$.
Define $\hat G_i=(\hat V_i, \hat E_i)$ to be the {\em multigraph} 
(including parallel edges and loops)
obtained by contracting edges with depth larger than $i$ and discarding 
edges with depth less than $i$,
so $\hat E_i = \{\{u^i,v^i\} \;|\; \{u,v\}\in E_i\}$.
The hierarchy \H{} is composed of the undirected multi-graphs $\hat{G}_\dmax, \hat G_{\dmax-1}, \ldots, \hat G_0$.
An edge $e=\{u, v\}\in E_i$
is said to be \emph{touching} all nodes
$x^j\in \hat V_j$ where either $u^j=x^j$ or $v^j=x^j$.

Let $F_i = E_i \cap \F$ be the set of \emph{$i$-witness edges}; all other edges in $E_i - F_i$ are \emph{$i$-non-witness edges}.
It follows from Invariant~\ref{invariant:spanning-forest} that 
$F_i$ corresponds to a spanning forest of $\hat G_i$,
if one maps the endpoints of $F_i$-edges 
to the contracted vertices of $\hat G_i$.
The {\em weight} $w(u^i)$ of a node $u^i\in \hat V_i$ is the number of vertices in its component:
$w(u^i) = |\{v\in V \;|\; v^i = u^i\}|$.  The data structure explicitly maintains the exact weight of all hierarchy nodes.
The weight property in Invariant~\ref{invariant:spanning-forest} can be restated as $w(u^{i-1})\le n/2^{i-1}$
since $u^{i-1}$ corresponds to the connected component containing $u$ in $G_i$.

\paragraphb{Endpoints.} The \emph{endpoints} of an edge $e=\{u,v\}$ are the pairs $\ang{u,e}$ and $\ang{v,e}$.
At some stage in our algorithm we sample a random endpoint from a set $S$ of endpoints incident to some $V' \subset V$.
An edge $\{u,v\}$ with $u,v\in V'$ could contribute zero, one, or two endpoints to $S$, i.e.,
the endpoints of an edge are often treated independently.
An endpoint $\ang{u,e}$ is said to be \emph{touching} the nodes $u^i\in \hat V_i$ for all $i\in [1, \dmax]$.

\subsection{Computational Model and Lookup Tables}\label{section:lookup-tables}

We assume a standard $O(\log n)$-bit word RAM with the usual repertoire of $AC^0$ instructions.  The data structure uses some non-standard operations on packed sequences of $O(\log\log n)$-bit floating point numbers, which we can simulate by building small lookup tables with size $O(n^\epsilon)$, for some $\epsilon \in (0,1)$.
Since the initial graph is empty, 
the $O(n^\epsilon)$ sized lookup tables can be built on-the-fly, with their 
cost amortized through the operations.\footnote{As long as the number of graph updates is $m \le n$, all edge depths are at
most $\lfloor \log m\rfloor$. Hence, for each $0\le r\le \log \log n$, after the $m=2^{2^r}$-th graph update, the data structure rebuilds the lookup tables of size $O(m^\epsilon)$. The time cost for building the lookup tables during the first $m$ operations is bounded by
$\sum_{i=0}^{\lceil\log\log m\rceil} m^{\frac{1}{2^i} \epsilon} = O(m^\epsilon)$,
which is amortized $o(1)$ per update.}

\subsection{Miscellaneous}

\paragraph{Almost Uniform Sampling.}
We say that an algorithm samples from a set $X$ \emph{$(1+o(1))$-uniformly at random}, if, 
for any element $x\in X$, 
the probability of $x$ being 
returned is $(1+o(1))/|X|$.
(In our algorithm, the $o(1)$ term is roughly 
$1/\log n$, and $|X|$ is at most polynomial in $n$.)

\paragraph{Mergeable Balanced Binary Trees.}
Some parts of our data structure (see Section~\ref{section:local-trees}) 
use off-the-shelf mergeable balanced binary trees.
They should support leaf-insertion and 
leaf-deletion on $T$ in $O(\log|T|)$ time,
and the \emph{merger} of two trees $T_1,T_2$
in $O(\log|T_1|+\log|T_2|)$ time.
The merge operation may create and delete
internal nodes as necessary to ensure balance.
These trees do \emph{not} store elements from a totally ordered set, and do not need to 
support a \emph{search} function.

\section{Overview of the Algorithm}\label{section:overview-of-the-data-structure}

As in~\cite{Holm2001, Wulff-Nilsen13}, our goal is to restore
Invariant~\ref{invariant:spanning-forest} after each update operation.

In the rest of this section, we provide an overview of the algorithm.
The \underline{underlined parts} of the text refer to primitive
data structure operations supported by Lemma~\ref{lemma-main},
presented in Section~\ref{section:the-backbone-of-the-data-structure}.

\paragraph{The Data Structure.}
The hierarchy \H{} naturally defines a rooted forest (not to be confused with the maximum spanning forest), which is called the \emph{hierarchy forest}, and contains several \emph{hierarchy trees}. We abuse notation and say that $\H$ refers to this hierarchy forest, together with several auxiliary data structures supporting operations on the hierarchy forest.
The nodes in $\H$ are the $i$-components for all $i\in [1,\dmax]$. The roots of the hierarchy trees are nodes in $\hat V_0$, representing 1-components.
The set of nodes at depth $i$ in $\H$ is exactly $\hat V_i$. The set of children of a node $v^i \in \hat V_i$ is
$\{u^{i+1}\in \hat V_{i+1}\ |\ u^i=v^i\}$.
The leaves are nodes in $\hat V_{\dmax} = V$. See Figure~\ref{fig:hierarchy-example} for an example.
The nodes in $\H$ are called \emph{$\H$-nodes}, and the roots are called \emph{$\H$-roots}.

\begin{figure}[ht]
\centering
\includegraphics[width=\textwidth]{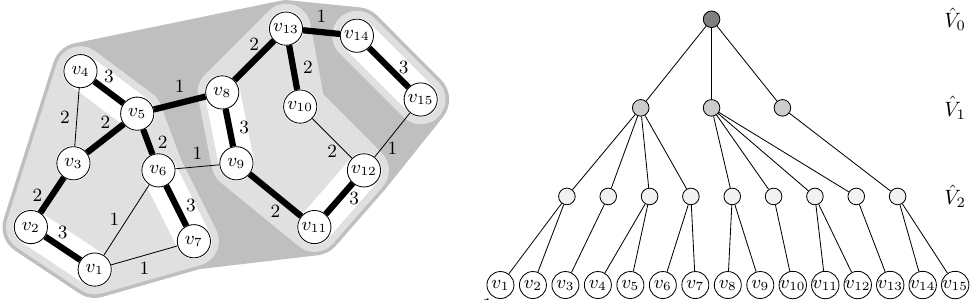}
\caption{An illustration of a graph and the corresponding hierarchy forest $\H$, where $n=15$ and $\dmax = 3$. All thick edges are witness edges and the thin edges are non-witness edges.
Components in $\hat V_1$ have size at most $\floor{\fr{15}{2}} =7$; those in $\hat V_2$ have size at most $\floor{\fr{15}{4}}=3$.}\label{fig:hierarchy-example}
\end{figure}

\subsection{Insertion}\label{section:insertion}

To execute an $\Insert(u, v)$ operation, where $e=\{u, v\}$, the data structure first sets $d_e=1$. If $e$ connects two distinct components in $G$ (which is verified by a connectivity query on $\F$), then the data structure \ul{accesses two $\H$-roots $u^0$ and $v^0$}, \ul{merges $u^0$ and $v^0$} and \ul{$e$ is inserted into $\H$
 (and $\F$) as a $1$-witness edge}.
Otherwise, \ul{$e$ is inserted into $\H$ as a $1$-non-witness edge}.

\subsection{Deletion}\label{section:deletion}
To execute a $\Delete(u, v)$ operation, where $e=\{u, v\}$, the data structure first \ul{removes $e$ from $\H$}.
Let $i=d_e$.
If $e$ is an $i$-non-witness edge, then the deletion process is done.
If $e$ is an $i$-witness edge, the deletion of $e$ could split an $i$-component.
In this case, the deletion algorithm first focuses on finding a replacement edge that has depth $i=d_e$.
In Section~\ref{sec:iteration} we extend our algorithm to find a replacement edge of any depth, while preserving the Maximum Spanning Forest Property of Invariant~\ref{invariant:spanning-forest}.

Prior to the deletion, the edge $\{u^{i}, v^{i}\}$ connected two $(i+1)$-components, $u^{i}$ and $v^{i}$, which, possibly together with some additional $i$-witness edges and $(i+1)$-components, formed a single $i$-component $u^{i-1}=v^{i-1}$ in $\hat G_i$.
If no $i$-non-witness replacement edge exists, then deleting $\{u, v\}$ splits $u^{i-1}$ into two $i$-components.
In order to establish if this is the case, the data structure first \ul{accesses $u^i$, $v^i$ and $u^{i-1}$ in $\H$} and implicitly splits the $i$-component $u^{i-1}$ into two connected components $c_u$ and $c_v$ in $\hat F_i=(\hat V_i, \{\{u^i, v^i\}\ |\ \{u, v\}\in F_i\})$
where $u^i\in c_u$ and $v^i\in c_v$; see Figure~\ref{figure:two-components}(a).
The rest of the deletion process focuses on finding a replacement edge to reconnect $c_u$ and $c_v$ into one $i$-component. This process has two parts, explained in detail below: (1) establishing the two components $c_u$ and $c_v$, and (2) finding a replacement edge. Notice that $c_u$ and $c_v$ do not
necessarily correspond to $\H$-nodes.

\subsubsection{Establishing Two Components}
\label{section:establishing-two-components}

To establish the two components $c_u$ and $c_v$ created by the deletion of $e$, the data structure executes in parallel two graph searches on 
$\hat F_i - \{\{u^i, v^i\}\}$ starting from $u^i$ and $v^i$.
To implement a search, we mark $u^i$ unexplored and insert it into a queue.
We repeatedly remove any unexplored 
$(i+1)$-component $x$ from the queue, mark it explored, and
\ul{enumerate all $i$-witness edges with one endpoint in $x$}.
All new $(i+1)$-components touching these edges are marked unexplored and inserted into the queue.
The two searches are carried out in parallel until one of the 
connected components is fully scanned.
By fully scanning one component, 
the weights of both components are determined, 
since $w(u^{i-1}) = w(c_u) + w(c_v)$.
Without loss of generality, assume that $w(c_u) \leq w(c_v)$, and so by Invariant~\ref{invariant:spanning-forest}, $w(c_u) \leq  w(u^{i-1})/2 \le n/2^{i}$.

\paragraph{Witness Edge Promotions.}
The data structure \ul{promotes all $i$-witness edges touching nodes in $c_u$} and \ul{merges all $(i+1)$-components contained in $c_u$ into one $(i+1)$-component} with weight $w(c_u)$. This is permitted by Invariant~\ref{invariant:weight-property}, since
$w(c_u) \le n/2^{i}$.
The merged $(i+1)$-component has the node $u^{i-1}$ as its parent in $\H$. See Figure~\ref{figure:two-components}$.b$.

\begin{figure}[ht]
\centering
\begin{subfigure}[t]{0.32\linewidth}
\includegraphics[width=\linewidth]{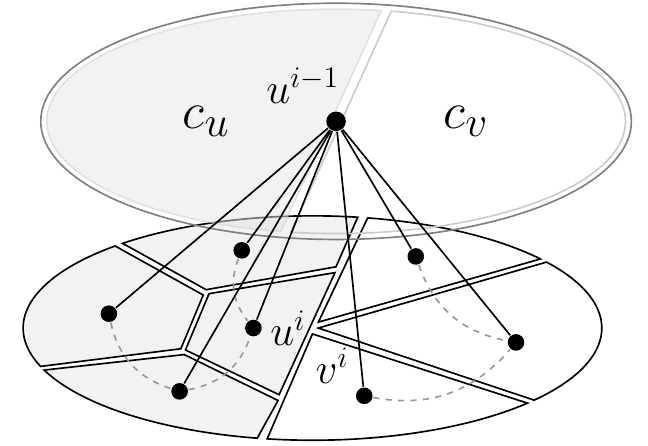}%
\caption{}
\end{subfigure}%
\begin{subfigure}[t]{0.32\linewidth}
\includegraphics[width=\linewidth]{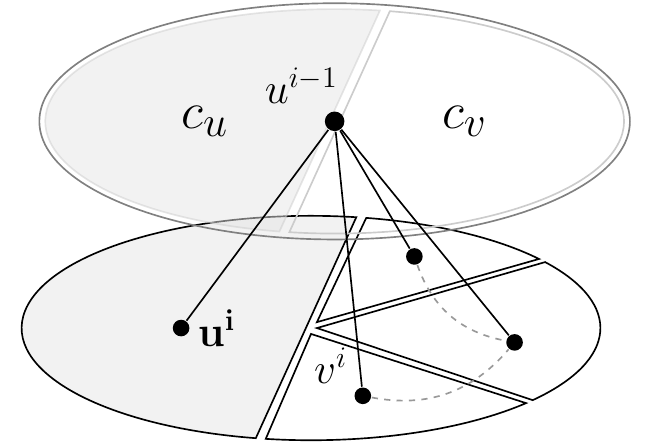}%
\caption{}
\end{subfigure}%
\begin{subfigure}[t]{0.32\linewidth}
\includegraphics[width=\linewidth]{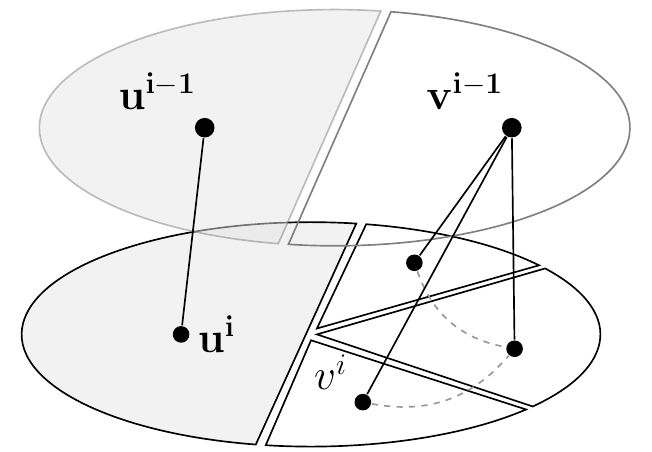}%
\caption{}  
\end{subfigure}
\caption{Illustration of the hierarchy of components at depth $i-1$ and $i$: (a) After identifying two components $c_u$ and $c_v$, it turns out that $c_u$ has smaller weight although it has more $(i+1)$-components. (b) After merging all $(i+1)$-components in the smaller weight component. (c) If no replacement edge is found, then $c_u$ and $c_v$ are two actual connected components in $\hat G_i$ and hence $u^{i-1}$ is split.}\label{figure:two-components}
\end{figure}

To differentiate between versions of components before and after the merges, we use
bold fonts to refer to components after the merges take place.
Thus, the $(i+1)$-component contracted from all $(i+1)$-components inside $c_u$ is denoted $\bf u^i$.
Similarly, the graph $\hat G_i$ after merging some of its nodes is denoted by $\bf{\hat G_i}$.

Having contracted the $(i+1)$-components inside $c_u$ into $\bf u^i$, we now turn our attention to identifying whether the deletion of $e$ disconnects $\bf u^i$ from $c_v$ in $\bf \hat G_i$.

\subsubsection{Finding a Replacement Edge}
\label{section:finding-a-replacement-edge}
Notice that by definition of $\bf \hat G_i$ and $u^{i-1}$, a depth-$i$ edge is a replacement edge in $E$ if and only if it is an $i$-non-witness edge with exactly one endpoint $x\in V$ such that $x^i = \bf u^i$.
To find a replacement edge, the data structure executes one or both of two auxiliary procedures:
the \emph{sampling procedure} and the \emph{enumeration procedure}.

\paragraph{Intuition.} Consider the following two situations.  In Situation A at least a constant fraction of the
$i$-non-witness edges touching $\bf u^i$ have exactly one endpoint touching $\bf u^i$, and are therefore eligible replacement
edges. In Situation B a small $\epsilon$ fraction (maybe zero) of these edges have exactly one endpoint in $\bf u^i$.
If the algorithm magically \emph{knew} which situation the algorithm is in and could sample $i$-non-witness endpoints uniformly at random then
the problem is straightforward to solve:  In Situation A the algorithm iteratively samples an $i$-non-witness endpoint and tests whether the other endpoint
is in $\bf u^i$.  As we will see, each test costs $O(\log n(\log\log n))$ time.
The expected number of samples used to find a replacement edge in this situation is $O(1)$
and so the time cost is charged (in an amortized sense) to the deletion operation.
In Situation B the algorithm enumerates and marks every $i$-non-witness endpoint touching $\bf u^i$.  Any edge with one mark is a
replacement edge and any with two marks may be promoted to depth $i+1$ without violating Invariant~\ref{invariant:weight-property}.
Since a constant fraction of the edges will end up being promoted,
the amortized cost of the enumeration procedure is zero, so long as the enumeration and promotion cost is $O((\log\log n)^2)$ per endpoint.

There are two technical difficulties with implementing this idea.  First, the set of $i$-non-witness edges incident to $\bf u^i$
is a dynamically changing set, and supporting fast (almost-)uniformly random sampling on this set is a very tricky problem.
Second, the algorithm does not know in advance whether the current situation is Situation A or Situation B.  Notice that it is insufficient to draw $O(1)$ random samples
and, if no replacement edge is found, to deduce that the algorithm is in Situation B.
Since the cost of enumeration is so high,
the algorithm cannot afford to \emph{mistakenly} think that it is in Situation B.

\paragraph{Primary and Secondary Endpoints.}
The difficulty with supporting random sampling is dynamic updates:
when $i$-non-witness edges are inserted and deleted from the pool due to promotions, the algorithm responds to each such insertion/deletion with updating $\Omega(\log n)$ parts of the data structure 
that enables fast random sampling.
Thus, the cost of updating each part needs to be 
relatively low in order to obtain the desired time bounds.

Our solution is to maintain two endpoint types for $i$-non-witness edges: \emph{primary} and \emph{secondary}.
A newly promoted $i$-non-witness edge has two \emph{$i$-secondary endpoints} and when an $i$-secondary endpoint $\ang{u, e}$ is enumerated (see the enumeration procedure below), the data structure \emph{upgrades} $\ang{u, e}$ to an \emph{$i$-primary endpoint}.
The motivation for using two types of endpoints is that the algorithm never samples from the set of $i$-secondary endpoints, which are only subject to individual insertions, 
but only the set of $i$-primary endpoints, which are subject to \emph{bulk} inserts/deletes.
The bulk updates to $i$-primary endpoints are sufficiently large (in an amortized sense) to pay for the changes made to the part 
of the data structure that supports random sampling.

Notice that each edge undergoes up to $\dmax$ promotions and up to $2\dmax$ endpoint upgrades.
Since our goal is to obtain an $O(\log n(\log\log n)^2)$ amortized insertion cost, we are able to charge each promotion or 
upgrade $O((\log\log n)^2)$ units of time.

\paragraph{The Sampling Procedure.}
This is the only procedure in our algorithm that uses randomness.
The sampling procedure can be viewed as a two-stage version of Henzinger and Thorup~\cite{Henzinger97}, with some complications due to primary and secondary types.
We give a simple sampling procedure that either provides a replacement 
edge or states that, with high enough probability,
the fraction of $i$-primary endpoints touching $u^i$ that belong to replacement edges is small.

Let $p$ be the number of $i$-primary endpoints touching $\bf u^i$.
The data structure first \ul{estimates $p$ up to a constant factor} and then invokes the \ul{batch sampling test}, which $(1+o(1))$-uniformly samples $O(\log\log p)$ $i$-primary endpoints touching $\bf u^i$.
If an endpoint of a replacement edge is sampled, then the sampling procedure is terminated, returning one of the replacement edges.
Otherwise, the data structure invokes the second \ul{batch sampling test}, which $(1+o(1))$-uniformly samples
$O(\log p)$ $i$-primary endpoints touching $\bf u^i$. The purpose of this step is {\em not} to find a replacement edge,
but to \emph{increase our confidence} that there are actually few replacement edges.
(Since otherwise it is hard to obtain good amortized cost.)
If more than half of these endpoints belong to replacement edges, then the sampling procedure is terminated and one of the replacement edges is returned. Otherwise, the algorithm concludes that the fraction of the non-replacement edges touching $\bf u^i$ is at least a constant,
and invokes the \emph{enumeration procedure}.



\paragraph{The Enumeration Procedure.}
The data structure first \ul{upgrades all $i$-secondary endpoints touching $\bf u^i$ to $i$-primary endpoints},
then 
\ul{enumerates all $i$-primary endpoints touching $\bf u^i$}
and establishes for each such edge the number of its endpoints touching
$\bf u^i$ (either one or both).
An edge is a replacement edge if and only if exactly one of its endpoints is enumerated.
Each non-replacement edge encountered by the enumeration procedure has both endpoints in an $(i+1)$-component, namely $\bf u^i$,
and can therefore be \ul{promoted to a depth $(i+1)$-non-witness edge} (making both endpoints secondary), without violating Invariant~\ref{invariant:spanning-forest}.
As part of the promote and upgrade operations, the algorithm completely rebuilds the part of the data structure supporting random sampling on the $i$-primary endpoints touching $\bf u^i$.

Since the enumeration procedure is only invoked when the algorithm concludes that (before the enumeration process) the fraction of the non-replacement edges touching $\bf u^i$ is at least a constant, the cost of rebuilding the data structure component supporting random sampling is charged to promoting the (sufficiently large number of) non-replacement edges.

\subsubsection{Iteration and Conclusion}\label{sec:iteration}
By the Maximum Spanning Forest Property of Invariant~\ref{invariant:spanning-forest}, the deletion of an edge $e$ can only be replaced by edges of depth $d_e$ or less.
The algorithm always first looks for a replacement edge at the same depth as the deleted edge.
If the algorithm does not find a replacement edge at depth $d_e$ then the algorithm conceptually \emph{demotes} $e$ by setting $d_e\leftarrow d_e - 1$ in order to preserve the Maximum Spanning Forest Property of Invariant~\ref{invariant:spanning-forest},
and continues looking for a replacement edge at the new depth $d_e$.
The demotion is merely conceptual; the deletion algorithm does not actually update $d_e$ in the course of deleting $e$.

\paragraph{Implementation.}
If a depth-$i$ replacement edge $e'$ exists, then $u^{i-1}$ is still an $i$-component and the algorithm \ul{converts $e'$ from an $i$-non-witness edge to an $i$-witness edge}.
Otherwise, $c_u$ and $c_v$ form two distinct $i$-components in $\bf \hat G_i$. In this case,
the data structure \ul{splits $u^{i-1}$ into two sibling nodes} (or two $\H$-roots, if $i=1$): a new node $\bf u^{i-1}$ representing $c_u$ whose only child is $\bf u^i$, and $\bf v^{i-1}$ representing
$c_v$ whose children are the rest of the $(i+1)$-components in $c_v$.
If $i=1$ then the algorithm is done.  
Otherwise, the algorithm sets $i\leftarrow i-1$, conceptually demoting $e$, and
repeats the procedure as if $e$ were deleted at depth $i-1$.

\begin{figure}[ht]
\centering
\includegraphics[width=\linewidth]{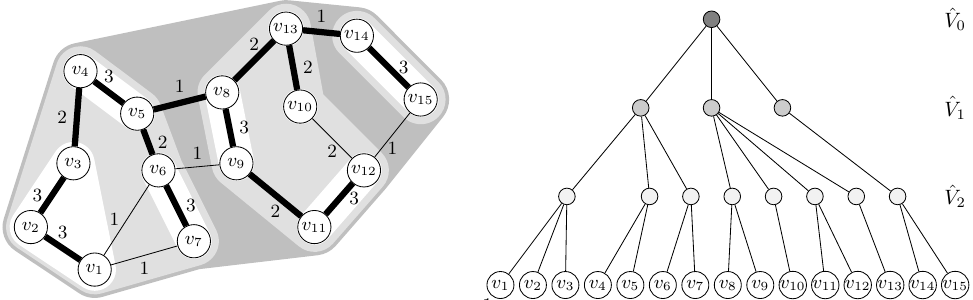}
\caption{After deletion of $\{v_3, v_5\}$ (See Figure \ref{fig:hierarchy-example}.) By identifying $\{v_1, v_2, v_3\}$ to be the smaller weight component, the witness edge $\{v_2, v_3\}$ is promoted to depth 3 and the corresponding nodes in $\hat V_2$ are merged. The edge $\{v_3, v_4\}$ is the replacement edge.}
\end{figure}

\begin{cfigure}[ht]
\centering
\includegraphics[width=\linewidth]{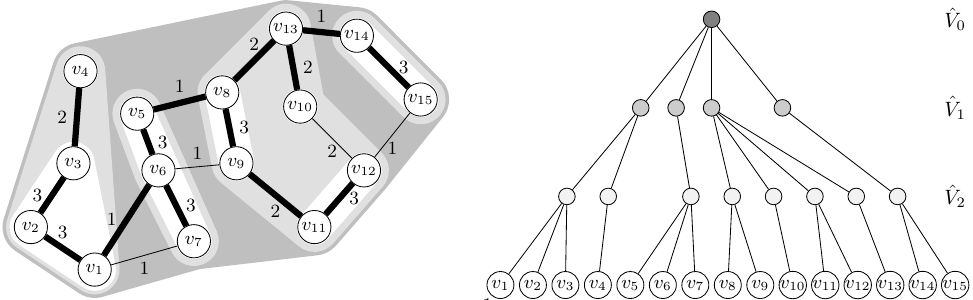}
\caption{After deletion of $\{v_4, v_5\}$ we do the following steps.  (1) Split the node in $\hat V_2$ associated with $v_4$ and $v_5$. (2) Identify that $\{v_5, v_6, v_7\}$ is the smaller weight component. (3) Promote the edge $\{v_5,v_6\}$ to depth 3, merging nodes $v_5^2$ and $v_6^2=v_7^2$.
(4) Fail to find a replacement edge at depth 2, and split the node $v_5^1$.
(5) Find a replacement edge $\{v_1, v_6\}$ at depth 1 and designate it a witness edge.}
\end{cfigure}

\subsection{The Backbone of the Data Structure}\label{section:the-backbone-of-the-data-structure}

Lemma~\ref{lemma-main} summarizes the primitive operations required to execute \Insert\ or \Delete.
Recall that the possible depths are integers in $[1, \dmax]$, and the possible endpoint types are $\const{witness}$ (for endpoints of an $i$-witness edge for some $i$), $\const{primary}$, and $\const{secondary}$.

\begin{lemma}\label{lemma-main}
There exists a data structure
that supports the following operations on \H{} with the following
amortized time complexities (given in parentheses).

\begin{enumerate}[(1)]\compactify
\item{}\label{opr:add-1-non-witness}\label{opr:add-1-witness}\label{opr:remove-i-non-witness}\label{opr:remove-i-witness}%
Add or remove an edge with a given edge depth and endpoint type
$\left(O(\log n (\log\log n)^2)\right)$.

\item{}\label{opr:promote-i-non-witness}\label{opr:promote-i-witness}\label{opr:merge-two-siblings}\label{opr:merge-two-roots}
Given a set $S$ of sibling $\H$-nodes or $\H$-roots, merge them
into a single node ${\bf{u^i}}$, and then promote all $i$-witness edges touching $\bf{u^i}$ to
$(i+1)$-witness edges
$\left(-\Omega((|S|-1)(\log\log n)^2)\right)$.

\item{}\label{opr:convert-i-secondary}Given an $\H$-node $v^i\in \hat V_i$, upgrade all $i$-secondary endpoints touching
$v^i$ to $i$-primary endpoints
$(-\Omega((s-p) (\log\log n)^2)$,
where $p$ and $s$ denote the number of $i$-primary endpoints and $i$-secondary endpoints
touching $v^i$ prior to the upgrade$)$.

\item{}\label{opr:promote-i-primary}%
Given an $\H$-node $v^i\in \hat V_i$ and a subset $S$ of $i$-primary endpoints touching $v^i$, 
promote the endpoints in $S$ to $(i+1)$-secondary endpoints
$(-\Omega((12|S| - p)(\log\log n)^2)$,
where $p$ is the total number of $i$-primary endpoints touching $v^i)$.

\item{}\label{opr:convert-i-non-witness} Convert a given $i$-non-witness edge into an $i$-witness edge
$\left(O(\log n (\log\log n)^2)\right)$.

\item{}\label{opr:split-node} Given two $\H$-nodes $u^{i-1}$ and $u^i$ where $u^i$ is an $\H$-child of $u^{i-1}$,
split $u^{i-1}$ into two sibling $\H$-nodes: one takes $u^i$ as a single $\H$-child and the other
takes the rest of $u^{i-1}$'s former $\H$-children as its $\H$-children
$\left(O((\log\log n)^2)\right)$.

\item{}\label{opr:all-i-witness}\label{opr:all-i-non-witness} Given an $\H$-node $v^i\in \hat V_i$ and a given endpoint type, 
enumerate all endpoints $\Endpoint{u}{e}$ of this type such that 
$d_e=i$ and $u^i=v^i$
$(${}$O(k \log\log n+1)$, where $k$ is the number of enumerated endpoints$)$.

\item{}\label{opr:find-i-component} Given $v^i$, return its $\H$-parent $v^{i-1}$
$\left(O\left(\log\log n + \log\paren{\f{w(v^{i-1})}{w(v^i)}}\right)\right)$.

\item{}\label{opr:counter}Given an $\H$-node $v^i\in \hat V_i$, return a $(1+o(1))$-approximation to the number of $i$-primary endpoints touching $v^i$ $\left(O(1)\right)$.

\item{} (Batch Sampling Test)
\label{opr:sample}\label{opr:check-replacement-edge}
Given an $\H$-node $v^i\in\hat V_i$ and an integer $k$, independently sample $k$ $i$-primary endpoints touching $v^i$
$(1+o(1))$-uniformly at random, and establish for each sampled endpoint whether the other endpoint also touches $v^i$
$($see Remark~\ref{rmrk:batch-sample}$)$.
\end{enumerate}
\end{lemma}

\begin{remark}\label{rmrk:batch-sample}
It should be noted that the time bounds of 
Lemma~\ref{lemma-main} only apply if the operations 
are used to correctly maintain Invariant~\ref{invariant:spanning-forest}.
For example, if we use Operation (5) to create a 
new $i$-witness edge but the set $\mathcal{F}$ (the set of witness edges) 
now contains a cycle, then all bets are off.
Moreover, the worst case cost of the Batch Sampling Test 
operation 
is {}$O(\min(k\log n\log\log n, k + (p+s)\log \log n))$ time,
where $p$ and $s$ are the number of $i$-primary and $i$-secondary endpoints touching $v^i$, respectively.
We analyze the amortized cost of this operation 
only when it is used to find replacement edges and 
maintain Invariant~\ref{invariant:spanning-forest}; 
see Section~\ref{section:the-batch-sampling-test}.
\end{remark}

\begin{remark}\label{rmrk:negative-amortized-cost}
The amortized costs of these operations are with respect
to a potential function (Section~\ref{section:amortized-analysis-of-shortcut-maintenance}).
The most important part of the potential function is that
every upgrade and promotion releases $O((\log\log n)^2)$ units of potential.  Observe that Operations (2,3,4) can have \emph{negative} amortized cost.  Negative amortized costs are not contradictory, 
and they are in fact helpful for paying for the (positive) costs
of operations that occur in conjunction with Operations (2,3,4); 
see Section~\ref{section:proof-of-main-result}.
\end{remark}

The proof of Theorem~\ref{main-result} uses Lemma~\ref{lemma-main} to restore Invariant~\ref{invariant:weight-property}.
The proof itself is mostly a technical recapitulation of the algorithm described
in Section~\ref{section:overview-of-the-data-structure};
for the sake of completeness we provide a full proof in
Section~\ref{section:proof-of-main-result}.

\section{The Main Modules of the Data Structure}\label{section:main-modules}

To support Lemma~\ref{lemma-main}, the data structure utilizes five main modules, some of which depend on each other:
(1) the $\H$-leaf data structure,
(2) local trees,
(3) the notion of an induced $(i,t)$-forest,
(4) shortcut infrastructure, and
(5) approximate counters.
The $\H$-leaf data structure is fairly straightforward and is described in detail in Section~\ref{section:overview:leaf-data-structure}.
We define induced $(i,t)$-forests in Section~\ref{section:the-induced-it-forest}.
A brief overview of the other modules is described in Sections~\ref{section:overview:local-trees},~\ref{section:overview:shortcut-infrastructure}, and~\ref{section:overview:approximate-counters}.
Sections~\ref{section:shortcut-infrastructure-on-H}--\ref{section:amortized-analysis-of-shortcut-maintenance}
provide a detailed explanation of each module.
The general operations involving multiple modules, as well as the proof of Lemma~\ref{lemma-main} are described and analyzed in detail in Section~\ref{section:main-oerations}.

\subsection{The \texorpdfstring{$\H$}{H}-Leaf Data Structure}\label{section:overview:leaf-data-structure}

The $\H$-leaf data structure supports several operations that act on an individual vertex.
Let $v$ be a vertex (an $\H$-leaf), $i\in [1,\dmax]$ be a depth, and $t\in \{\const{witness}, \const{primary},\const{secondary}\}$ be an endpoint type.
The $\H$-leaf data structure supports insertion or deletion of an endpoint (of an edge incident to $v$) with depth $i$ and type $t$.
Moreover, the $\H$-leaf data structure 
supports enumeration of all endpoints 
incident to $v$ with depth $i$ and type $t$,
and selecting one such endpoint uniformly at random.

Supporting these operations in $O(1)$ amortized time (plus time linear in the output) is straightforward.  Simply
pack the endpoints with depth and type $(i,t)$ in a dynamic array.  Dynamic arrays can be implemented deterministically to support incrementing/decrementing the length of the array in $O(1)$ amortized time.

\subsection{The Local Trees}\label{section:overview:local-trees}

A \emph{local tree} is a specially constructed binary tree,
whose root is associated with an $\H$-node $v$ and whose
leaves are associated with the $\H$-children of $v$.
By composing the local trees with $\H$, we can view the result as a single
binary tree of height at most $O(\log n \log \log n)$.
The purpose of this binarization is to provide an efficient 
infrastructure for supporting navigation within $\H$.
The local tree operations are detailed in 
Section~\ref{section:local-trees} and summarized in 
Lemma~\ref{lemma:local-tree-operations}.

\subsection{The Induced \texorpdfstring{$(i,t)$}{(i,t)}-Forest}
\label{section:the-induced-it-forest}

The purpose of the $(i,t)$-forests is to support efficient enumeration of all endpoints of a given type that touch a given $\H$-node.
For a given edge depth $i\in [1, \dmax]$ and endpoint type $t\in \{\const{witness}, \const{primary}, \const{secondary}\}$, an $\H$-leaf $v$ is an \emph{$(i, t)$-leaf} if $v$ has an incident endpoint with depth $i$ and type $t$.
An $\H$-node $v^i\in \hat V_i$ having an $(i, t)$-leaf in its subtree is an \emph{$(i, t)$-root}.
For each $(i, t)$ pair, consider the induced forest $\mathfrak{F}$ on $\H$ by taking the union of the paths from each $(i, t)$-leaf to the corresponding $(i, t)$-root.
An $\H$-node $v$ in $\mathfrak{F}$ is an \emph{$(i, t)$-node} if either
\begin{itemize}\compactify
\item  $v$ is an $(i, t)$-leaf,
\item $v$ is an $(i, t)$-root,
\item $v$ has more than one child in $\mathfrak{F}$ (so $v$ is called an \emph{$(i, t)$-branching node}), or
\item $v$ is an $\H$-child of an $(i, t)$-branching node but $v$ has only one $\H$-child in $\mathfrak{F}$. In this case we call $v$ a \emph{single-child $(i, t)$-node}.
\end{itemize}
Notice that an $(i, t)$-root may or may not be an $(i, t)$-branching node.

For each $(i, t)$-node other than an $(i, t)$-root, define its \emph{$(i, t)$-parent} to be the nearest ancestor in $\mathfrak{F}$ that is also an $(i, t)$-node.
An \emph{$(i, t)$-child} is defined accordingly.
The $(i, t)$-parent/child relation implicitly defines an \emph{$(i,t)$-forest}, which consists of \emph{$(i,t)$-trees} rooted at $\hat V_i$ nodes.
An $\H$-node $v$ has \emph{$(i, t)$-status} if $v$ is an $(i, t)$-node.

\paragraph{Storing $(i,t)$-status.} Each node in $v\in \H$ stores two bitmaps
of size $3\dmax = O(\log n)$ each.
The first indicates for each $(i,t)$ pair
whether $v$ is an $(i,t)$-node, and if so, the second
indicates whether $v$ is an $(i,t)$-branching node or not.

\paragraph{Operations on $(i,t)$-forests.}
A conceptual edge between an
$(i, t)$-node and its $(i,t)$-parent or $(i,t)$-child need not be maintained explicitly.
The two components of our data structure that simulate these edges are the
\emph{shortcut infrastructure} and the \emph{local trees}.
In particular, the shortcut infrastructure supports efficient traversal from a single-child $(i,t)$-node to its unique $(i,t)$-child,
while the local trees support efficient enumeration of all the
$(i,t)$-children of an $(i,t)$-branching node.
Since the implementation of traversal and navigation operations on $(i, t)$-forests utilizes local trees which are introduced and defined in Section~\ref{section:local-trees}, we defer the discussion of
$(i, t)$-forests and their detailed implementation to
Section~\ref{section:proof-lemma-induced-forests} (see Lemma~\ref{lemma-induced-forests}).

\subsection{The Shortcut Infrastructure}\label{section:overview:shortcut-infrastructure}

The purpose of shortcuts is to facilitate a faster traversal from a
single-child $(i,t)$-node to its only $\H$-child.
This traversal costs amortized $O(\log \log n)$ time.
The details and construction of shortcuts are described in Section~\ref{section:shortcut-infrastructure-on-H}.

\subsection{Approximate Counters}\label{section:overview:approximate-counters}

Implementing the sampling operation in Lemma~\ref{lemma-main} reduces to being able to traverse from an $(i, \const{primary})$-branching node to one of its $(i, \const{primary})$-children $v$, where the choice of an $(i,\const{primary})$-child is random with probability that is approximately proportional to the number of $i$-primary endpoints touching $v$.
The implementation of the random choice is supported by maintaining an approximate $i$-counter at each $(i, \const{primary})$-node.
Notice that an $\H$-node could be an $(i, \const{primary})$-node for
several $i$, so there could be several approximate $i$-counters maintained in an $\H$-node.
The advantages of using approximate $i$-counters, as opposed to precise counters, are two-fold. 
First, each approximate $i$-counter uses only $O(\log \log n)$ bits, and so $O(\log n / \log \log n)$ approximate $i$-counters can be packed into a single machine word and be collectively manipulated in $O(1)$ time.  
Second, approximate counters can only take on $(\log n)^{O(1)}$ values, and hence a decrement-only counter can only generate $(\log n)^{O(1)}$ total work throughout its lifetime.
The maintenance of approximate $i$-counters and the sampling algorithm 
are explained in Section~\ref{section:approximate-counters}
and Section~\ref{sect:precision-sampling}.

\ignore{
The precision of an approximate $i$-counter is relative to the depth and weight of the node in which the approximate $i$-counter is stored.
The value at an approximate $i$-counter at an $(i,t)$-node $u$ is computed via a special addition-like binary operation executed on the approximate $i$-counters of the $(i,t)$-children of $u$.
Using the special addition operation introduces a loss in precision; the total loss in precision depends on the height of the arithmetic \emph{formula tree} implicitly formed in the $(i, \const{primary})$-trees,
which are binarized via the local trees.
We formalize this notion of height in Section~\ref{section:approximate-counters} and show
the maximum height is $O(\log n\log\log n)$.
The value stored at each approximate $i$-counter is guaranteed to be a $(1+o(1))$-approximation of the actual number of $i$-primary endpoints touching $v$, which is accurate enough
for the sampling operation, as shown in Section~\ref{section:cost-analysis-sampling-procedure}.
We emphasize that approximate $i$-counters are only stored for $i$-primary endpoints, \emph{not} $i$-secondary endpoints.

The notion of height that is used to prove the approximation guarantee depends on the details of local trees, and so the description of how approximate $i$-counters are updated through changes to $\H$ is deferred to Section~\ref{section:proof-of-lemma-counters} (see Lemma~\ref{lemma-counters}).
}

\section{Shortcut Infrastructure}\label{section:shortcut-infrastructure-on-H}

As described in Section~\ref{section:overview:shortcut-infrastructure}, 
the purpose of shortcuts is to allow for efficient navigation between 
a single-child $(i, t)$-node $u$ and its only $(i,t)$-child $v$. If the 
graph is static, a direct pointer between $u$ and $v$ could be stored 
in the data structure so that $v$ can be directly accessed from $u$. 
The challenge is to maintain 
useful shortcuts in the midst of 
structural updates to $\H$.

\paragraph{$\H$-shortcuts.}
An $\H$-shortcut $\scut uv$ is a data structure connecting an ancestor $u$ to a descendant $v$ in $\H$.
$\H$-shortcuts are only stored between a subset of eligible pairs of ancestor-descendant pairs.  The eligible pairs are determined as follows.
For a positive integer $\ell$, define its \emph{least significant bit index}, denoted by $\LSBIndex(\ell)$, to be the minimum integer $b$ such that $2^b$ divides $\ell$ but $2^{b+1}$ does not.
For an $\H$-node $u$, let $\mathrm{depth}_{\H}(u)$ be the distance from $u$ to the root of the tree in $\mathcal{H}$ that contains $u$.
The \emph{power} of a pair of nodes $u$ and $v$ is defined as
\[
\P(u, v) = \min(\LSBIndex(\mathrm{depth}_{\H}(u) + 1), \LSBIndex(\mathrm{depth}_{\H}(v) + 1)).\footnote{The ``+1'' is included because $\LSBIndex$ is not well defined at zero.}
\]
In order for an \emph{$\H$-shortcut} to exist between $u$ and $v$, any intermediate node $x$ on the path from $u$ to $v$
must have $\LSBIndex(\mathrm{depth}_{\H}(x) + 1) < \P(u,v)$.
If $v$ is the $\H$-child of $u$, then $\P(u,v)=0$ and $\scut uv$
is an eligible $\H$-shortcut, which is called a 
\emph{fundamental $\H$-shortcut}.

The following lemma states that the set of $\H$-shortcuts on an ancestor-descendant path do not cross each other.

\begin{lemma}\label{lemma:covering-prop} For any four distinct $\H$-nodes $x_1, x_2, x_3, x_4$ along a root-to-leaf path in $\H$, it is impossible to have two $\H$-shortcuts $\scut {x_1}{x_3}$ and $\scut {x_2}{x_4}$.
\end{lemma}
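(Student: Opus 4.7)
The plan is a direct calculation from the LSB-based existence condition of shortcuts. Without loss of generality, order the four nodes so that $x_1$ is the ancestor closest to the root and $x_4$ is the closest to the leaf, with $x_2$ strictly between $x_1$ and $x_3$, and $x_3$ strictly between $x_2$ and $x_4$ on the ancestor-descendant path.

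First, I would apply the existence condition to the shortcut $\scut{x_1}{x_3}$. Since $x_2$ lies strictly on the internal $\H$-path from $x_1$ to $x_3$, the condition forces
\[
\LSBIndex(\mathrm{depth}_\H(x_2)+1) \;<\; \P(x_1,x_3) \;\le\; \LSBIndex(\mathrm{depth}_\H(x_3)+1).
\]
Next, I would apply the same condition to $\scut{x_2}{x_4}$. Since $x_3$ lies strictly on the internal $\H$-path from $x_2$ to $x_4$, the condition forces
\[
\LSBIndex(\mathrm{depth}_\H(x_3)+1) \;<\; \P(x_2,x_4) \;\le\; \LSBIndex(\mathrm{depth}_\H(x_2)+1).
\]

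Chaining the two inequalities yields
\[
\LSBIndex(\mathrm{depth}_\H(x_2)+1) \;<\; \LSBIndex(\mathrm{depth}_\H(x_3)+1) \;<\; \LSBIndex(\mathrm{depth}_\H(x_2)+1),
\]
which is a contradiction. Hence the two shortcuts $\scut{x_1}{x_3}$ and $\scut{x_2}{x_4}$ cannot coexist.

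There is no real obstacle here; the only subtlety is the interpretation of the phrase ``any node $x$ on the path from $u$ to $v$'' in the definition of an $\H$-shortcut, which must be read as referring to the nodes \emph{strictly between} $u$ and $v$ (otherwise the definition would be vacuous, since $\P(u,v)$ is the minimum of the two endpoint LSB indices). Once that reading is fixed, the proof is a one-line application of the definition to each of the two shortcuts, with the key observation being that the roles of $x_2$ and $x_3$ as internal witnesses force incompatible orderings of their LSB indices.
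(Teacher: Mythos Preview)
Your proposal is correct and is essentially identical to the paper's own proof: both arguments observe that $x_2$ being internal to $\scut{x_1}{x_3}$ forces $\LSBIndex(\mathrm{depth}_\H(x_2)+1) < \LSBIndex(\mathrm{depth}_\H(x_3)+1)$, while $x_3$ being internal to $\scut{x_2}{x_4}$ forces the reverse inequality, a contradiction. Your version is slightly more explicit in routing the inequalities through $\P(\cdot,\cdot)$, and your remark that ``on the path'' must be read as \emph{strictly between} the endpoints is exactly the intended reading.
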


\begin{proof}
For $j\in \{1,2,3,4\}$ let $h_j = \mathrm{depth}_{\H}(x_j)+1$, so $h_1 < h_2 < h_3 < h_4$.
Assume the claim is false, and so there exist two $\H$-shortcuts $\scut {x_1}{x_3}$ and $\scut {x_2}{x_4}$.
By definition this implies  $\LSBIndex(h_2) < \LSBIndex(h_3)$ and $\LSBIndex(h_3) < \LSBIndex(h_2)$, a contradiction.
\end{proof}

\paragraph{The covering relationships of $\H$-shortcuts and the poset.}
We say that $\scut ab$ \emph{covers}
$\scut cd$ if $c$ and $d$ are on the path $P_{ab}$ from $a$ to $b$ in $\H$. 
Notice that a shortcut covers itself.
Define $\succeq$ to be the covering partial order:
\[
(\scut ab) \succeq (\scut cd) \mbox{ iff } \scut ab \mbox{ covers } \scut cd.
\]
For any $uv$-path $P_{uv}$ on $\H$, the \emph{maximal covering set} of $P_{uv}$, denoted by
$\ShortCuts^{\H}(u, v)$, is the set of maximal $\H$-shortcuts (with respect to $\succeq$)
among all $\H$-shortcuts having both endpoints on $P_{uv}$.
Figure~\ref{fig:ShortCuts} illustrates $\ShortCuts^{\H}(v_5,v_{14})$ in bold.

\begin{figure}[ht]
\centering
\includegraphics[width=\linewidth]{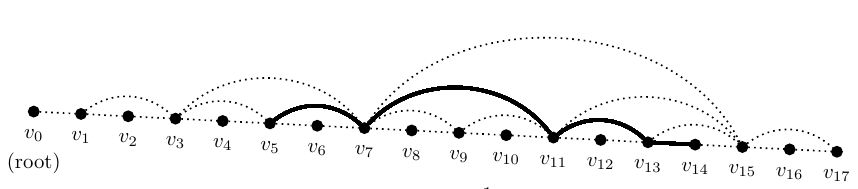}
\caption{\label{fig:ShortCuts}The figure above shows $\ShortCuts^{\H}(v_5, v_{14})$ as an example, where $v_i$ has $\mathrm{depth}_{\H}(v_i)=i$.
The dotted edges are the set of all possible shortcuts.}
\end{figure}

The following lemma bounds the size of $\ShortCuts^{\H}(u, v)$.

\begin{lemma}\label{lemma:largest-shortcuts} For any two nodes $u, v\in \H$ with $u$ an ancestor of $v$, all $\H$-shortcuts in $\ShortCuts^{\H}(u, v)$ form a path connecting $u$ and $v$, and
$|\ShortCuts^{\H}(u, v)|= O(\log \log n)$.
\end{lemma}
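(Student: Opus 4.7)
My plan is to parametrize $P_{uv}$ by the $\LSBIndex$ values along it, then (i) show that the maximal shortcuts chain end-to-end to cover $P_{uv}$, and (ii) bound the length of that chain by proving the endpoint $\LSBIndex$ values form a strictly unimodal sequence.

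First I would let $x_0 = u, x_1, \ldots, x_L = v$ be the nodes on $P_{uv}$ in depth order and set $p_j = \LSBIndex(\mathrm{depth}_{\H}(x_j)+1)$. By definition an $\H$-shortcut $\scut{x_i}{x_j}$ (with $i<j$) exists iff $p_k < \min(p_i,p_j)$ for every $i < k < j$; in particular every adjacent pair $(x_\ell, x_{\ell+1})$ admits a trivial (``fundamental'') shortcut with vacuous interior condition. Lemma~\ref{lemma:covering-prop} says no two shortcuts along $P_{uv}$ properly cross, so the full collection of shortcuts on $P_{uv}$ is laminar under $\succeq$. Any two distinct maximal shortcuts therefore cannot nest, so they are either disjoint or share exactly one endpoint; combined with the fact that every fundamental shortcut is dominated by some maximal one, the maximal shortcuts must tile $P_{uv}$ end-to-end. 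This yields a chain $\scut{u=y_0}{y_1}, \scut{y_1}{y_2}, \ldots, \scut{y_{k-1}}{y_k=v}$, proving the first half of the statement.

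Next I would bound $k$ via the sequence $q_j := p(y_j)$. For each interior $y_j$, maximality of the two adjacent shortcuts rules out the combined shortcut $\scut{y_{j-1}}{y_{j+1}}$; using the existing shortcut conditions on the two subintervals one can verify that the combined shortcut fails only at the single point $y_j$, forcing $q_j \geq \min(q_{j-1}, q_{j+1})$. So the sequence $q_0,\ldots,q_k$ has no strict interior local minimum. A short divisibility argument shows adjacent $q$-values are distinct: if $q_{j-1}=q_j=v$, then $\mathrm{depth}_{\H}(y_{j-1})+1$ and $\mathrm{depth}_{\H}(y_j)+1$ are both odd multiples of $2^v$, forcing some intermediate depth $h$ to be a multiple of $2^{v+1}$ and giving $\LSBIndex(h+1) \geq v+1$, contradicting the shortcut $\scut{y_{j-1}}{y_j}$. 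Together these two properties force the sequence to be strictly unimodal (strictly rising to a unique peak, then strictly falling): any descent-then-ascent pattern would introduce a strict local minimum. Since $\mathrm{depth}_{\H}(x_j) \in [0,\dmax] = [0,\lfloor \log n \rfloor]$, we have $q_j \in [0, O(\log\log n)]$, so a strictly unimodal sequence over this range has at most $2\cdot O(\log\log n)+1 = O(\log\log n)$ terms, giving $k = O(\log\log n)$.

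I expect the main obstacle to be the careful packaging of the second step: the implication ``no strict interior local min'' $+$ ``adjacent values distinct'' $\Rightarrow$ strict unimodality is not entirely immediate (one must rule out ``up-down-up'' patterns, which is exactly what strictness of every candidate local minimum forbids). The first step is more routine, being essentially a consequence of Lemma~\ref{lemma:covering-prop} together with the existence of fundamental shortcuts at every adjacent pair.
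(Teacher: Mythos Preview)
Your proposal is correct and lands on the same key idea as the paper: the chain of maximal shortcuts has endpoint-$\LSBIndex$ values (equivalently, shortcut powers) forming a unimodal sequence over a range of size $O(\log\log n)$. The packaging differs slightly. The paper first identifies the unique node $v^*$ on $P_{uv}$ with maximum $p$-value, observes that no shortcut can cross $v^*$, and then proves by induction on $p(v')$ that on each side of $v^*$ the powers are strictly monotone. You instead work directly with the chain $y_0,\dots,y_k$ and extract unimodality from two local constraints (no strict interior local minimum, adjacent $q$-values distinct). Your route is a bit more self-contained in that it avoids singling out $v^*$ and the inductive step; the paper's route makes the structure of the chain (a single ``ascent'' then a single ``descent'' pinned at $v^*$) more explicit, which is convenient later when reasoning about lazy covering. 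Both arguments are equally short and yield the same $2\lceil\log\log n\rceil$-type bound.
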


\begin{proof}
All $\H$-shortcuts on $P_{uv}$ form a poset, and all fundamental $\H$-shortcuts on $P_{uv}$ form the path between $u$ and $v$.
By Lemma~\ref{lemma:covering-prop}, $\ShortCuts^{\H}(u, v)$ forms a path connecting $u$ and $v$.

The $\H$-shortcuts in $\ShortCuts^{\H}(u, v)$ can be partitioned into two sequences: one with strictly increasing powers and one with strictly decreasing powers.
To see this, notice that for any sequence of consecutive integers, there is a unique largest $\LSBIndex$ value among the sequence.
For any $\H$-node let $q(x) = \LSBIndex(\mathrm{depth}_{\H}(x)+1)$.
Let $v^*$ be the unique $\H$-node on $P_{uv}$ such that $q(v^*) > q(x)$ for all
$x\in P_{uv}\backslash\{v^*\}$.
It is straightforward to see that no $\H$-shortcut on $P_{uv}$ crosses $v^*$ and hence
$\ShortCuts^{\H}(u, v) = \ShortCuts^{\H}(u, v^*)\cup \ShortCuts^{\H}(v^*, v)$.

Now we claim the following: let $P_{v'v}$ be an ancestor-descendant path such that
$q(v')>q(x)$ for all $x\in P_{v'v}\backslash \{v'\}$. Then $\ShortCuts^{\H}(v', v)$ consists of $\H$-shortcuts with strictly decreasing powers.
We prove this claim by induction.  In the base cases the claim is trivially true, when $v'=v$ or $v'$ is the $\H$-parent of $v$.
In general, let $v''$ be the unique node on the path $P_{v'v}$ such that $q(v'') > q(x)$ for all $x\in P_{v'v}\backslash\{v',v''\}$.
The shortcut $\scut {v'}{v''}$ must be in $\ShortCuts^{\H}(v',v)$ since the power of $\scut {v'}{v''}$ is strictly greater than the power of any shortcut on
$P_{v''v}$. 
By the induction hypothesis on $P_{v''v}$, the claim holds.
Thus, all $\H$-shortcuts in $\ShortCuts^{\H}(v^*, v)$ have distinct and decreasing powers.
By symmetry, all $\H$-shortcuts in $\ShortCuts^{\H}(u, v^*)$ 
also have distinct and increasing powers.
Since the maximum depth is $\dmax=\lfloor\log n\rfloor$,
the largest possible power of an $\H$-shortcut is $\lceil\log\log n\rceil-1$.
As a consequence, 
we have $|\ShortCuts^{\H}(u, v)| = O(\log \log n)$.
\end{proof}

\paragraph{$(i,t)$-shortcuts.}
Let $u$ be a single-child $(i,t)$-node and let $v$ be the $(i,t)$-child of $u$,
which by definition must be either an $(i,t)$-branching node or an $(i,t)$-leaf.
The purpose of maintaining $\H$-shortcuts is to allow one to quickly move from $u$ to $v$.
Ideally, the data structure will traverse the $O(\log\log n)$ $\H$-shortcuts in $\ShortCuts^{\H}(u, v)$.
However, forcing all of the $\H$-shortcuts in $\ShortCuts^{\H}(u, v)$ to be maintained by the data structure seems to complicate the process of updating $\H$-shortcuts as $\H$ changes.
In particular, when an $i$-witness 
edge $\{u, v\}$ is deleted, $\H$ goes through several structural changes by merging an ancestor $u^i$ (or $v^i$) with a subset of the 
$\H$-siblings of $u^i$.\footnote{\emph{$\H$-siblings} are $\H$-nodes sharing the same $\H$-parent.}
All $\H$-shortcuts that were connected to $u^i$ (or $v^i$) and those $\H$-siblings need to be updated at the same time.
Since we are fine with $O(\log\log n)$ {\em amortized} time for the traversal, the process of updating shortcuts (due to changes in the hierarchy or the corresponding $(i, t)$-forests) becomes simpler
by allowing a weaker invariant governing which shortcuts are actually present.

\begin{invariant}[$(i, t)$-Shortcuts]\label{invariant:it-shortcuts}
Let $u$ be a single-child $(i,t)$-node and let $v$ be the $(i,t)$-child of $u$.
The $(i,t)$-shortcuts on $P_{uv}$ that are stored by the data structure form a
path connecting $u$ and $v$.
\end{invariant}

When structural changes take place in $\H$, all of the shortcuts that touch the nodes participating in these changes are removed. The cost for removing those shortcuts is amortized over the cost of creating them.
However, once the structural changes are complete, we do not immediately return all the shortcuts back. Instead, the data structure partially recovers enough\footnote{Notice that Invariant~\ref{invariant:it-shortcuts} implies that the shortcut data structure is not required to store all of $\ShortCuts^{\H}(u, v)$ in order for the invariant to hold.} shortcuts to maintain Invariant~\ref{invariant:it-shortcuts},
and then employs a lazy approach in which shortcuts are only added (via a \emph{covering} process) when they are needed.

\begin{figure}[ht]
\centering
\includegraphics[width=5\ThCScm]{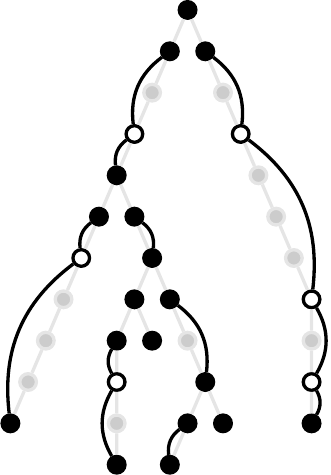}
\caption{An example of an $(i, t)$-tree and its corresponding $(i, t)$-shortcuts: filled circles are $(i, t)$-nodes, and the curved line segments are $(i, t)$-shortcuts.}
\end{figure}

\paragraph{Covering and uncovering.}
Assume Invariant~\ref{invariant:it-shortcuts} holds.
Suppose that the algorithm traverses downward from a single-child
$(i, t)$-node $u$ to its $(i, t)$-child $v$.
If the set of shortcuts used is precisely $\ShortCuts^{\H}(u,v)$ then this traversal costs $O(\log\log n)$ time.
If not, then the algorithm repeatedly \emph{covers} consecutive $(i,t)$-shortcuts (see Section~\ref{section:the-H-shortcut-data-structure} for implementation details)
until the set of $(i, t)$-shortcuts between $u$ and $v$ is exactly $\ShortCuts^{\H}(u,v)$.
We use a potential argument to prove that the amortized cost of traversing from
$u$ to $v$ is $O(\log\log n)$ time; see Section~\ref{section:cost-analysis-lazy-covering}.

There are also certain cases where the structure of $\H$ does not change, but some $(i, t)$-forests do change (for example, whenever an $\H$-leaf gains or loses an $(i, t)$-status).
To support structural changes in $\H$ or in $(i,t)$-forests,
the data structure will at times \emph{uncover} an $(i, t)$-shortcut $s$ of power
$p$ by removing $s$ and adding the two consecutive
$(i, t)$-shortcuts of power $p-1$ that were covered by $s$.
In order to accommodate an efficient uncovering operation, during a covering operation the data structure
continues to store the covered $\H$-shortcuts so that they are readily available when a subsequent \emph{uncover} operation occurs.
The $\H$-shortcuts stored by the data structure that are strictly covered by some $(i, t)$-shortcuts are called \emph{supporting $\H$-shortcuts};
these supporting shortcuts \emph{do not} have $(i,t)$-status.
The $\H$-shortcut $\scut uv$ is always directly accessible from $v$ (the deeper node),
but not necessarily from $u$ (the shallower node).
From the perspective of $v$, $\scut uv$ is called an \emph{upward $\H$-shortcut},
while from the perspective of $u$, $\scut uv$ is called a \emph{downward $\H$-shortcut}.

An upper bound on the number of $\H$-shortcuts that need to be stored at each $\H$-node is captured by the following straightforward corollary.  (Recall that the algorithm \underline{does not} store shortcuts between an $(i,t)$-branching node and its $(i,t)$-children.)

\begin{corollary}\label{corollary:it-shortcuts} Assume Invariant~\ref{invariant:it-shortcuts} holds for all pairs of nodes in $\H$.
Then for each node $v\in \H$, and each $(i,t)$ pair, there is at most one downward $(i, t)$-shortcut and at
most one upward $(i, t)$-shortcut at $v$.
\end{corollary}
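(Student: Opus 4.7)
\paragraphb{Proof proposal.} The plan is, for a fixed $\H$-node $x$ and fixed $(i,t)$ pair, to enumerate the single-child $(i,t)$-parent/child pairs whose $\H$-paths contain $x$ and then apply Invariant~\ref{invariant:it-shortcuts} to each such path separately.

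First I would establish the following structural observation: if $u$ is a single-child $(i,t)$-node with unique $(i,t)$-child $v$, then no $\H$-node strictly between $u$ and $v$ on $P_{uv}$ is an $(i,t)$-node. Otherwise such a node would be a closer $(i,t)$-ancestor of $v$ on the induced forest $\mathfrak{F}$, contradicting $u$ being the $(i,t)$-parent of $v$. An immediate consequence is that for two distinct single-child $(i,t)$-parent/child pairs $(u_1,v_1) \ne (u_2,v_2)$, the paths $P_{u_1v_1}$ and $P_{u_2v_2}$ can share only $\H$-nodes that are $(i,t)$-node endpoints of both paths; in particular, an interior $\H$-node of one cannot even lie on the other.

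Next I would classify $x$. If $x$ is not an $(i,t)$-node, then $x$ lies in the interior of at most one such path $P_{uv}$. If $x$ is an $(i,t)$-node, then $x$ is never in the interior of any such path, yet $x$ can serve as the top endpoint $u$ of at most one pair (a single-child $(i,t)$-node has a unique $(i,t)$-child) and as the bottom endpoint $v$ of at most one pair (every $(i,t)$-node has a unique $(i,t)$-parent).

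Finally, I would invoke Invariant~\ref{invariant:it-shortcuts}: on any single path $P_{uv}$ contributing stored shortcuts, the stored $(i,t)$-shortcuts form a simple chain from $u$ to $v$, so each intermediate $\H$-node has exactly one upward and one downward stored $(i,t)$-shortcut inherited from this chain, while $u$ contributes only a downward one and $v$ only an upward one. Combining with the classification, the upward shortcuts at $x$ come either from the unique path whose interior contains $x$, or from the unique pair in which $x$ plays the role of $v$; symmetrically for downward shortcuts via the role of $u$. In either situation $x$ is incident to at most one upward and at most one downward $(i,t)$-shortcut, which is the claim. The main obstacle I expect is pinning down the structural observation cleanly, since the rest of the argument is essentially bookkeeping resting on it and on the uniqueness of the $(i,t)$-parent.
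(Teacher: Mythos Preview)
Your proposal is correct. The paper does not actually prove this corollary; it is introduced with the phrase ``the following straightforward corollary'' and no argument is given. Your write-up is a valid fleshing-out of why it is straightforward: the key points are exactly that (a) interior nodes of a single-child-$(i,t)$-node-to-$(i,t)$-child path $P_{uv}$ carry no $(i,t)$-status (by the definition of $(i,t)$-parent), so distinct such paths are internally disjoint, and (b) along any one such path Invariant~\ref{invariant:it-shortcuts} forces the stored $(i,t)$-shortcuts to be a simple chain, giving at most one upward and one downward $(i,t)$-shortcut per node on that path. Your case split on whether $x$ is an $(i,t)$-node then finishes it. One small sharpening you could add: when $x$ is an $(i,t)$-node and plays the role of the lower endpoint $v$, the $(i,t)$-parent of $x$ need not be a single-child $(i,t)$-node at all (it may be a branching node or root), in which case no such path ends at $x$ and $x$ simply has no upward $(i,t)$-shortcut; your ``at most one'' phrasing already accommodates this, but saying it explicitly removes any doubt.
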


\paragraph{Sharing shortcuts.}
An $\H$-shortcut $\scut uv$ that is an $(i, t)$-shortcut could also be an $(i', t')$-shortcut when
$(i,t)\neq (i',t')$.
Similarly, a supporting shortcut for some $(i, t)$-shortcut could also be an $(i', t')$-shortcut when
$(i,t)\neq (i',t')$.
The data structure stores at most one copy of any $\H$-shortcut even if there are many $(i,t)$ pairs that use it.
The maximum number of distinct $\H$-shortcuts touching a given ancestor-descendant path is bounded by the following lemma.
(Recall that a \emph{stored shortcut} is either an $(i,t)$-shortcut for
some $(i,t)$, or a supporting shortcut, which may have no $(i,t)$-status.)

\begin{lemma}\label{lemma:shortcuts-touching-a-path}
Consider any node $v$ in $\H$. The total number of stored shortcuts joining an ancestor of $v$ to another node
is $O(\log n\log\log n)$.
In particular, the number of distinct fundamental $(i,t)$-shortcuts having one endpoint 
at an ancestor of $v$ is $O(\log n)$.
Moreover, the number of $\H$-shortcuts having both endpoints at ancestors of $v$ is $O(\log n)$.
\end{lemma}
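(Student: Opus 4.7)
The plan is to leverage the LSB combinatorial structure governing $\mathcal{H}$-shortcut existence (Lemma~\ref{lemma:covering-prop}) together with the fact that the root-to-$v$ path $P$ in $\mathcal{H}$ has at most $\dmax + 1 = O(\log n)$ nodes. I would prove the three bounds in order, saving the trickiest for last.

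For the third bound (both endpoints on $P$), I would count \emph{all} possible $\mathcal{H}$-shortcuts on $P$, which dominates the stored count. A shortcut of power $p$ on $P$ corresponds to a pair of consecutive elements in the subsequence $S_p = \{x \in P : \LSBIndex(\mathrm{depth}_{\mathcal{H}}(x)+1) \geq p\}$, because both endpoints must lie in $S_p$ while every interior node must lie outside. Since $\mathcal{H}$-depths along $P$ are consecutive integers, $|S_p| \leq |P|/2^p + 1$; summing over $p = 0, 1, \ldots, O(\log \log n)$ gives $\sum_p |S_p| = O(|P|) = O(\log n)$.

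For the first bound (all stored shortcuts with one endpoint on $P$), I would group distinct stored shortcuts touching $P$ by a witness $(i,t)$ pair for which each serves as an $(i,t)$-shortcut or supporting shortcut. By Invariant~\ref{invariant:it-shortcuts} and Lemma~\ref{lemma:largest-shortcuts}, for each fixed $(i,t)$ pair the $(i,t)$-shortcuts stored along any ancestor-descendant path form a sequence of $O(\log \log n)$ shortcuts. Applying this along $P$ (and one first-hop shortcut off $P$ at each $(i,t)$-branching node on $P$) bounds the number of stored shortcuts touching $P$ per $(i,t)$ pair by $O(\log \log n)$. With $O(\log n)$ such $(i,t)$ pairs, the total, and therefore the distinct count, is $O(\log n \log \log n)$.

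The hardest of the three is the second, fundamental $(i,t)$-shortcuts with one endpoint on $P$. Any such fundamental $\scut{u}{w}$ has upper endpoint $u$ on $P$, since $w \in P$ forces its parent $u \in P$. When $w$ is also on $P$, $\scut{u}{w}$ lies between consecutive nodes of $P$, contributing at most $|P|-1 = O(\log n)$ fundamentals. The difficult case is $w$ off $P$: the naive bound from Corollary~\ref{corollary:it-shortcuts} gives one downward $(i,t)$-shortcut per pair per ancestor, summing to $O(\log^2 n)$ over $P$, a $\log n$ factor too loose. I expect the correct argument to charge globally rather than per ancestor, showing that each $(i,t)$ chain contributes only $O(1)$ distinct off-$P$ fundamentals along the entire path $P$ (since the ``branching off'' of a chain from $P$ happens once per chain), yielding the desired $O(\log n)$. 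Identifying and justifying this global charging is where I expect the main technical work to lie.
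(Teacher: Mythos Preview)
Your third-bound argument via the sets $S_p$ is correct and is a pleasant alternative to the paper's one-liner, which simply observes that all possible $\H$-shortcuts on $P$ form a laminar family of intervals on a ground set of size $|P|=O(\log n)$, hence number $O(\log n)$.

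There is a genuine gap in your first-bound argument. The sentence ``for each fixed $(i,t)$ pair the $(i,t)$-shortcuts stored along any ancestor--descendant path form a sequence of $O(\log\log n)$ shortcuts'' is not what Lemma~\ref{lemma:largest-shortcuts} says, and it is false as stated: Lemma~\ref{lemma:largest-shortcuts} bounds $|\ShortCuts^{\H}(u,v)|$ for a \emph{single} segment between a single-child $(i,t)$-node $u$ and its $(i,t)$-child $v$, but along $P$ there may be many such segments (one per consecutive pair of $(i,t)$-nodes on $P$), and moreover before lazy covering a single segment may itself carry more than $O(\log\log n)$ stored $(i,t)$-shortcuts. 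The paper avoids this by splitting the count into (a) shortcuts with both endpoints on $P$, which it already knows are $O(\log n)$ in total and are \emph{shared} across all $(i,t)$ pairs, and (b) \emph{deviating} shortcuts, with exactly one endpoint on $P$. For (b) it argues that for each $(i,t)$ pair at most one $(i,t)$-shortcut deviates from $P$ (since the $(i,t)$-tree can leave $P$ at only one place), and the supporting shortcuts of that one deviating $(i,t)$-shortcut contribute at most $O(\log\log n)$ further deviating shortcuts (one per power). Summing (b) over the $O(\log n)$ pairs and adding (a) gives $O(\log n\log\log n)$. Your charging scheme would need exactly this split to go through.

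For the second bound you have the right picture but overestimate the difficulty. The observation you want is precisely the one just stated: for each $(i,t)$ pair, at most one $(i,t)$-shortcut deviates from $P$, because once the unique downward $(i,t)$-chain leaves $P$ it never returns; in particular at most one \emph{fundamental} $(i,t)$-shortcut deviates. Together with your $|P|-1=O(\log n)$ on-path fundamentals, this immediately gives $O(\log n)$. There is no further ``main technical work'' here; the same one-deviation-per-$(i,t)$ observation simultaneously closes your first bound (via the split above) and your second.
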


\begin{proof}
For a given path $P$, an $\H$-shortcut $\scut uv$ is said to be \emph{deviating} if exactly one of its endpoints is on $P$.

Let $P$ be the path from $v\in \H$ to its $\H$-root. For each edge depth $i$ and type $t$, at most one $(i, t)$-shortcut is deviating from $P$, and each such shortcut has at most $O(\log \log n)$ supporting shortcuts with exactly one endpoint on $P$.
(Recall that $(i,t)$-shortcuts form paths from \emph{single-child} $(i,t)$-nodes to their $(i,t)$-child.  Branching $(i,t)$-nodes have no $(i,t)$-shortcuts leading to descendants.)
Thus, for each $(i, t)$ pair, at most one \emph{fundamental} 
$(i, t)$-shortcut deviates from $P$.
All $\H$-shortcuts connecting $\H$-nodes on $P$ form a laminar set, and so there are at most $2\dmax = O(\log n)$ such $\H$-shortcuts.
Thus, the total number of stored shortcuts with one endpoint in $P$ is $O(\log n\log\log n)$, and the total number of distinct
fundamental $(i, t)$-shortcuts with one endpoint on $P$ is $O(\log n)$.
\end{proof}

In the rest of this section, 
we describe how $\H$-shortcuts are 
stored.

\subsection{The \texorpdfstring{$\H$}{H}-shortcut data structure}\label{section:the-H-shortcut-data-structure}

\paragraph{Information stored at $\H$-nodes.}
Due to Corollary~\ref{corollary:it-shortcuts}, 
every node in $\H$ has at most $3\dmax+1 = O(\log n)$
downward $(i,t)$-shortcuts at any given time.
Each node $u$ stores an array $\Dsc_u$ of size at most $3\dmax+1$ storing all downward $(i,t)$-shortcuts, together with a bitmap $\Occ_u$ indicating which array slots of $\Dsc_u$ are in use.\footnote{Notice 
that when the data structure 
allocates the array $\Dsc_u$, 
it is assumed to contain arbitrary values.
One can only tell which values are meaningful 
and how to interpret them via the 
$\Occ_u$ and $\Dptr_u$ arrays.}
The size of $\Dsc_u$ is chosen to be exactly enough for storing pointers to $(i,t)$-shortcuts for all possible $(i,t)$ pairs as well as one additional slot for temporary use during promotions/upgrades.
However, a single shortcut may be shared by many $(i,t)$ pairs.
In order to support fast access from $u$ to its downward $(i,t)$-shortcut, each node stores a \emph{local dictionary} which is an array $\Dptr_u$ storing, for each $(i,t)$ pair, a $(\log \log n+2)$-bit index to the location in $\Dsc_u$ of the appropriate downward $\H$-shortcut, i.e.,
\[
\Dsc_u[\Dptr_u[i,t]] \mbox{ points to an $(i, t)$-shortcut leaving $u$, if such a shortcut exists.}
\]

Notice that for an $\H$-node and a power $p$, there is at most one upward $\H$-shortcut from $v$ with power $p$. Thus, each node $v$ maintains an array $\Usc_v$ of $O(\log\log n)$ pointers to shortcuts,
sorted by power, to the upward supporting $\H$-shortcuts of $v$.
Moreover, at each node $v$ the data structure stores a $(3\dmax+1)$-length array $\Uptr_v$ of $O(\log\log\log n)$-bit integers for each $(i,t)$ pair. Thus, the upward $(i, t)$-shortcut $\scut xv$ is accessed in $O(1)$ time, i.e.,
\[
\Usc_v[\Uptr_v[i,t]] \mbox{ points to an $(i, t)$-shortcut entering $v$, if such a shortcut  exists.}
\]

Notice that each entry in the $\Dptr_u$ and $\Uptr_v$ arrays is represented with $O(\log\log n)$ bits,
and there are $O(\log n)$ $(i, t)$ pairs. These entries are packed into $O(\log\log n)$ memory words
so that the data structure is able to update the entire array efficiently
via lookup tables in $O(\log\log n)$ time.

The following lemma summarizes how shortcuts are used to support various operations needed
locally in one $\H$-node.

\begin{lemma}\label{lemma:shortcuts-operation}
The following operations are supported via shortcut information stored at nodes
(worst case time in parentheses).

\begin{itemize}\compactify
\item Given $\scut uv$ and a bitmap $b$ of length $3\dmax+1$, add $\scut uv$ as an $(i,t)$-shortcut for all $(i,t)$ pairs indicated by $b$ $(${}$O(\min\{|b|+1, \log\log n\})$ where $|b|$ is the number of $1$s in $b${}$)$.
\item Given $\scut uv$ and a bitmap $b$ of length $3\dmax+1$, remove the $(i, t)$-shortcut status from $\scut uv$ for all $(i,t)$ pairs indicated by $b$ $(${}$O(\min\{|b|+1, \log\log n\})${}$)$.
\item Given $u\in \H$ and an $(i,t)$ pair, return the $(i, t)$-downward $\H$-shortcut at $u$ or report that such a shortcut does not exist $\left(O(1)\right)$.
\item Given $v\in \H$ and an $(i,t)$ pair, return the $(i, t)$-upward $\H$-shortcut at $v$ or report that such a shortcut does not exist $\left(O(1)\right)$.
\item Given $u\in \H$, return the index of an empty slot in $\Dsc_u$ $(O(1))$.
\item Given $u\in \H$, enumerate all indices of used locations in $\Dsc_u$ $(O(k+1)$ where $k$ is the number of the enumerated indices$)$.
\end{itemize}
\end{lemma}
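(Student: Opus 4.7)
The plan is to verify each of the six bullets by exhibiting the word-RAM primitives or $O(n^{\epsilon})$-sized lookup tables needed, exploiting the fact that $\Dptr_u$ and $\Uptr_v$ each hold $O(\log n)$ entries of $O(\log\log n)$ bits and so pack into $O(\log\log n)$ machine words under the standard assumption that a word has $\Theta(\log n)$ bits, while $b_u$ is $O(\log n)$ bits long and fits in $O(1)$ words.

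I would first dispatch the constant-time queries, namely (3), (4), and (5), along with the enumeration (6). For (3), extracting $\Dptr_u[i,t]$ is a single mask-and-shift inside the (known) word of the packed $\Dptr_u$; consulting $b_u$ at that index decides whether the shortcut exists, and if so the pointer is read from $\Dsc_u$ at that index. Operation (4) is symmetric using $\Uptr_v$ and $\Usc_v$. For (5) I would locate any zero bit of $b_u$ in $O(1)$ via a standard least-significant-bit primitive on $\lnot b_u$ (or an $O(n^{\epsilon})$-table on $\epsilon\log n$-bit chunks). For (6) I would repeatedly isolate and clear the least significant set bit of $b_u$, spending $O(1)$ per reported index for a total of $O(k+1)$.

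The substantive steps are (1) and (2). When $|b|\le \log\log n$ I would iterate over the set positions of $b$, extracting each via the lowest-set-bit trick in $O(1)$, and for each resulting $(i,t)$ pair overwrite or clear the corresponding $O(\log\log n)$-bit entry of $\Dptr_u$ (respectively $\Uptr_v$) by mask-and-shift in $O(1)$; this gives $O(|b|+1)$. When $|b|$ could be as large as $\Theta(\log n)$, I would instead walk through the packed representations of $\Dptr_u$ and $b$ in lock-step, processing $\epsilon\log n$-bit chunks at a time. A precomputed $O(n^{\epsilon})$-size lookup table, indexed by the current chunk of $\Dptr_u$, the current chunk of $b$, and the single slot index in $\Dsc_u$ that these $(i,t)$ pairs should now point to (or the single ``clear'' symbol for removal), returns the updated chunk in $O(1)$ time. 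Since the whole $\Dptr_u$ spans $O(\log\log n)$ such chunks, the total cost is $O(\log\log n)$. Taking the better of the two strategies yields the claimed $O(\min(|b|+1,\log\log n))$ bound.

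The main obstacle is keeping the lookup table at size $O(n^{\epsilon})$ while still processing the full packed representation in $O(\log\log n)$ time: a full machine word is $\Theta(\log n)$ bits wide, so indexing a table on an entire word directly would inflate it to $n^{\Theta(1)}$ entries. I would resolve this by slicing each word of $\Dptr_u$ and of $b$ into $O(1)$ sub-chunks of $\epsilon\log n$ bits and applying the table to each sub-chunk in turn, which preserves both the $O(n^{\epsilon})$ table size and the $O(\log\log n)$ total running time (and is consistent with the lookup-table construction and amortization described in Section~\ref{section:lookup-tables}). The same slicing strategy handles $\Uptr_v$, after which (1) and (2) update $b_u$ with a single bitwise OR or AND-NOT.
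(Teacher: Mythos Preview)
Your proposal is correct and aligns with the paper's approach: the paper explicitly omits the proof, stating only that the lemma ``is proven (proof omitted) using bitwise operations or $O(n^\epsilon)$ lookup tables on $\Dsc_u$, $\Dptr_u$, $b_u$, $\Uptr_v$, and $\Usc_v$,'' and your write-up is a faithful and detailed expansion of exactly that hint. Two minor remarks: the entries of $\Uptr_v$ are actually $\log\log\log n$ bits (not $O(\log\log n)$), which only strengthens your packing bound; and for operation~(3), checking $b_u$ at the slot index alone does not certify that the stored shortcut is an $(i,t)$-shortcut---you should instead (or additionally) test the $(i,t)$-bit of the shortcut's own bitmap $b_{\scut uv}$, which is still $O(1)$.
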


\begin{proof}[Proof Sketch]
   The proof of the lemma is straightforward using bitwise operations or $O(n^\epsilon)$-size lookup tables for operations on $\Dsc_u$, $\Dptr_u$, $\Occ_u$, $\Uptr_v$, and $\Usc_v$.  For example, $\Occ_u$ is a $(3\log n+1)$-bit vector. We partition it into $3\epsilon^{-1}$ segments of $\epsilon\log n$ bits, and can search for a zero in each segment in $O(1)$ time with a table lookup.
\end{proof}

\paragraph{Information stored at shortcuts.}
An $\H$-shortcut $\scut uv$ is an $O(1)$-word data structure storing the following information:
\begin{itemize}

\item $\P(u,v)$: the \emph{power} of the shortcut,
\item Pointers to $u$ and $v$,
\item The index $j$ in $\Dsc_u$ where $\scut uv$ is stored, or $\bottom$ if $\scut uv$ is not stored in $\Dsc_u$,
\item A $3\dmax+1$ length bitmap $b_{\scut uv}$ containing one bit for each $(i,t)$ pair (called the $(i,t)$-bit)
indicating whether $\scut uv$ is an $(i,t)$-shortcut, and
\item If $\P(u,v)>0$ then $\scut uv$ stores pointers to the two
supporting shortcuts with power $\P(u,v)-1$ that $\scut uv$ covers.
\end{itemize}

\begin{lemma}\label{lemma:lazy-cover-operation}

The $\H$-shortcut data structure supports the following operations (worst case time in parenthesis):
\begin{enumerate}\compactify
\item  (Uncovering) Given an $(i,t)$ pair and an $(i,t)$-shortcut $\scut{u}{v}$ that is not a fundamental $\H$-shortcut, uncover $\scut{u}{v}$ and convert the two supporting shortcuts of $\scut{u}{v}$ into $(i,t)$-shortcuts $(O(1))$.
\item  (Traversal and Covering) Assume Invariant~\ref{invariant:it-shortcuts} holds for all $\H$-nodes with depth $\ge i$. Given a single-child $(i,t)$-node $u$ whose $(i,t)$-child is $v$,
traverse from $u$ to $v$ via $(i,t)$-shortcuts while guaranteeing that after the traversal is completed, the set of $(i,t)$-shortcuts between $u$ and $v$ is
exactly $\ShortCuts^{\H}(u,v)$, preserving Invariant~\ref{invariant:it-shortcuts} for all $\H$-nodes with depth $\ge i$ \ $(O(k + \log\log n)$, where $k$ is the number of $(i, t)$-shortcuts 
covered during the traversal$)$.
\end{enumerate}
\end{lemma}

\begin{proof}
Part 1.
Suppose the algorithm uncovers a given $(i,t)$-shortcut $\scut uv$  that is not fundamental, meaning $\scut uv$  has power $p>0$.
The algorithm sets $b_{\scut uv}[i,t]=0$, follows the two pointers from
$\scut uv$ to its supporting power-$(p-1)$ $\H$-shortcuts $\scut ux$ and $\scut xv$, and sets
$b_{\scut ux}[i,t]=b_{\scut xv}[i,t]=1$.
The algorithm also updates in a straightforward manner some local information in all affected nodes $\{u,v,x\}$ in $O(1)$ time.
To be specific, the algorithm
(i) checks whether $\scut ux$ and $\scut xv$ are already stored in $\Dsc_u$ and $\Dsc_x$ by inspecting $\scut ux$ and $\scut xv$.
(ii) If not, the algorithm finds empty slots in $\Dsc_u$ and/or $\Dsc_x$ via the bitmaps $\Occ_u$ and $\Occ_x$,
which indicate which slots in $\Dsc_u$ and $\Dsc_x$ are available, and updates $\Dptr_u[i,t]$ and/or $\Dptr_x[i,t]$.
(iii) The algorithm sets $\Dsc_u[\Dptr_u[i, t]] = \scut ux$; $\Dsc_x[\Dptr_x[i, t]] = \scut xv$; $\Uptr_x[i, t]=\P(u, x)$; and $\Uptr_v[i, t]=\P(x, v)$.
(iv) If $b_{\scut uv}$ is all 0, i.e., $\scut uv$ is no longer an $(i',t')$-shortcut for any $(i',t')$ pair,
the algorithm frees the slot storing $\scut uv$ in $\Dsc_u$ by unsetting the corresponding bit in $\Occ_u$, 
then updates $\scut uv$ to reflect that $\scut uv$  is no longer stored in $\Dsc_u$.

\begin{remark}
After step (iv), it may be that $b_{\scut uv}=0$ \emph{and} $\scut uv$ is not a supporting shortcut for any higher-power
$(i',t')$-shortcut.  If this is the case, it is fine to delete $\scut uv$ (and update $\Usc_v$ and $\Uptr_v$ appropriately).
In our implementation the algorithm has no means to check whether $\scut uv$ is a necessary supporting shortcut,
and so the algorithm keeps $\scut uv$ allocated.
Notice that in the worst case there are $\Theta(n\log\log n)$ stored shortcuts, so keeping spurious shortcuts around does
not affect the overall space usage of the data structure.
\end{remark}

Continuing with the proof (Part 2), we can move from $u$ to its $(i,t)$-child by starting at $u$ and following downward $(i,t)$-shortcuts until an $(i,t)$-node is reached. 
During this traversal, if there are two consecutive $(i, t)$-shortcuts $\scut {x}{y'}$ and $\scut {y'}y$ 
with the same power $p$ and
\[
\LSBIndex(\mathrm{depth}_{\H}(y')+1) < \min\Big(\LSBIndex(\mathrm{depth}_{\H}(x)+1),\; \LSBIndex(\mathrm{depth}_{\H}(y)+1)\Big),
\]
then the data structure covers the two shortcuts with the $\H$-shortcut $\scut xy$ having power $p+1$.
This is done as follows.

First the algorithm checks whether $\scut xy$ already exists, by testing if $\Usc_y[\P(x,y)]$ stores a pointer
to $\scut xy$ or not. If $\scut xy$ already exists then $\scut xy$ is accessed through $\Usc_y$,
and if not then $\scut xy$ is created and a pointer to $\scut xy$ is added to $\Usc_y$.

Next, the algorithm sets the $(i, t)$-bit in $b_{\scut xy}$ to $1$
and sets the $(i, t)$-bits in $b_{\scut {x}{y'}}$ and
$b_{\scut {y'}y}$ to $0$.  
If $b_{\scut x{y'}}=0$ (resp. $b_{\scut {y'}y}=0$), the algorithm removes its index from $\Dsc_x$ (resp. $\Dsc_{y'}$) by unsetting the corresponding bits in $\Occ_x$ (resp. $\Occ_{y'}$). The algorithm also updates $\Dsc_x$ and $\Dptr_x$ so that $\Dsc_x[\Dptr_x[i, t]] = \scut xy$ is accessible from $x$.

Covering $\scut xy$ may create the opportunity to cover another shortcut
$\scut {x'} y$ of the next higher power.
The data structure uses $\Usc_x$ to access the upwards shortcut $\scut {x'} x$ with power $p+1$.
If $\scut {x'}x$ exists and is also an $(i,t)$-shortcut then the data structure covers $\scut {x'}x$ and $\scut xy$
with $\scut {x'} y$, and recursively looks to see if there are more shortcuts to cover at power $p+2$, and so on.

It is straightforward to verify that at the end of the traversal the set of $(i,t)$-shortcuts connecting
$u$ and $v$ is exactly $\ShortCuts^{\H}(u,v)$. The time for traversing the path
is $O(k + |\ShortCuts^{\H}(u, v)|)$, which is $O(k+\log\log n)$ where $k$ is the number of
$(i, t)$-shortcuts being covered during the traversal.
\end{proof}

In Section~\ref{section:cost-analysis-lazy-covering} we show that by
defining the potential function to be the number of all $(i, t)$-shortcuts
that \emph{could} be covered but are not yet covered, this operation has amortized cost $O(\log\log n)$ time.

\subsection{Maintaining Invariant~\ref{invariant:it-shortcuts} Through Structural Changes to \texorpdfstring{$\H$}{H}}\label{section:uncovering-a-path}

\begin{figure}[ht]
   \centering
   \includegraphics[width=2.5\ThCScm]{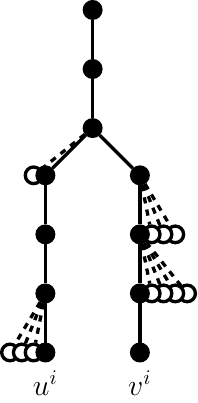}
   \caption{After deleting an $i$-witness edge $\{u, v\}$, all affected $\H$-nodes are on at most two paths.
   In the course of looking for a replacement for $\{u,v\}$, we will merge a collection of siblings into one at
   each depth between $i$ and the depth where the replacement edge is found.
   The dashed lines illustrate the effect of merging siblings.}\label{fig:two-paths}
   \end{figure}

The shortcut infrastructure is very sensitive to the merge operation (e.g., Operation~\ref{opr:merge-two-siblings} in Lemma~\ref{lemma-main}). 
In particular, when an $i$-witness edge $\{u,v\}$ is deleted, $\H$ goes through several structural changes by merging an ancestor of $u^i$ (or $v^i$) with a subset of its $\H$-siblings. These merges require 
updating the shortcut infrastructure, which seems
to be a very complicated task when supporting these types of changes. 
Specifically, we need to employ a special strategy that ensures Invariant~\ref{invariant:it-shortcuts} holds after the entire \Delete{} operation.

In order to provide an efficient implementation, observe that during such a single deletion, all merged $\H$-nodes (and their appropriate $\H$-siblings) end up being on the paths between $u^i$ and $v^i$ and their respective $\H$-roots.  See Figure~\ref{fig:two-paths}. Thus, we are able to employ the following strategy.

First, at the beginning of the delete operation, the algorithm completely uncovers and removes \emph{all} $\H$-shortcuts that touch $\H$-nodes on the two paths. In particular, by Lemma~\ref{lemma:shortcuts-touching-a-path}, the algorithm removes 
(1) $O(\log n)$ fundamental shortcuts, 
(2) $O(\log n)$ shortcuts with both endpoints on the path, and (3) $O(\log\log n)$ deviating shortcuts from each path for each $(i, t)$ pair.
Recall that deviating shortcuts have one endpoint on the $\H$-path in question.

After removing these $\H$-shortcuts, Invariant~\ref{invariant:it-shortcuts} no longer holds for pairs of $\H$-nodes where at least one node is on the affected paths.
However, these are shortcuts with $(i', t')$-status for some $i' < i$, and so during the deletion operation at depth $i$ we never use such shortcuts. Hence, removing them does not affect the other operations that take place during the edge deletion process at depth $i$.


Lemma~\ref{lemma:uncover-a-path} summarizes
the operations that remove and 
restore shortcuts along paths in $\H$,
which are used to guarantee that 
Invariant~\ref{invariant:it-shortcuts} holds after the deletion operation terminates.
Since the implementation requires 
interaction with 
the local trees, we defer its proof to Section~\ref{section:uncover-a-path}.

\begin{lemma}\label{lemma:uncover-a-path}
   The data structure supports the following operations on $\H$ with amortized time cost (in parenthesis).  Given an $\H$-node $v$:
   \begin{itemize}\compactify
   \item Uncover and remove every $\H$-shortcut that is touching any node that is an ancestor of $v$ $(O(\log n(\log\log n)^2))$.
   \item Given $v$, its $\H$-parent $u$, and a bitmap $b$, add a fundamental $\H$-shortcut $\scut{u}{v}$ for all $(i, t)$ pairs indicated by $b$ $(O(\log\log n))$.
   \item Add all fundamental $\H$-shortcuts between consecutive ancestors of $v$
   that are $(i,t)$-shortcuts for at least one $(i, t)$ pair
   $(O(\log n \log\log n))$.
   \item Assume Invariant~\ref{invariant:it-shortcuts} holds.
   For all $(i, t)$ pairs, cover all $(i, t)$-shortcuts having both endpoints at ancestors of $v$ $(O(\log n \log\log n))$. 
   \end{itemize}
   \end{lemma}


\section{Implementation of Approximate Counters}\label{section:approximate-counters}\label{section:the-sampling-procedure}

In this section, we describe how approximate $i$-counters are implemented.
Without loss of generality we assume that the input graph $G$ is simple.
Hence, all approximate $i$-counters are only required to represent a $(1+o(1))$-approximation of integers in
the range $[0, n^2]$.

\subsection{Approximate Counters}
Each $(i, \const{primary})$-leaf $\ell$ maintains the exact number of $(i, \const{primary})$-endpoints touching $\ell$.
The precise number of $(i, \const{primary})$-endpoints in a subtree of any $(i, \const{primary})$-node $v$ could be computed \emph{exactly} using a formula tree defined by the induced  
$(i, \const{primary})$-tree rooted at $v$ where the value at each vertex is the sum of the values of its children.
(Because the local trees are binary, the induced tree is also binary, and has height $O(\log n\log\log n)$.)
If one were to use such a strategy, then every $\H$-node has the potential of storing $O(\log n)$ counters, where each counter uses $O(\log n)$ bits, for a total of $O(\log n)$ words. Thus, splitting and merging vertices may cost $\Theta(\log n)$ time each, which is too expensive for our purposes.

Instead, the data structure efficiently maintains \emph{approximate} $i$-counters for nodes in $\H$ with a multiplicative approximation guarantee of $(1+o(1))$ using only $O(\log \log n)$ bits per approximate $i$-counter.

\paragraph{The structure of an approximate counter.}
Let $\beta = 2$ be a parameter that controls the quality of the approximation.
Each approximate counter $\hat C$ is defined by a pair $(m, e)$ composed of a \emph{mantissa} $m\in \{0, 1\}^{\beta\log\log n}$ and an \emph{exponent} $e\in\{0, 1\}^{\log\log n+1}$. The \emph{floating point representation} of $\hat C$ concatenates $m$ and $e$ into a length $(\beta+1)\log\log n+1$ bit string. The \emph{integer representation} of $\hat C$ is $m2^e$, where we treat the mantissa part and the exponent part as unsigned integers. Notice that an approximate counter represents up to $2(\log n)^{\beta + 1}$ different integers.
From the definition above, an integer $C\in [0, n^2]$ is approximated by $\hat C = (m, e)$ where $m$ is the first $\beta\log\log n$ bits of the binary representation of $C$ and $e$ is the number of truncated bits.

\paragraph{Special addition operation.}
When computing the addition of two values $a$ and $b$ represented by two approximate counters,
the result $a+b$ is rounded \emph{down} to the nearest possible approximate counter value.
Notice that this kind of addition is not associative.
We denote the operation of adding two approximate counters by $a \boxplus b$.
The precision guarantee of $\boxplus$ is summarized in the following lemma.

\begin{lemma}
Let $a$ and $b$ be two approximate counter values represented by approximate counters. Then $a \boxplus b$ satisfies:
\[
(1-\log^{-\beta}n) (a+ b)\le a \boxplus b \le a + b.
\]
\end{lemma}

\begin{proof}
   Let $C=a+b$. Then by definition $\hat{C}=(m, e)$ keeps the first $\beta\log\log n$ bits of the binary representation of $C$. The difference between $C$ and $\hat{C}$ is therefore strictly less than $(\log^{-\beta}n)C$.
   Thus, $\hat{C} \ge (1-\log^{-\beta} n)C$.
\end{proof}

\paragraph{Approximation guarantee and the formula tree.}
Using approximate counters with the $\boxplus$ operation instead of exact counters
creates a loss in precision which depends on the height of the arithmetic formula tree.
Recall that the height of a formula tree is always bounded by $O(\log n \log \log n)$ where the $\log \log n$ factor is due to the local trees.
In order to bound the loss of precision we use a function $H(v)$
which expresses the \emph{maximum possible} height of $v$ in any formula tree.  See Section~\ref{sect:precision-sampling} 
for more on why $H(\cdot)$ is defined this way.
\begin{definition}
Let $v$ be an $\H$-node.
Let $j$ be the depth of $v$ in $\H$. Then
\[
H(v) = (\dmax-j)\cdot O(\log\log n)%
+\lfloor\log (w(v))\rfloor.
\]
\end{definition}
Notice that $H(v) = O(\log n\log\log n)$.
The following invariant relates the precision of approximate counters to the function $H$.
The maintenance of Invariant~\ref{invariant:approximate-counters} is addressed in Lemma~\ref{lemma-counters}, 
which is proved in Section~\ref{section:proof-of-lemma-counters}.
\begin{invariant}[Precision of Approximate Counters]\label{invariant:approximate-counters}
Let $v$ be an $\H$-node and let $C_i(v)$ be the precise number of $i$-primary endpoints touching $v$.
If $v$ is an $(i,\const{primary})$-node then $v$ stores an approximate $i$-counter $\hat C_i(v)$,
where
\[
\left(1-(\log^{-\beta} n)\right)^{H(v)} C_i(v) \le \hat C_i(v) \le C_i(v).
\]
\end{invariant}
Thus, if Invariant~\ref{invariant:approximate-counters} holds with $\beta=2$, then for any $\H$-node $v$,
$$\hat C_i(v) \ge \left(1-(\log^{-2} n)\right)^{H(v)} C_i(v) = \left(1-(\log^{-2} n)\right)^{O(\log n \log \log n)} C_i(v) = (1-o(1)) C_i(v),$$ and so $\hat C_i(v)$ gives the desired approximation.

\paragraph{Packing $O(\log n)$ Approximate Counters.}
Each node in $\H$ stores $\dmax = \log n$ approximate counters. These counters are stored in $O(\log\log n)$ words by packing $O(\log n/\log \log n)$ approximate counters in the floating pointer representation into each word.
With the aid of lookup tables of size $O(n^\epsilon)$, the following lemma is straightforward.

\begin{lemma}\label{lemma:approximate-counters-addition}
The following operations are supported on approximate counters (worst case time in parentheses):

\begin{itemize}\compactify
\item  Given an $\H$-node $v$ and a depth $i$, update/return the approximate $i$-counter stored at $v$ \ $(O(1))$.
\item Given the floating point representation of an approximate counter, return its integer representation \ $(O(1))$.
\item Given the integer representation of an approximate counter, return its floating point representation \ $(O(1))$.
\item Given two approximate counters $a$ and $b$, return $a\boxplus b$ \ $(O(1))$.
\item Given two arrays of $O(\log n)$ approximate counters packed into $O(\log\log n)$ words, return
their coordinate-wise sum, packed into $O(\log\log n)$ words \ $(O(\log \log n))$.
\end{itemize}
\end{lemma}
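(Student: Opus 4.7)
\medskip

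\noindent\textbf{Proof plan for Lemma~\ref{lemma:approximate-counters-addition}.}

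The plan is to handle operations (1)--(4) using constant-time word-RAM operations directly on the packed floating-point representation, and to handle operation (5) by precomputing $O(n^\epsilon)$-sized lookup tables that process groups of counters at a time. Throughout, I rely on the fact that a counter occupies exactly $(\beta+1)\log\log n+1$ bits in floating-point form, and that a machine word is $\Theta(\log n)$ bits, so a single word holds $\Theta(\log n/\log\log n)$ counters.

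For operation (1), the $i$-th counter at $v$ sits in a known word index $\lfloor i/k\rfloor$ (where $k$ is the number of counters per word) at a known bit offset, so extraction and update via shifts and masks take $O(1)$. For operation (2), since the mantissa and exponent are adjacent fields within one $O(\log\log n)$-bit slot, computing $m\cdot 2^e$ reduces to isolating $m$ and $e$ and performing a left shift; the result still fits in $O(\log\log n)$ bits because the integer value is at most $n^2$. For operation (3), given an integer $C$ of at most $2\log n$ bits, we find the index of its most significant bit in $O(1)$ using standard word-level tricks (or one $O(n^\epsilon)$ lookup), keep the top $\beta\log\log n$ bits as the mantissa $m$, and set the exponent $e$ to the number of discarded lower bits. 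Operation (4) is then the composition ``integer-form $a$ + integer-form $b$, then truncate back to floating point,'' i.e.\ operations (2), (2), a single word addition, and (3), all in $O(1)$. The rounding-down property and the stated precision bound then follow immediately from the definition of the mantissa truncation.

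The only nontrivial case is operation (5), which is where I expect the main obstacle: I cannot do a single word-wide addition, both because the sum uses $\boxplus$ (per-counter truncation, which is nonlinear) and because counters inside a word need to be realigned after each individual addition. The plan is to slice each of the $O(\log\log n)$ words of both input arrays into a constant number of chunks of at most $(\epsilon/3)\log n$ bits each, so that each chunk contains a whole number of counters and a chunk-pair is addressable by a $(2\epsilon/3)\log n$-bit index. I then precompute, as part of the lookup tables built in Section~\ref{section:lookup-tables}, a table mapping each pair of aligned chunks to the packed floating-point representation of their coordinate-wise $\boxplus$. This table has size $O(n^{2\epsilon/3})=O(n^\epsilon)$, and each chunk-pair is resolved in $O(1)$. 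Since there are $O(\log\log n)$ chunks per array, the total time is $O(\log\log n)$, matching the claim.

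Finally, I would note that packing is compatible across operations: the output of (5) is stored in the same packed format used by the counters of Lemma~\ref{lemma-counters}, so no extra repacking step is needed, and the amortized cost of building and maintaining the $O(n^\epsilon)$ tables has already been absorbed in Section~\ref{section:lookup-tables}. The hardest part will be verifying that the chunk boundaries indeed align with counter boundaries regardless of $\beta$ and the word size; this can be arranged by rounding the chunk width down to the nearest multiple of $(\beta+1)\log\log n+1$, which costs only a constant factor in the table size and preserves the $O(n^\epsilon)$ bound.
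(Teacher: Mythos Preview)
Your approach is correct and matches the paper's, which simply states the lemma is ``straightforward'' given $O(n^\epsilon)$ lookup tables and notes that the table size is driven by operation (5). One minor slip: in operation (2) you write that the integer representation ``still fits in $O(\log\log n)$ bits,'' but since the value can be as large as $n^2$ it needs $O(\log n)$ bits; this does not affect the argument, as that is still $O(1)$ machine words.
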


\begin{proof}[Proof Sketch]
The first four operations use bitwise operations in a straightforward manner.
The fifth operation uses $O(n^\epsilon)$-size lookup tables to support a query in $O(\log\log n)$ time; see Section~\ref{section:lookup-tables}.
\end{proof}

\paragraph{Summary of operations.}
The main lemma summarizing operations related to approximate $i$-counters is given next.
\begin{lemma}\label{lemma-counters}
   There exists a data structure that maintains approximate $i$-counters on $\H$ while maintaining Invariant~\ref{invariant:approximate-counters} and supporting the following operations with the following amortized time complexities (in parentheses):
   \begin{itemize}\compactify
   \item{} Update the approximate counters to reflect a change in the number of $(i,\const{primary})$-endpoints at a given $\H$-leaf
   $(${}$O(\log n(\log \log n)^2)${}$)$.

   \item{} Given an $(i, \const{primary})$-tree $\T$ rooted at $v^i$, rebuild approximate $i$-counters for all $(i, \const{primary})$-nodes
   in $\T$ to restore Invariant~\ref{invariant:approximate-counters} for those nodes $(${}$O(|\T| (\log \log n)^2)${}$)$.

   \item{} When merging two sibling $\H$-nodes, compute the approximate $i$-counters for all $i\in [1,\dmax]$ at the merged node
   $(O(\log \log n))$.

   \item{} When splitting an $\H$-node into two sibling $\H$-nodes, compute the approximate $i$-counters for all $i\in [1,\dmax]$ at the two sibling nodes $(O(\log \log n))$.
   \end{itemize}
   \end{lemma}

The proof of Lemma~\ref{lemma-counters} depends on the implementation of local trees, which we provide in Section~\ref{section:local-trees}.
Thus, the proof of Lemma~\ref{lemma-counters} is deferred to Section~\ref{section:proof-of-lemma-counters}.

\section{Local Trees}\label{section:local-trees}

The purpose of the local tree $\L(v)$ is to connect 
an $\H$-node $v$ with its $\H$-children while supporting various
operations. 
A local tree is composed of a three-layer binary tree 
and a special binary tree called the {\emph{buffer tree}}. 
The three-layer binary tree is composed of a {\emph{top layer}}, a {\emph{middle layer}} and a {\emph{bottom layer}}.
See Figure~\ref{figure:local-tree} for an illustration.

\begin{figure}[ht]
\centering
\includegraphics[width=0.8\linewidth]{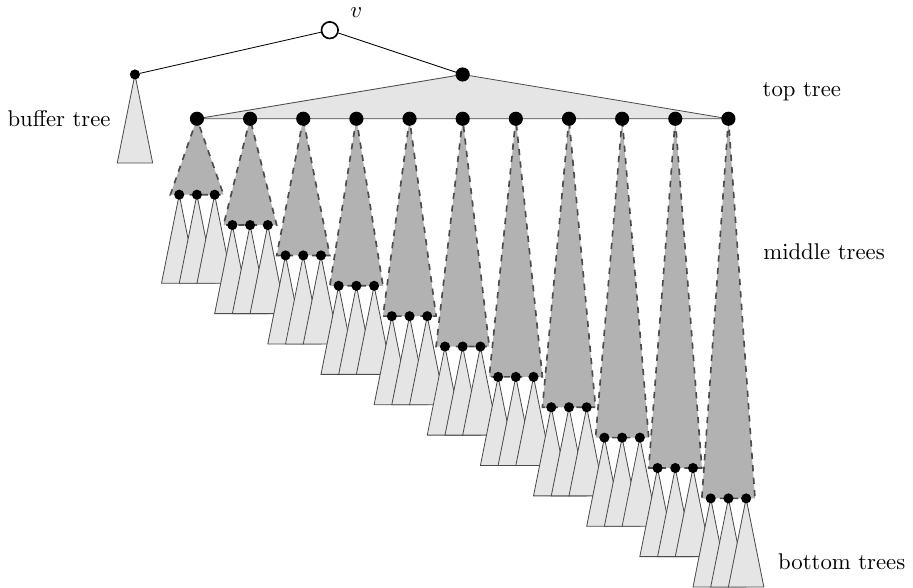}
\caption{An example to a local tree $\L(v)$ associated with $v$.}
\label{figure:local-tree}
\end{figure}

\begin{itemize}
    \item The bottom layer is composed of \emph{bottom trees}, each having at most $2\log^\alpha n$ leaves and 
    height $O(\log\log n)$, $\alpha=O(1)$ to be calculated later.

\item The middle layer is composed of \emph{middle trees} 
such that all bottom tree roots are middle tree leaves.
The \emph{weight} of a  node $x$ in $\L(v)$, 
denoted by $w(x)$, is defined to be the sum of all weights of
$\H$-children of $v$ in the subtree of $x$, and the \emph{rank} of $x$ is defined to be $\mathrm{rank}(x) = \floor{\log w(x)}$.
The weights 
are explicitly maintained only for 
nodes in either bottom or buffer trees.
The middle trees are weight balanced binary trees with respect to $w(\cdot)$.
The algorithm maintains the invariant that there are never 
more than $O(\log n)$ middle tree roots in a local tree.  

\item The top tree\footnote{Not to be confused with the \emph{top tree} dynamic tree data structure of Alstrup, Holm, Lichtenberg, and Thorup~\cite{AlstrupHLT05}.} 
is a mergeable, $O(\log\log n)$-height tree 
whose leaves are middle tree roots.  
Its purpose
is merely to gather up all middle trees 
within a single tree, while increasing the overall height of the local tree by only $O(\log\log n)$.
\end{itemize}

\paragraph{Local tree roots and local tree leaves.}
The root of $\L(v)$ has two children: 
the root of the buffer tree and the root of the top tree.
The root of $\L(v)$ also has a pointer pointing to $v$ in $\H$. 
When a new $\H$-node $v\in \H$ is created, $\L(v)$ is initially empty.

\paragraph{$\H$-node representatives.}
Each $\H$-child $x$ of $v$ is not in $\L(v)$ as such, 
but is present through a \emph{representative} $\ell_x$, 
which is a leaf in $\L(v)$.
We distinguish $x$ from $\ell_x$ because they have 
different characteristics and store different information.

The local tree leaf $\ell_x$ stores a pointer to $x\in\H$, 
the weight of $x$, a parent pointer, approximate counters,
and a bitmap maintaining 
\emph{\underline{local} $(i, t)$-status} of the leaf $\ell_x$,
where the $(i, t)$-bit in the bitmap 
is set to $1$ if and only if
$x$ and $v$ are both $(i,t)$-nodes 
but the fundamental $\H$-shortcut $\scut v x$
is \emph{not} an $(i,t)$-shortcut.
In a quiescent state, this only
occurs when $v$ is an $(i,t)$-branching node or $(i,t)$-root
and $x$ is an $(i,t)$-node.\footnote{Notice 
that if $x$ is an $(i,t)$-branching node but $v$ is not,
then $x\in \H$ has $(i,t)$-status but $\ell_x \in \L(v)$ does 
not have \emph{local} $(i,t)$-status.
This is the main reason for notationally 
distinguishing $x$ from $\ell_x$.}  
However, in the middle
of a \Delete{} operation we may temporarily 
uncover and remove a fundamental $(i,t)$-shortcut 
$\scut v x$,
which can cause $v,x$ to temporarily become $(i,t)$-nodes
and $\ell_x\in \L(v)$ to temporarily become a \emph{local}
$(i,t)$-node.

\paragraph{Local $(i, t)$-trees.}
Consider a local tree $\L(v)$.
The \emph{local $(i,t)$-nodes} comprise
all leaves of $\L(v)$ with local $(i,t)$-status,
as well as those 
internal nodes $z\in\L(v)$ 
satisfying at least one of the following.

\begin{itemize}\compactify
\item $z$ is the root of $\L(v)$, having at least one leaf-descendant 
with local $(i,t)$-status.
\item $z\in \L(v)$ is a bottom tree node, a buffer tree node, or a top tree node having a descendant with local $(i,t)$-status.  (Because of their dual membership,
middle tree roots and leaves are also included in this category.)
\item $z\in \L(v)$ is a middle tree node whose children both have descendants with $(i,t)$-status.  (It is a \emph{local $(i, t)$-branching} node.)
\item $z\in \L(v)$ is a child of a middle tree $(i, t)$-branching node.
\end{itemize}

A \emph{local $(i, t)$-tree} is defined in a similar fashion as the 
$(i, t)$-forest on $\H$; each
$z\in \L(v)$ maintains a bitmap indicating for which $(i,t)$-pairs it is a local $(i,t)$-node.
Whereas the $(i,t)$-forest can have arbitrary branching factor, 
every local $(i, t)$-tree is binary since $\L(v)$ is itself binary. 
Navigating from a local $(i,t)$-node $z$ to its child is straightforward
when $z$ is in a bottom, buffer, or top tree, since the $(i, t)$-bits are stored explicitly at every node in these trees, and these trees are binary.
\emph{Local} $(i,t)$-shortcuts are used for faster navigation in the middle layer; 
these are defined in Section~\ref{subsection:middle-trees}.
Each local $(i,\const{primary})$-tree
node in $\L(v)$
maintains an approximate 
$i$-counter.


In Sections~\ref{sect:bottom-buffer}--\ref{sect:toptrees} we describe the operations of the 
bottom/buffer layer, the middle layer, and the 
top layer in isolation.  In particular, Lemmas~\ref{lemma:buffer-tree-operations},
\ref{lemma:middle-tree-operations},
and 
\ref{lemma:top-tree-operations}
state the \emph{worst case} cost of operations,
without regard to side effects on other layers.
The interaction between the 
layers and the amortization of costs 
is addressed in 
Section~\ref{subsection:local-tree-operations},
Lemma~\ref{lemma:local-tree-operations}.

\subsection{Bottom Trees and the Buffer Tree}\label{sect:bottom-buffer}

The algorithm attaches new $\H$-node representatives only to the buffer tree, while deletions of $\H$-node representatives can take place in both buffer and bottom trees.
The buffer tree can be regarded as a bottom tree under construction.

Each buffer tree and bottom tree has at most $2\log^\alpha n$ local tree leaves, 
where $\alpha$ is a constant to be determined in Section~\ref{section:local-tree-cost-analysis}.
Whenever the buffer tree size exceeds $\log^\alpha n$, 
either from merging two $\H$-nodes or from inserting a new local tree leaf,
the buffer tree becomes mature and is converted to a bottom tree. 
The data structure adds this bottom tree into the bottom layer 
and creates a new empty buffer.  

The buffer and bottom trees are $O(\log\log n)$ height mergeable binary trees.
Each node stores a weight, a vector of approximate counters, 
pointers to its parent and children, and a bitmap
indicating for each $(i,t)$ pair, whether there is a 
local tree leaf in its subtree with local $(i,t)$-status.

\begin{lemma}\label{lemma:buffer-tree-operations}
The buffer tree and bottom trees support the following operations,
with the following worst case time complexities (in parentheses):
\begin{itemize}\compactify
\item Detach a buffer/bottom tree leaf \ $(O((\log\log n)^2))$.
\item Remove local $(i, t)$-status from a buffer/bottom tree leaf \ $(O(\log\log n))$.
\item Given an edge depth $i\in [1,\dmax]$, a buffer/bottom leaf $x$, and a value $q$, \emph{decrease} the approximate $i$-counter at $x$ to $q$ \ $(O(\log\log n))$.
\end{itemize}
In addition, the buffer tree supports the following operations:
\begin{itemize}\compactify
\item Attach a buffer tree leaf \ $(${}$O((\log\log n)^2)${}$)$.
\item Merge two buffer trees of two sibling $\H$-nodes \ $(${}$O((\log\log n)^2)${}$)$.
\item Add local $(i,t)$-status to a buffer tree leaf \ $(O(\log\log n))$.
\item Convert the buffer tree to a bottom tree $(O(1))$.
\item Given an edge depth $i\in [1,\dmax]$, a buffer leaf $x$,
and a value $q$, 
set the approximate 
$i$-counter at $v$ to be $q$ \ $(O(\log\log n))$.
\end{itemize}
\end{lemma}

\begin{proof}
\begin{sloppypar} A buffer tree is implemented by an off-the-shelf mergeable binary tree with
$O(\log\log n)$ worst case time for each attach, detach, and merge 
operation.\footnote{Note that all such off-the-shelf data structures are, in fact, binary \emph{search} trees, but we do not impose any total order on the leaves, nor do we require any operation analogous to binary \emph{search}.}
However, in order to support updates to the vector of
approximate counters, an $O(\log\log n)$ factor overhead 
is applied to each of the operations.
See Lemma~\ref{lemma:approximate-counters-addition}.
Hence the worst case time 
cost for each operation is $O((\log\log n)^2)$.
From these three operations, bottom trees are only subject to \emph{detach}.
Since we only require the height of a bottom tree to be $O(\log\log n)$,
no rebalancing is necessary after detaching a leaf.
In order to obtain correct $\text{rank}(x)$, each attach, detach, and merge also updates the weight of the given buffer/bottom tree root.\end{sloppypar}

To add $(i,t)$-status to a buffer tree leaf $x$, the data structure
traverses up the buffer tree and sets the $(i,t)$-bit to $1$ in all 
ancestors of $x$ in the buffer tree.
To remove $(i, t)$-status from a buffer/bottom leaf $x$, 
the data structure updates the $(i, t)$-bits at each ancestor of $x$.  If a leaf $x$ has local $(i,\const{primary})$ status, 
it carries an approximate $i$-counter.  Such counters can be 
increased or decreased in $O(\log\log n)$ time by updating all ancestors of $x$ in its buffer/bottom tree.
\end{proof}

\begin{remark}\label{remark:bottom-trees}
Observe that \emph{only} the buffer tree can acquire
new leaves, and only buffer tree nodes can 
\emph{gain} local $(i,t)$-status and 
\emph{increase} their approximate $i$-counters.
In particular, this implies that when a bottom tree leaf has to acquire a local $(i, t)$-status, the algorithm removes the leaf from the bottom tree, updates its status and re-inserts the leaf into the buffer tree.
\end{remark}

\subsection{Middle Trees}\label{subsection:middle-trees}

All bottom tree roots are middle tree leaves.
Middle trees respond to three types of updates at their 
leaves: a leaf losing $(i,t)$-status, decreasing its 
approximate $i$-counter, or decreasing its weight.
Middle trees are maintained as weight-balanced binary trees
satisfying Invariant~\ref{invariant:middlerank}.

\begin{invariant}\label{invariant:middlerank}
If $x$ is a middle tree leaf/bottom tree root it maintains $w(x)$ and
$\mathrm{rank}(x)=\floor{\log w(x)}$.
If $x$ is an internal
middle tree node
it maintains only
$\mathrm{rank}(x)$,
and if $x$ has children 
$x_L,x_R$ then
$\mathrm{rank}(x_L) = \mathrm{rank}(x_R) = \mathrm{rank}(x) - 1$.
\end{invariant}

The operations described in Lemma~\ref{lemma:middle-tree-operations}
specifically maintain Invariant~\ref{invariant:middlerank}.
As a consequence of Invariant~\ref{invariant:middlerank},
the path from any middle tree leaf $x_B$ (bottom tree root) to the corresponding middle tree root $x_M$ has length $\log(\f{w(x_M)}{w(x_B)})+O(1)$. 
This property is used to bound the number of local tree nodes traversed
when walking from any $\H$-node to its $\H$-parent $v$ via the local tree $\L(v)$.
In accordance with Invariant~\ref{invariant:middlerank},
two middle tree roots with the same rank may 
\emph{join}, and become
children of a new
middle node parent.

\paragraph{Local Shortcuts.}

Each of the middle trees maintains a \emph{local shortcut infrastructure} in much the same way that shortcuts are maintained in $\H$.
Let $u$ and $v$ be two nodes in the same middle tree such that $u$ is a proper ancestor of $v$. Then $\scut uv$ is an eligible local shortcut if and only if for every 
internal node $x$ on the path $P_{uv}$,
\[
\LSBIndex(\mathrm{rank}(x)+1) < \min\Big(\LSBIndex(\mathrm{rank}(u)+1),\; \LSBIndex(\mathrm{rank}(v)+1)\Big).
\]
Notice that the $\H$-shortcuts are defined from the depths of $\H$-nodes 
which increase along the path from an $\H$-root to
an $\H$-leaf.  In contrast, in middle trees the ranks of middle tree 
nodes decrease on the path from a middle tree root to a middle tree 
leaf.  The definition of \emph{power} is symmetric between $u$ and $v$,
so the increasing/decreasing direction here 
does not matter. Local shortcuts have the same properties as $\H$-shortcuts: they are non-crossing; all eligible local shortcuts
naturally form a poset $\succeq$, and the maximal elements (w.r.t.~$\succeq$) 
among shortcuts on a middle tree path
$P_{uv}$ form a path $\ShortCuts(u,v)$ with length $O(\log\log n)$.
A local shortcut with power $0$ is called a \emph{trivial shortcut},
which coincides with a middle tree edge from a parent to one of its children.

Invariant~\ref{invariant:local-it-trees}
is a local tree analogue of Invariant~\ref{invariant:it-shortcuts}.

\begin{invariant}\label{invariant:local-it-trees}
Let $u$ be a single-child local $(i, t)$-node and let $v$ be the local 
$(i, t)$-child of $u$. Then the local $(i, t)$-shortcuts on $P_{uv}$ that 
are stored by the data structure form a path connecting $u$ and $v$.
\end{invariant}

Lemma~\ref{lemma:middle-tree-operations} lists 
the middle tree operations.

\begin{lemma}\label{lemma:middle-tree-operations}
The data structure supports the following operations 
on a collection of middle trees, maintaining Invariants~\ref{invariant:middlerank}
and~\ref{invariant:local-it-trees},
with the following worst case 
time complexities (in parentheses):
\begin{itemize}\compactify
\item Reduce the weight of a middle tree leaf \ $(O(\log n\log\log n))$.
\item Remove $(i,t)$-status from a middle tree leaf \ $(O(\log n))$.
\item Given an edge depth $i\in [1, \dmax]$ and a middle tree leaf $x_B$, 
update the approximate $i$-counter at $x_B$
\ $(O(\log n))$.
\item Given a newly created bottom tree root, create a new middle tree leaf \ $(O(1))$.
\item Join two middle trees with the same rank \ $(O(\log\log n))$.
\end{itemize}
\end{lemma}

\begin{proof} We address each operation in turn.
\paragraph{Reducing ranks.}
When the weight of a middle tree leaf $x_B$ is reduced (because its bottom
tree suffered enough leaf deletions) it may cause a discrete
reduction in its rank, which violates Invariant~\ref{invariant:middlerank}.
If so, we destroy all middle tree nodes
that are strict ancestors of $x_B$.
We first uncover all local shortcuts touching the path from $x_B$
to its middle tree root $x_M$. 
This procedure is the same as the uncovering procedure described in Section~\ref{section:uncover-a-path}.
In order to avoid redundancy, we do not provide details here.
This costs $O(\log n\log\log n)$ time,
and increases the number of middle trees in the collection. 
(Each new middle tree root is inserted into the top tree.)

\paragraph{Removing $(i, t)$-status.}
Similar to the $(i, t)$-forests, in the local $(i, t)$-tree the middle tree edges between a local $(i, t)$-branching node $x$ and its $(i, t)$-children are \emph{not} considered 
to be trivial $(i, t)$-shortcuts.
To remove $(i,t)$-status from a bottom tree root/middle tree leaf $x_B$, 
we follow local upward $(i, t)$-shortcuts to find the one-child $(i, t)$-node ancestor 
$x'$ of $x_B$.  If $x'=x_M$ is the middle tree root of $x_B$ then we remove
$(i,t)$-status from $x_M$ (triggering an update to the top tree; see Lemma~\ref{lemma:top-tree-operations}).
Otherwise, the parent of $x'$, $x''$ is an $(i,t)$-branching node.
We remove $(i,t)$-status from $x'$ and all shortcuts from $x_B$ to $x'$,
then add a trivial $(i,t)$-shortcut from $x''$ to the sibling of $x'$.
This may cause $x''$ and/or the sibling of $x'$ to lose $(i,t)$-status.
Since the middle trees are weight balanced, removing $(i, t)$-status from a middle tree leaf costs worst case $O(\log n)$ time.

\paragraph{Update an approximate $i$-counter.} 
If the approximate $i$-counter at $x_B$ changes it invalidates the approximate $i$-counters at all ancestors on the path from $x_B$ to its middle tree root $x_M$.
Each can be updated
in $O(1)$ time
(Lemma~\ref{lemma:approximate-counters-addition}),
for a total of 
$O(\log n)$ time.

\paragraph{Create a new middle tree leaf.}
The buffer tree root maintains its weight
and approximate $i$-counters.  Thus, when
the buffer is converted to a bottom tree,
its root (the new middle tree leaf) can
be inserted into the middle tree collection
in $O(1)$ time.
(As a new middle tree root, it is also inserted 
as a leaf in the top tree;
this is accounted for in Lemma~\ref{lemma:top-tree-operations}.)

\paragraph{Joining middle trees.}
To join roots $x_L,x_R$, we create a new middle
tree parent $x$ and compute its approximate
$i$-counters in $O(\log\log n)$ time
(Lemma~\ref{lemma:approximate-counters-addition})
by adding the vectors at $x_L,x_R$.
We set the bitmap of $x$ to be the 
bitwise OR of bitmaps stored in $x_L$ and $x_R$.  
In order to maintain Invariant~\ref{invariant:local-it-trees},
the data structure adds trivial $(i,t)$-shortcuts whenever $x$ 
has an $(i, t)$-bit set to $1$ and
exactly one of $x_L$ or $x_R$ has its $(i,t)$-bit set to $1$. 
This is done in $O(1)$ time using bitwise operations.
\end{proof}

\subsection{Top Trees}\label{sect:toptrees}

The top tree is an $O(\log\log n)$-height mergeable binary tree.
All middle tree roots are top tree leaves.
As a consequence of the middle tree reduction procedure described below, 
each top tree has at most $4\log n$ top tree leaves. Each top tree node 
$x$ maintains pointers to its parent and children,
approximate counters, and a bitmap of $(i, t)$ pairs indicating whether a 
local tree leaf with $(i, t)$-status appears in the subtree of $x$.


Whenever we invoke the Middle Tree Reduction procedure, the entire top tree is rebuilt.

\paragraph{Middle Tree Reduction.}

There are at most $\log n$ possible ranks for a middle tree node.
If there are at least $2\log n$ middle trees in a local tree, 
then the data structure invokes the
\emph{middle tree reduction} procedure:
(1) destroy the top tree,
(2) repeatedly take two middle tree roots with the same rank, 
and \emph{join} the corresponding middle trees,
then (3) rebuild the top tree on the remaining middle tree roots.
The size of the top tree can be as large as $4\log n$ immediately 
after merging the top trees of two sibling $\H$-nodes.

\begin{lemma}\label{lemma:top-tree-operations}
The following operations are supported on the top trees,
with the following worst case time complexities (in parentheses):
\begin{itemize}\compactify
\item Insert a middle tree root into the top tree \ $(O((\log\log n)^2))$.
\item Remove a middle tree root from the top tree $(O((\log\log n)^2))$.
\item Merge the top trees of two local trees $(O((\log\log n)^2))$.
\item Given the list of all middle tree roots that are leaves of the top tree, 
    perform a middle tree reduction and rebuild the top tree $(O(\log n\log\log n))$.
\item Update approximate counters along the path from the given top 
    tree leaf $x_M$ to the top tree root $x_T$\ $(O((\log\log n)^2))$.
\item Remove $(i, t)$-status of a given middle tree root $(O(\log\log n))$.
\end{itemize}
\end{lemma}

\begin{proof}
The top tree implements leaf-insertion, leaf-deletion, and merging the two top trees in $O((\log\log n)^2)$ time.
Rebuilding the top tree costs time proportional to the number of middle trees (which is $O(\log n)$), 
multiplied by $O(\log\log n)$ for updating approximate counters at each node.
\end{proof}

\subsection{Maintaining Precision when Sampling}\label{sect:precision-sampling}

Recall from Invariant~\ref{invariant:approximate-counters}
that $H(x^j)$ was defined as the maximum possible height of any arithmetic
formula tree (summing up approximate counters) with $x^j \in \hat{V}_j$ at the root.
We define a similar function for nodes inside local trees.
If $v\in \L(x^j)$, define $H_{\ell}(v)$ as:
\begin{align*}
H_{\ell}(v) &= (\dmax-j-1)\cdot O(\log\log n) + \lfloor\log (w(v))\rfloor+ h_{\mathrm{bot/top}}(v),
\intertext{where $h_{\mathrm{bot/top}}(v) = O(\log\log n)$ is precisely the maximum
number of top, bottom, and buffer trees nodes on a path from $v$ to a leaf of $\L(x^j)$.
With this definition, it is straightforward to see that when $v_L,v_R$ are the children of $v$, that}
H_{\ell}(v) &= \max(H_{\ell}(v_L), H_{\ell}(v_R)) + 1.
\end{align*}

We first prove that all nodes in a local tree have the correct precision in terms of $H_{\ell}(v)$.

\paragraph{Maintaining Invariant~\ref{invariant:approximate-counters}.}
Invariant~\ref{invariant:approximate-counters} constrains
the accuracy of approximate $i$-counters in terms of $H(\cdot)$.
We prove that Invariant~\ref{invariant:approximate-counters} is maintained, by analyzing the accuracy
of approximate $i$-counters inside the local trees in terms of $H_\ell(\cdot)$.

Fix an edge depth $i$ and a 
local $(i, \const{primary})$-branching node $x\in \H$. 
Assume, inductively, that
every local $(i, \const{primary})$-leaf $\ell_y$ in $\L(x)$ representing the $(i, \const{primary})$-child $y$ of $x$ satisfies 
Invariant~\ref{invariant:approximate-counters}
and $\hat C_i(\ell_y) = \hat C_i(y)$.  
We now prove that Invariant~\ref{invariant:approximate-counters} 
is satisfied at $x$ as well.
Fix a local $(i, \const{primary})$-branching 
node $v\in \L(x)$, and let $v_L,v_R$ be its $(i, \const{primary})$-children,
so $\hat C_i(v) = \hat C_i(v_L)\boxplus \hat C_i(v_R)$.
By induction on $H_{\ell}(v)$, \begin{align*}
\hat C_i(v) &\ge \left(1 -\log^{-\beta} n\right) (\hat C_i(v_L) + \hat C_i(v_R))\\
&\ge \left(1 - \log^{-\beta} n\right)^{\max(H_{\ell}(v_L), H_{\ell}(v_R))+1} (C_i(v_L) + C_i(v_R)) \\
&\ge \left(1 - \log^{-\beta} n\right)^{H_{\ell}(v)} C_i(v).
\end{align*}
On the other hand, by the definition of $\boxplus$ and the inductive hypothesis,
$
\hat C_i(v) \le \hat C_i(v_L) + \hat C_i(v_R)
\le C_i(v_L) + C_i(v_R) = C_i(v)
$.
In addition, for any single-child \emph{local} $(i, \const{primary})$-node $u$, 
the approximate $i$-counter $\hat C_i(u)$ is identical to
the approximate $i$-counter value from its local $(i, \const{primary})$-child $v$.
Since $H_{\ell}(u) \ge H_{\ell}(v)$, the precision requirement still holds.

Let $z$ be the root of $\L(x)$.  
Then $H_{\ell}(z) \le H(x)$ 
(provided the leading constants hidden by the
$O(\log\log n)$ factors in
the definitions of $H_\ell$ and $H$ are set correctly)
and 
Invariant~\ref{invariant:approximate-counters} holds for $x$ as well.

\subsubsection{Sample an \texorpdfstring{$(i, \const{primary})$}{(i,primary)}-child}

This section 
shows that an $(i,\const{primary})$-child
can be efficiently sampled approximately 
proportional to its approximate $i$-counter.

\begin{lemma}\label{lemma:local-tree-sampling}
Given an $(i,\const{primary})$-branching $\H$-node $u^{j-1}$,
we can sample an
$(i,\const{primary})$-child $u^j$ with probability
at most
\[
\frac{\hat C_i(u^j)}{\hat C_i(u^{j-1})}\cdot (1-\log^{-\beta} n)^{-(H(u^{j-1})-H(u^j))}
\]
in time
$O(H(u^{j-1})-H(u^j))$.
Recall that $\beta=2$ is constant.
\end{lemma}

The data structure begins at the root of $\L(u^{j-1})$, which is a local $(i,\const{primary})$ node,
and walks down to a descendant leaf in $\L(u^{j-1})$ as follows.
If we are at a local $(i, \const{primary})$-branching node $x$,
let $x_L$ and $x_R$ be its local $(i, \const{primary})$-children.
We randomly choose a child with probability proportional to $\hat C(x_L)$ and $\hat C(x_R)$, respectively,
and navigate
downward using local $(i, \const{primary})$-shortcuts to find the next local
$(i, \const{primary})$-branching child.  The process terminates when we reach a local leaf $\ell_{u^j}$ (representing $u^j$)
with local $(i, \const{primary})$-status.

Let $x_0$ be the root of $\L(u^{j-1})$, and the sequence $x_1, x_2, \ldots, x_{k}$ be all
local $(i, \const{primary})$-branching nodes which are on the path between $x_0$ and $x_{k+1}=\ell_{u^j}$.
For all $t\in[0,k]$, let $x'_t$ and $x''_t$ be the two local $(i, \const{primary})$-children of
$x_t$, with $x'_t$ being the ancestor of $x_{t+1}$.\footnote{In the case of $t=0$, if the root is not
a local $(i,\const{primary})$-branching node then we take $\hat C_i(t_0'')$ to be zero.}
Then we have for all $t \in [0,k]$,  $\hat C_i(x'_{t})= \hat C_i(x_{t+1})$,
and the probability that a particular $(i, \const{primary})$-child $u^j$ is sampled is at most
\begin{align*}
\prod_{t=0}^{k} \frac{\hat C_i(x'_{t})}{\hat C_i(x'_{t}) + \hat C_i(x''_{t})}
   &\le \prod_{t=0}^k \left[\frac{\hat C_i(x'_t)}{\hat C_i(x'_t) \boxplus \hat C_i(x''_t)} (1-\log^{-\beta} n)^{-1}\right]
\\
   &= \prod_{t=0}^k \left[\frac{\hat C_i(x_{t+1})}{\hat C_i(x_t)} (1-\log^{-\beta} n)^{-1}\right]
\\
   &= \f{\hat C_i(x_{k+1})}{\hat C_i(x_0)} (1-\log^{-\beta} n)^{-(k+1)}\\
   &\le  \f{\hat C_i(u^j)}{\hat C_i(u^{j-1})} (1-\log^{-\beta} n)^{-(H(u^{j-1}) - H(u^j))}.
\end{align*}

\subsection{Local Tree Operations}\label{subsection:local-tree-operations}

Lemmas~\ref{lemma:buffer-tree-operations}, \ref{lemma:middle-tree-operations},
and \ref{lemma:top-tree-operations} established worst case bounds on the elementary 
operations inside buffer, bottom, middle, and top trees.  
Lemma~\ref{lemma:local-tree-operations} 
lists the operations supported by the local tree as a whole,
and analyzes their \emph{amortized} time costs.

\begin{lemma}\label{lemma:local-tree-operations}\label{section:local-tree-cost-analysis}
   There exists a data structure that supports the following operations between an $\H$-node $v$ and its $\H$-children, with the following amortized time complexities (in parentheses):
   \begin{itemize}\compactify
   \item Attach a new $\H$-child $x$ to $v$ $\left(O((\log\log n)^2)\right)$.
   \item Detach an $\H$-child $x$ of $v$ $\left(O((\log\log n)^2)\right)$.
   \item Let $S$ be a set of $\H$-children of $v$. 
   Merge $S$ into a single node $x'$, which is a new 
   $\H$-child of $v$.
                                    $\left(O(|S|(\log\log n)^2)\right)$.
   \item Given a non-root $\H$-node $x$, return its $\H$-parent $v$  $\left(O(H(v) - H(x))\right)$.
   \item Given an $\H$-node $v$, enumerate all $\H$-children of 
        $v$ with $(i,t)$-status $(${}$O(\log\log n)$ per child$)$
        or decide if $v$ has a unique $(i,t)$-child $\left(O(\log\log n)\right)$.
   \item Given an $\H$-node $x$ and a bit vector $b$, add local $(i,t)$-status to the local tree leaf $\ell_x$ for all $(i, t)$ pairs indicated by $b$ $(O((\log\log n)^2))$.
   \item Given an $\H$-node $x$ and a bit vector $b$, delete local $(i,t)$-status to the local tree leaf $\ell_x$ for all $(i, t)$ pairs indicated by $b$ $(O(\log\log n))$.
   \item Given an $(i, \const{primary})$-branching node $v$, sample 
        an $(i, \const{primary})$-child $x$ with probability at most
   \[
   \f{\hat C_i(x)}{\hat C_i(v)}\cdot (1-\log ^{-2} n))^{-(H(v)-H(x))}.  \hspace{2\ThCScm} \mbox{(Time: $O(H(v)-H(x))$)}
   \]
   \item Increase the $i$-counter of an $\H$-child $x$ of $v$. $\left(O((\log\log n)^2)\right)$.
   \item Decrease the $i$-counter of an $\H$-child $x$ of $v$. $\left(O(\log\log n)\right)$.
   \end{itemize}
   \end{lemma}

\begin{proof}
We will address these operations one by one.
The \emph{sampling} operation 
was already established in 
Lemma~\ref{lemma:local-tree-sampling}, Section~\ref{sect:precision-sampling}.
We first describe the worst case cost of operations, and at the end of the proof we analyze the amortized cost.

\paragraph{Attach a new $\H$-child $x$.} The local tree leaf $\ell_x$ is created 
and inserted into the buffer tree of $\L(v)$.
By Lemma~\ref{lemma:buffer-tree-operations} the worst case cost of this operation 
is $O((\log\log n)^2)$.
If the buffer tree is full, the algorithm converts the buffer tree into a bottom tree which costs $O(1)$ by Lemma~\ref{lemma:buffer-tree-operations},
then creates a middle tree leaf which costs $O(1)$ time by Lemma~\ref{lemma:middle-tree-operations},
and possibly rebuilds the top tree which costs $O((\log\log n)^2)$ time by Lemma~\ref{lemma:top-tree-operations}.

\paragraph{Detach an $\H$-child $x$.} 
The local tree representitive $\ell_x$ is first removed from either the corresponding buffer tree or bottom tree, costing   $O((\log\log n)^2)$ time by Lemma~\ref{lemma:buffer-tree-operations}.
In the case where the corresponding buffer/bottom tree root loses its local $(i, t)$-status, or in the case where the approximate $i$-counters are reduced, the entire ancestor path is updated in $O(\log n)$ time by Lemmas~\ref{lemma:middle-tree-operations} and \ref{lemma:top-tree-operations}.
In the case where the rank of the corresponding buffer/bottom tree root is reduced (costing $O(\log n\log\log n)$ time by Lemma~\ref{lemma:middle-tree-operations}), the middle tree reduction may be then invoked, costing $O(\log n\log\log n)$ time by Lemma~\ref{lemma:top-tree-operations}.
Notice that these $\log n$ worst case terms are amortized away at the end of this proof.

\paragraph{Merge sibling $\H$-nodes.}
For each node $x\in S$, we detach the representative $\ell_x$ in worst case $O((\log\log n)^2)$ time by Lemma~\ref{lemma:buffer-tree-operations},
and then merge the local trees of $S$-nodes 
in pairs until there is only one node left.
To merge local trees we first merge their buffer trees
(costing $O((\log\log n)^2)$ time by Lemma~\ref{lemma:buffer-tree-operations}),
then merge their top trees (costing $O((\log\log n)^2)$ by Lemma~\ref{lemma:top-tree-operations}).
Then, if the merged buffer tree is full, make it a bottom tree
(costing $O(1)$ by Lemma~\ref{lemma:middle-tree-operations}).
Finally, if the top tree is full,
call the middle tree reduction procedure
(costing $O(\log n\log\log n)$ time by Lemma~\ref{lemma:top-tree-operations}).
Let $x'$ be the node resulting from merging $S$. 
Its representative $\ell_{x'}$ is created, 
having weight that is the sum of the weights of the $S$-nodes, and 
reattached to the buffer tree of $\L(v)$,
in $O((\log\log n)^2)$ time.  

\paragraph{Return the $\H$-parent.} Let $x\in \H$ be a non-root $\H$-node.
We find the local representative $\ell_x\in \L(v)$, then walk up to the 
root of $\L(v)$ and return ``$v$.''  The number of buffer, bottom, and top
tree nodes traversed is $O(\log\log n)$ and the number of middle tree nodes
traversed is $\log(\f{w(v)}{w(x)})+O(1)$.  By the definition of $H(\cdot)$,
this is bounded by $H(v)-H(x)$.

\paragraph{Enumerate all local tree leaves with local $(i, t)$-status.} 
We perform a depth-first search from the local tree root. 
When the search encounters a top tree, a bottom tree, 
or a buffer tree node, the bitmaps in its children indicate whether 
the child contains a local tree leaf with an $(i, t)$-status or not. 
When the search encounters a middle tree node $x$, we examine
$\Dsc[\Dptr[i,t]]$ to see whether there is a downward 
local $(i, t)$-shortcut 
leaving $x$ or not. If there is no downward local $(i, t)$-shortcut leaving $x$,
then $x$ is a local $(i, t)$-branching node and 
the search proceeds recursively 
on both children. Otherwise, the search navigates downward from a local $(i, t)$-non-branching node $x$ to its highest descendant $(i, t)$-node $x'$. 
The same navigation algorithm described in Section~\ref{section:the-H-shortcut-data-structure} is performed so that after the navigation all $(i, t)$-shortcuts on the path $P_{uv}$ are exactly local shortcuts 
in $\ShortCuts(x,x')$.  (The cost of adding these shortcuts
inside a local tree is amortized differently than 
how adding $\H$-shortcuts are amortized; see below.)
All local tree leaves with $(i,t)$-status are enumerated 
in $O(\log\log n)$ amortized time per leaf.

To test whether there is a \emph{unique} leaf with $(i, t)$-status, 
we navigate downward from the root $z$ of $\L(v)$, 
following local $(i,t)$-shortcuts
until reaching either a local $(i,t)$-leaf $x$ (necessarily unique)
or a local $(i,t)$-branching node $x$ (indicating non-uniqueness).
We then cover local $(i,t)$-shortcuts on the path from $z$ to $x$
as long as it is possible. As shown below, the
amortized cost
of this operation is $O(\log\log n)$.

\paragraph{Add local $(i, t)$-status to a local tree leaf.}
Recall that the \emph{only} leaves that
may gain local $(i,t)$-status are buffer tree leaves (Remark~\ref{remark:bottom-trees}).  
Let $\ell_x$ be the local tree leaf gaining $(i,t)$-status.
If $\ell_x$ resides in a bottom tree we detach it, reattach 
it to the buffer tree, and add $(i,t)$-status there.
From the description above (the first two operations listed on Lemma~\ref{lemma:local-tree-operations}), the time cost is amortized $O((\log\log n)^2)$ due to the potential detach/attach operation.

\paragraph{Delete local $(i, t)$-status from a local tree leaf.}
The algorithm first removes the local $(i, t)$-status from the local tree representative $\ell_x$, costing $O(\log\log n)$ time by Lemma~\ref{lemma:buffer-tree-operations}. If the corresponding bottom tree root loses some local $(i, t)$-status, the algorithm removes local $(i, t)$-status from the corresponding middle tree leaf, costing $O(\log n)$ time by Lemma~\ref{lemma:middle-tree-operations}. The $\log n$ worst case time cost will be amortized as described below.

\paragraph{Increase the approximate $i$-counter of an $\H$-child $x$ of $v$.}
Let $\ell_x$ be the local tree leaf that represents $x$. The algorithm detaches $\ell_x$, changes the $i$-counter value and then attaches $\ell_x$ to the buffer tree. The operations costs $O((\log\log n)^2)$ time from the first two operations listed on Lemma~\ref{lemma:local-tree-operations}.

\paragraph{Decrease the approximate $i$-counter of an $\H$-child $x$ of $v$.}
The algorithm sets the approximate $i$-counter at $x$ to the new value, costing $O(\log\log n)$ time by Lemma~\ref{lemma:buffer-tree-operations}. If $\ell_x$ is in a bottom tree and the corresponding bottom tree root has its approximate $i$-counter value changed, the algorithm invokes Lemma~\ref{lemma:middle-tree-operations} and updates the approximate $i$-counter at the corresponding middle tree leaf, costing $O(\log n)$ worst case time and again can be amortized away by the description below.

\paragraph{Amortized Cost Analysis.}
We use a credit system.  
Every buffer tree leaf carries $\Theta(1)$ credits
and every middle tree root carries $\Theta(\log\log n)$ credits.
Suppose the buffer tree matures and becomes a bottom tree, say with root $x_B$.  
At this moment the tree has $\Theta(\log^\alpha n)$ credits, which will pay for all future costs associated
with updating the middle and top tree ancestors of $x_B$.
The following three types of events change
the information stored at $x_B$.
\begin{enumerate}\compactify
\item $x_B$ changes rank. Since the bottom tree is only subject to \emph{detach} operations (see Remark~\ref{remark:bottom-trees}), 
its weight is non-increasing.  
Therefore, this happens at most $\log n$ times.
\item $x_B$ loses $(i,t)$-status.  
It can never regain $(i,t)$-status (Remark~\ref{remark:bottom-trees}), 
so this happens 
at most $3\dmax = O(\log n)$ times.
\item $x_B$'s approximate $i$-counter changes.  
The approximate counters are non-increasing, 
and each such counter can take on $O(\log^{\beta+1} n)$
different 
values (Section~\ref{section:the-sampling-procedure}).  
Since there are $\log n$ possible values for $i$, 
the total number of counter changes
is $O(\log^{\beta+2} n)$.
\end{enumerate}
Each of the above events requires that we update or delete the entire path form $x_B$ 
to the
local tree root, which can have length $\Theta(\log n)$.
Events of type (1) take $O(\log n\log\log n)$ time to destroy the path
and reinsert new middle tree roots into the top tree, 
each with $O(\log\log n)$ credits.
Events of type (2) and (3) take $O(\log n)$ time to update the $(i,t)$-status
or approximate $i$-counters of all ancestors of $x_B$.
Since $\beta=2$, the total cost for events of type (3) is the bottleneck.  They take
$O(\log^{\beta+3} n)$ time over the life of the bottom tree.
We set $\alpha \ge \beta + 3=5$, so the credits of a bottom tree suffice 
to pay for all costs incurred over the lifetime of the bottom tree.

A middle tree reduction procedure is invoked if the leaf set $S$ of the top tree has
size $|S|\ge 2\log n$.  Thus, it begins with at least $2\log n\cdot O(\log\log n)$
credits and ends with at most $\log n \cdot O(\log\log n)$ credits, which
pays for rebuilding the top tree in $O(\log n\log\log n)$ time (Lemma~\ref{lemma:top-tree-operations}).

The number of shortcuts removed is bounded by the number created,
so it suffices to account for the cost of creating shortcuts.
Local shortcuts are created in two ways:
(i) in response to the creation of a middle tree node (joining two middle trees),
 and
(ii) lazy covering.
The cost of case (i) is ultimately paid for by the deletion of that middle tree node,
which in turn is paid for by the bottom tree that triggered the deletion.
The cost of case (ii) is attributed to the removal of $(i, t)$-status
at some corresponding middle tree leaf with an $(i, t)$-status,
which is accounted for in the cost of type (2) events.
\end{proof}

\section{Loose Ends}
\label{section:general-operations}

Some of the operations on the hierarchy $\H$ required the 
definition of $(i,t)$-forests (Section~\ref{section:shortcut-infrastructure-on-H}) 
and local trees (Section~\ref{section:local-trees}) and could not
be described until now.  In Section~\ref{section:the-batch-sampling-test}
we analyze the cost of searching for a replacement edge using the
two-stage batch sampling test sketched in Section~\ref{section:finding-a-replacement-edge}.
In Section~\ref{section:proof-lemma-induced-forests} 
we explain how to 
maintain $(i,t)$-forests (Invariant~\ref{invariant:it-shortcuts}),
and in particular, how to efficiently merge two such forests
when doing batch promotions/upgrades.
In Section~\ref{section:proof-of-lemma-counters} 
we prove Lemma~\ref{lemma-counters} concerning approximate counters,
and show that
Invariant~\ref{invariant:approximate-counters} is maintained.

\subsection{The Batch Sampling Test}
\label{section:the-batch-sampling-test}

Recall from the deletion algorithm of Section~\ref{section:establishing-two-components} that $\bf u^i$ is
the new $\H$-node resulting from merging a set of siblings.
In this section we show how the data structure performs the batch sampling test among $i$-primary endpoints
touching $\bf u^i$.
Let $p$ and $s$ be the number of $i$-primary and $i$-secondary endpoints touching $\bf u^i$,
and let $\hat p = \hat C_i({\bf u^i})$ be a $(1+o(1))$-approximation of $p$.
(Retrieving $\hat p$ is Operation~\ref{opr:counter}
from Lemma~\ref{lemma-main}.)

\paragraph{Single Sample Test.}
To $(1+o(1))$-uniformly sample one $i$-primary endpoint touching $\bf u^i$, the data structure sets
$x={\bf u^i}$ and iteratively performs the following step.
Base case: If $x$ is an $(i, \const{primary})$-leaf, then return an $i$-primary endpoint at $x$ uniformly at random.
General case: If $x$ is an $(i, \const{primary})$-branching node,
then use $\L(x)$ to sample an $(i, \const{primary})$-child $x'$ of $x$
with probability at most $\f{\hat C_i(x')}{\hat C_i(x)}(1-\log^{-\beta} n)^{-(H(x)-H(x'))}$ (Lemma \ref{lemma:local-tree-sampling}). 
If $x'$ is an $(i,\const{primary})$-branching node or leaf, we set $x=x'$ and repeat.
Otherwise, we repeatedly follow the
$(i, \const{primary})$-shortcuts leaving $x'$ to its 
$(i, \const{primary})$-child $x''$, set $x=x''$, and repeat (Lemma \ref{lemma:lazy-cover-operation}).

Notice that with accurate counters this procedure picks a perfectly uniformly random $i$-primary endpoint.
Let $\Endpoint{x}{\{x, y\}}$ be the sampled endpoint and
$x_0={\bf u^i}, x_1, \ldots, x_k = x$ be the sequence of $(i, \const{primary})$-branching nodes
on the path in $\H$ from ${\bf u^i}$ to $x$.
Then the probability that $\Endpoint{x}{\{x, y\}}$ is sampled is at most
\begin{align*}
&
\frac{1}{C_i(x)} \prod_{j=0}^{k-1} \left[\frac{\hat C_i(x_{j+1})}{\hat C_i(x_{j})}(1-\log^{-\beta} n)^{-(H(x_{j}) - H(x_{j+1}))}\right]
\\
&= \frac{1}{C_i(x)} \left[ \prod_{j=0}^{k-1} \frac{\hat C_i(x_{j+1})}{\hat C_i(x_{j})} \right]
(1 - \log^{-\beta} n)^{-H(x_0)}												&\\
&= \frac{1}{C_i(x)} \frac{\hat C_i(x)}{\hat C_i({\bf u^i})} (1 - \log^{-\beta} n)^{-H(x_0)}		 \\
&= (1-o(1))\frac{1}{\hat C_i({\bf u^i})} (1 - \log^{-\beta} n)^{-O(\log n \log\log n)}			& (H(x_0) = O(\log n\log\log n))\\
&\le (1+o(1))\frac{1}{C_i({\bf u^i})}.											& (\beta = 2)
\end{align*}

The $1/C_i(x)$ factor reflects the fact that once we reach the $\H$-leaf $x$, an endpoint touching $x$
is selected (exactly) uniformly at random, without any approximation.
To check whether $\{x, y\}$ is a replacement edge or not, it suffices to check whether $y^i = {\bf u^i}$.
This can be accomplished by starting from $y$ and repeatedly accessing the $\H$-parent until $y^i$ is reached.
Using local trees, the cost of computing $\H$-parents along a path telescopes to $H(y^i) = O(\log n\log\log n)$.

\paragraph{The Preprocessing Method.}
Another way to sample $i$-primary endpoints is to first enumerate \ul{all} $i$-primary endpoints and \ul{all}
$i$-secondary endpoints touching $\bf u^i$ in $O((p+s)\log\log n)$ time, mark all enumerated endpoints and store
all $i$-primary endpoints in an array.
Then the data structure samples an $i$-primary endpoint uniformly at random from all enumerated $i$-primary
endpoints and checks whether the other endpoint is marked in $O(1)$ time.

\paragraph{Batch Sampling Test on $k$ Samples.}
The data structure runs the two sampling methods in parallel and halts when the first finishes.
Thus, the time to sample $k$ $(i,\const{primary})$-endpoints is
\[
O\left(\min\Big\{(p + s)\log\log n + k, \; k\log n\log\log n\Big\}\right).
\]

\subsubsection{Cost Analysis for Sampling Procedure}\label{section:cost-analysis-sampling-procedure}

As described in Section~\ref{section:finding-a-replacement-edge}, 
the sampling procedure either returns a replacement edge, or invokes the enumeration procedure.
Roughly speaking, if no replacement edge is found, 
the cost is charged to
either upgrades of $(i, \const{secondary})$ endpoints or promotions to $(i, \const{primary})$ endpoints.
If any replacement edge is found, the data structure is willing to pay $O(\log n (\log\log n)^2)$ cost because this happens at most once per $\Delete{}$ operation.

If the enumeration procedure is invoked, the data structure upgrades all enumerated $i$-secondary endpoints touching
$\bf u^i$ to $i$-primary endpoints,
and then all $i$-primary endpoints 
touching $\bf u^i$ associated with non-replacement edges are promoted to
$(i+1)$-secondary endpoints.
The first batch sampling test, when $k=O(\log\log \hat{p})=O(\log\log p)$, costs
\begin{align*}
T_1 &= O(\min((p+s)\log\log n,\; \log\log p\cdot \log n\log\log n)).\\
\intertext{The second batch sampling test ($k=O(\log p)$), if invoked, costs}
T_2 &= O(\min((p+s)\log\log n,\; \log p\cdot \log n\log\log n)).\\
\intertext{The enumeration procedure, if invoked, costs}
T_E &= O((p+s)(\log\log n)^2).
\end{align*}

Let $\rho$ be the fraction of $i$-primary endpoints touching $\bf u^i$ associated with replacement edges
before the execution of the sampling procedure.
The rest of the analysis is separated into two cases:
\begin{description}\compactify
\item[Case 1.] If $\rho \ge 3/4$, the probability that the first batch sampling test returns with a replacement edge is at least
$1-(1/4+o(1))^{O(\log\log p)} > 1 - 1/\log p$.\footnote{It is $1/4+o(1)$ because the sampling procedure is only $(1+o(1))$-approximate.}
The second batch sampling test, if invoked, returns a replacement edge if at least half the $O(\log p)$ endpoints sampled belong to replacement
edges.  By a standard Chernoff bound, the probability that the second batch \emph{fails} to return a replacement edge and halt
is $\exp(-\Omega(\log p)) < 1/p$.

The expected time cost is therefore
\[
T_1 + (1/\log p) T_2 + (1/p) T_E = O\left(\paren{\log n + \frac{p + s}{p}}(\log \log n)^2\right) = O((\log n + s)(\log\log n)^2)
\]
We charge the \Delete{}
operation $O(\log n(\log\log n)^2)$, which covers the expected cost of the two batch sampling steps \underline{and}
the expected cost of dealing with \emph{primary} endpoints if the enumeration step is reached.
If the enumeration step is reached, endpoint upgrades pay for the $\Theta(s(\log\log n)^2)$ cost of dealing with secondary endpoints.

\item[Case 2.] Otherwise, $\rho < 3/4$.
If the enumeration procedure is ultimately invoked, a $1-\rho = \Omega(1)$ fraction of the $i$-primary endpoints
touching $\bf u^i$ belong to non-replacement edges, which are promoted to depth $i+1$, and all $s$ $i$-secondary endpoints are
upgraded
to either $i$-primary or $(i+1)$-secondary status.  In this case the time cost is
\[
T_1 + T_2 + T_E = O((p+s)(\log \log n)^2),
\]
which is charged to the promoted edges/upgraded endpoints.
We need to prove that the probability of terminating after the second batch sampling test is sufficiently small.
If $\rho \ge 1/4$ then the probability of the first batch sampling test \emph{not} returning a replacement edge
is at most $(3/4 + o(1))^{O(\log\log p)} < 1/\log p$.  In this case the expected cost is
\[
T_1 + (1/\log p)T_2 = O(\log n(\log \log n)^2).
\]
If $\rho < 1/4$ then, by a Chernoff bound, the probability that at least half the sampled endpoints belong to replacement edges
is $\exp(-\Omega(\log p)) < 1/p$.  Therefore the expected cost when the enumeration procedure is not invoked with $\rho < 1/4$ is at most
\[
(1/p)(T_1 + T_2) = O(\log n \log \log n),
\]
which is charged to the \Delete{} operation.
\end{description}

\subsection{Maintaining \texorpdfstring{$(i, t)$}{(i,t)}-Forests}
\label{section:proof-lemma-induced-forests}

Lemma~\ref{lemma-induced-forests} summarizes the operations on $(i,t)$-forests which are implemented via the shortcut infrastructure and local trees, together with their corresponding time cost.

\begin{lemma}\label{lemma-induced-forests}
There exists a data structure on $\H$ supporting the following operations with amortized time (in parenthesis):

\begin{itemize}\compactify
\item Given an $\H$-leaf $x$ and an $(i, t)$ pair, designate $x$ an $(i, t)$-leaf $\left(O(\log n(\log\log n)^2)\right)$.
\item Given an $(i, t)$-leaf $x$, remove its $(i, t)$-status $\left(O(\log n(\log\log n)^2)\right)$.
\item Given an $(i, t)$-node $v$, return the $(i, t)$-parent of $v$ $\left(O(\log\log n)\right)$.
\item Given an $(i, t)$-node $v$, enumerate the $(i, t)$-children of $v$
$\left(O(1+ k \log\log n)\right.$ where $k$ is the number of enumerated $(i, t)$-children$)$.

\item Given an $(i, t)$-tree $\T$ rooted at $v$, an integer $i' \in [i,\dmax]$,
an endpoint type $t'$,  and two subsets of $(i, t)$-leaves $S^-$ and $S^+$
(these subsets need not be disjoint), update $\H$ so that all of the leaves in
$S^-$ lose their $(i, t)$-leaf status, and all leaves in $S^+$ gain $(i', t')$-leaf
status (if they did not have it before) $\left(O(|\T| (\log\log n)^2+1)\right)$.
\end{itemize}

Each operation assumes that, prior to the execution of the operation, Invariant~\ref{invariant:it-shortcuts} holds for all $\H$-nodes of depth $\ge i$, where $i$ is part of the input of the operation. Moreover, Invariant~\ref{invariant:it-shortcuts} is guaranteed to hold for all $\H$-node of depth $\ge i$ after each operation is completed.
\end{lemma}

The remainder of this section constitutes a proof of Lemma~\ref{lemma-induced-forests}.

\paragraph{Add $(i, t)$-status to an $\H$-leaf.}
Let $x$ be the $\H$-leaf. In order to identify the $(i, t)$-branching ancestor of $x$, the data structure climbs up $\H$ and finds the first $\H$-node $x'$ that is \emph{either} an $(i, t)$-node or has a downward $(i, t)$-shortcut $\scut {x'}{x''}$. If $x'$ is an $(i, t)$-branching node, then since the $\H$-child of $x'$ that is also an ancestor of $x$ is not an $(i, t)$-node, $x'$ is the $(i, t)$-branching ancestor of $x$. Otherwise, the data structure performs a binary search on the path $P_{x'x''}$ to find the $(i, t)$-branching ancestor as follows:

If $\scut{x'}{x''}$ is not a fundamental $(i, t)$-shortcut, the data structure uncovers $\scut {x'}{x''}$ into $\scut {x'}{y}$ and $\scut {y}{x''}$ and recurses to one of the two subpaths depending on whether $y$ is an ancestor of $x$ or not.
Otherwise, $\scut{x'}{x''}$ is fundamental, and in this case $x'$ is the branching node we are looking for.
Let $x'''$ be the ancestor of $x$ that is a child of $x'$.
We uncover $\scut{x'}{x''}$, give local $(i,t)$-status to $\ell_{x'''}$ and $\ell_{x''}$ in $\L(x')$,
and then cover all shortcuts on the path $P_{x''',x}$, using Lemma~\ref{lemma:uncover-a-path} (See Section~\ref{section:uncover-a-path}.)
The cost for walking up these local trees telescopes to $O(\log n\log\log n)$ by Lemma~\ref{lemma:local-tree-operations}.
Now suppose that $t=\const{primary}$.
For every $(i, t)$-branching node $y$ that is an ancestor of $x$,
the data structure updates
the approximate $i$-counter stored in $y$, using Lemma~\ref{lemma:local-tree-operations}.
Now, Invariant~\ref{invariant:it-shortcuts} is restored on all $\H$-nodes with depth $\ge i$ since all $(i, t)$-shortcuts between $x'''$ and $x$ form the path $P_{x''',x}$.
Since there are at most $\dmax=O(\log n)$ such $(i, t)$-branching nodes affected, the amortized cost is at most $O(\log n(\log\log n)^2)$.

\paragraph{Remove $(i, t)$-status from an $(i, t)$-leaf.}
Let $x$ be the $\H$-leaf.
The data structure navigates up from $x$ by upward $(i, t)$-shortcuts until it reaches a single-child $(i, t)$-node $q$.
The intermediate $(i, t)$-shortcuts are removed by setting their $(i, t)$-bits to $0$.

The data structure then removes the local $(i, t)$-status of the local tree leaf $\ell_q$ representing $q$.
If the $(i, t)$-parent $p$ of $q$ (which is also its $\H$-parent) now has only one $(i, t)$-child $q'$,
$p$ is no longer an $(i, t)$-branching node.
The data structure removes the $(i, t)$-status of $q'$, 
removes \emph{local} $(i,t)$-status of $\ell_{q'}$ in $\L(p)$ using Lemma~\ref{lemma:local-tree-operations},
removes the $(i, t)$-branching status of $p$, 
and covers the fundamental $(i, t)$-shortcut $\scut p{q'}$ using Lemma~\ref{lemma:uncover-a-path}.  
This may also cause $p$ to lose its $(i,t)$-status.

Notice that this operation is equivalent to first performing the lazy covering on the $(i, t)$-shortcuts from $x$ to its $(i, t)$-parent and then removing $x$. Hence, the time cost for removing $(i, t)$-status from $x$ is amortized $O((\log\log n)^2)$.  We can remove $(i,t)$-status from a group of leaves $S^-$ in
$O(|S^-|(\log\log n)^2)$
amortized time by repeating this procedure for every leaf.
Notice that Invariant~\ref{invariant:it-shortcuts} holds for all $\H$-nodes with depth $\ge i$ because fundamental $(i, t)$-shortcuts are covered when $\H$-nodes lose their $(i, t)$-branching status.

\paragraph{Enumerating $(i,t)$-children.} This is an operation of Lemma~\ref{lemma:local-tree-operations}.

\paragraph{Given an $(i, t)$-tree $\T$ and a set of leaves $S^+$ in $\T$, add $(i', t')$-status to the leaves in $S^+$.}
First of all, the data structure creates a ``dummy'' tree induced from the set of leaves $S^+$ and the root of $\T$,
by first copying the entire $(i, t)$-tree $\T$, enumerating all its leaves and removing all the leaves that do not belong to $S^+$.\footnote{This is
the reason for having $3\dmax\underline{+1}$ slots in the $\Dsc$ arrays; the +1 is for creating a temporary dummy tree of this type.}
Hence, without loss of generality, we now assume $S^+$ is the entire leaf set of
$\T$ and that there are no potential shortcuts w.r.t.~$\T$.

Notice that, after adding $(i', t')$-status to the leaves in $\T$, every $(i, t)$-branching node of depth at least $i'$ in $\T$
is also an $(i', t')$-branching node. Moreover, for each such $(i, t)$-branching node, adding $(i', t')$-status to the node
converts at most one $\H$-node into a new $(i', t')$-branching node.

Define ${\T}^*$ to be the subtree of $\H$ induced by \underline{all} ancestors of leaves in $\T$ up to depth $i$.  Our first
task is to enumerate all nodes of ${\T}^*$ at depth $i'$; call them $r_1,\ldots,r_k$.

\begin{claim}
The nodes $r_1,\ldots,r_k$ can be enumerated in worst case $O(k\log\log n)$ time.
\end{claim}

\begin{proof}
We perform a depth first search of $\T$ looking for nodes at depth $i'$.  Let $x$ be the locus of the search; initially $x$ is the root of $\T$.
If $x$ is at depth $i'$ we output $x$ and backtrack.  If $x$ is a $\T$-branching node we continue the search recursively on each $\T$-child
of $x$.  If $x$ has a single downward $\T$-shortcut $\scut{x}{x'}$ and $x'$ has depth \emph{strictly} greater than $i'$ we iteratively
uncover the downward shortcut from $x$ until it is $\scut{x}{x''}$, where $x''$ has depth at most $i'$, and move the locus of the search to $x''$.
If $k$ nodes are output by this procedure, the number of shortcuts followed/uncovered is $k\cdot O(\log\log n)$.
\end{proof}

Let $\T_1,\ldots,\T_k$ be the subtrees of $\T$ rooted at $r_1,\ldots,r_k$ and let $\W_1,\ldots,\W_k$ be the $(i',t')$-trees rooted at these nodes.
It may be that some $r_l$ does not currently have $(i',t')$-status, in which case $\W_l$ is empty.  In this case we simply traverse $\T_l$,
giving each node encountered $(i',t')$-status.  In Claim~\ref{lemma:merge-two-trees} we focus on the non-trivial problem of merging
$(\T_l,\W_l)$ when $r_l$ is an existing $(i',t')$-root.
Here ``$\W_l$-status'' is synonymous with $(i',t')$-status.

\begin{claim}\label{lemma:merge-two-trees}
Let $\T_l,\W_l$ be two trees rooted at $r_l$, where all shortcuts are maximal.
We can give $\W_l$-status to all leaves of $\T_l$ (and find all new $\W_l$-branching vertices)
in amortized $O(|\T_l|(\log\log n)^2)$ time, independent of the size of $\W_l$.
\end{claim}

\begin{figure}[ht]
\centering
\begin{minipage}[b]{0.32\linewidth}
\begin{subfigure}[t]{\linewidth}
\includegraphics[width=\linewidth]{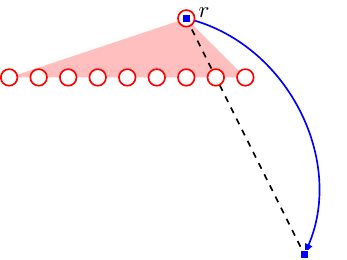}%
\caption{Case 1a}
\end{subfigure}\\
\begin{subfigure}[t]{\linewidth}
\includegraphics[width=\linewidth]{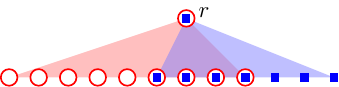}%
\caption{Case 1b}
\end{subfigure}%
\end{minipage}
\begin{subfigure}[t]{0.32\linewidth}
\includegraphics[width=\linewidth]{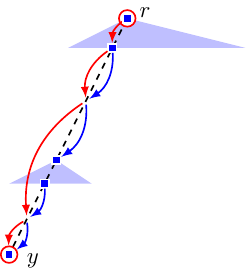}%
\caption{Case 2a}
\end{subfigure}%
\begin{subfigure}[t]{0.32\linewidth}
\includegraphics[width=\linewidth]{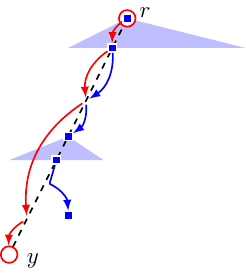}%
\caption{Case 2b}
\end{subfigure}
\caption{
    The examples to the four cases in the proof of Lemma~\ref{lemma:merge-two-trees}. The red circle nodes are $\T$-nodes and the blue square nodes are $\W$-nodes.
}\label{figure:lemma7.2-cases}
\end{figure}

\begin{proof}
We merge $\T_l$ and $\W_l$ in a depth-first manner.  Let $r$ be the locus of the search; initially $r=r_l$.  We maintain the invariant
that $r$ is both a $\T_l$-node and a $\W_l$-node.  There are two main cases; Case 1 is when $r$ is a branching $\T_l$-node
and Case 2 is when $r$ is a single-child $\T_l$-node.  
See Figure~\ref{figure:lemma7.2-cases} for illustration.

\paragraph{Case 1a:} \emph{$r$ is a branching $\T_l$-node but not a branching $\W_l$-node.}
After the merging process $r$ will become a branching $\W_l$-node, and therefore can have no downward $\W_l$-shortcut.
We repeatedly uncover the $\W_l$-shortcut leaving $r$.  In the final step
we uncover a fundamental shortcut $\scut{r}{x}$,
give 
$\ell_x$ \emph{local} $\W_l$-status in $\L(r)$,
and then designate $r$ a branching $\W_l$-node.
This reduces the situation to Case 1b.

\paragraph{Case 1b:} \emph{$r$ is both a branching $\T_l$-node and branching $\W_l$-node.}
Enumerate every $\T_l$-child $r'$ of $r$.  If $r'$ does not have $\W_l$-status, traverse the entire subtree of $\T_l$ rooted at $r'$,
marking each node encountered as a $\W_l$-node, 
and give $\ell_{r'}$ local $\W_l$-status in $\L(r)$.
Otherwise, move the locus of the search to $r'$ and recursively merge the
subtrees of $\T_l$ and $\W_l$ rooted at $r'$.

\paragraph{Case 2a:} \emph{$r$ is a single-child $\T_l$-node 
and the $\T_l$-child of $r$ is a $\W_l$-node or has a downward $\W_l$-shortcut.}
Let $y$ be the $\T_l$-child of $r$.  If $y$ is a $\W_l$-node then there are no \emph{new} branching vertices on the path from $r$ to $y$ (exclusive).
In this case we move the locus of the search to $y$ and continue recursively.   
If $y$ is not a $\W_l$-node but has a downward
$\W_l$-shortcut it becomes a branching $\W_l$-node.  
We repeatedly uncover its downward $\W_l$-shortcut, culminating
in uncovering a fundamental shortcut $\scut{y}{x}$, then designate $\ell_x$
a local $\W_l$-node in $\L(y)$ and designate $y$ a branching
$\W_l$-node.  Finally we move the locus of the search to $y$.

\paragraph{Case 2b:} \emph{$r$ is a single-child $\T_l$-node, but its $\T_l$-child $y$ is neither a $\W_l$-node nor
has a $\W_l$-shortcut.}
In this case, $y$ will become a branching $\W_l$-node or $\W_l$-leaf.  In addition, there may be a new branching $\W_l$-node on the path from $r$ to $y$.  We proceed to find the new branching node as follows.
Initialize $x=r$ and let $\scut{x}{x'}$ refer to its current downward $\T_l$-shortcut.
Whenever $x'$ is a $\W_l$-node or has a $\W_l$-shortcut, 
we move the locus of the search to $x'$, setting $x=x'$.
Whenever $x$ has a downward $\T_l$-shortcut $\scut{x}{x'}$
and a $\W_l$-shortcut $\scut{x}{x''}$ with $x'\neq x''$,
we uncover the one with maximum power, or uncover both
if they have the same power.  If $\scut{x}{x''}$ does not exist
because $x$ is a branching $\W_l$-node
then we repeatedly uncover $\scut{x}{x'}$.  
Eventually this process terminates when we uncover
a fundamental $\T_l$-shortcut $\scut{x}{x'}$
(perhaps uncovering a fundamental $\W_l$-shortcut $\scut{x}{x''}$
at the same time).
Then $x$ is the new branching $\W_l$-node.  We designate it as
such, and explore the $\T_l$ subtree rooted at $x'$,
giving all $\T_l$-nodes and shortcuts encountered $\W_l$-status.

\paragraph{About Invariant~\ref{invariant:it-shortcuts}.}
Notice that all new $(i', t')$ branching nodes are correctly identified by the procedure described above, and that $i'\ge i$. Thus, Invariant~\ref{invariant:it-shortcuts} holds for all $\H$-nodes of depth $\ge i$.

\paragraph{Time Complexity.}
The time required to traverse $\T_l$ and identify all new branching nodes is $O(|\T_l|\log\log n)$.
The running time is dominated by the cost of introducing up to $O(|\T_l|)$ new branching vertices and adding $\W_l$-status to $O(|\T_l|)$ nodes.
The cost of adding $\W_l$-status is $O((\log\log n)^2)$ and the cost of uncovering a fundamental $\W_l$-shortcut, in Case 1a or Case 2b,
is also $O((\log\log n)^2)$.  In total the time is $O(|\T_l|(\log\log n)^2)$.
\end{proof}

\ignore{
\begin{lemma}\label{lemma:find-branching-node}
Let $r$ be a single-child $\T$-node, and $y$ be the $\T$-child of $r$. Suppose that $r$ is also a $\W$-node but $y$ is neither a branching $\W$-node nor a $\W$-leaf. Let $P_{ry}$ be the path between $r$ and $y$ on $\H$. Then, the newly created branching $\W$-node $x$, which is the deepest node on $P_{ry}\cap \W^*$ can be found in $O(\log\log n)$ time.
\end{lemma}

\begin{proof}
If $y$ has $\W$-status, then the data structure returns $y$ as it is the deepest node on $P_{ry}\cap \W^*$. Otherwise, the data structure repeatedly finds the deepest node $x_{\rm{candidate}}$ on $P_{ry}$ with both $\T$-status and $\W$-status, then uncover the one of $\T$-shortcut or $\W$-shortcut (if there is any) leaving $x_{\rm{candidate}}$ with a larger power.
This process stops whenever (1) the $\T$-shortcut leaving $x_{\rm{candidate}}$ is fundamental, and (2) either $x_{\rm{candidate}}$ is a branching $\W$-node or the $\W$-shortcut leaving $x_{\rm{candidate}}$ is fundamental. By setting $x=x_{\rm{candidate}}$ the algorithm is complete.

Now we show that the above process can be implemented in worst case $O(\log\log n)$ time, assuming that all shortcuts on $\W$ and $\T$ are already maximal.
Initially, the data structure follows the $\T$-shortcuts from $r$ to $y$ and finds the deepest node $x'$ with $\W$-status in $O(\log\log n)$ time since there are at most $O(\log\log n)$ shortcuts to be followed. The data structure keeps a pointer variable that points to $x'$.
We will show that, after each iteration, this pointer always points to  $x_{\rm{candidate}}$. Before the first iteration, $x'$ is the deepest node with both $\T$-status and $\W$-status, so $x'=x_{\rm{candidate}}$.
Within each iteration, the data structure uncover one shortcut. Notice that this uncover operation always uncover a non-fundamental shortcut $\scut uv$ into exactly two shortcuts $\scut u{v'}$ and $\scut {v'}v$.
If the middle point $v'$ has both $P_{ry}$-status and $\W$-status, then $v'$ is the new $x_{\rm{candidate}}$ so by setting the pointer to $v'$ we may proceed.

After the above procedure stops, let $\scut{x_{\rm{candidate}}}{v}$ be the $\T$-shortcut leaving $x_{\rm{candidate}}$. In both stopping criteria (1) and (2) we know that $v$ and any of descendent of $v$ has no $\W$-status. Thus, we have $x=x_{\rm{candidate}}$ which is the deepest node on $P_{ry}\cap \W^*$.

On the other hand, each iteration costs $O(1)$ time. To upper bound the total number of iterations, we observe that between two uncovering of $\T$-shortcuts, the former $\T$-shortcut has strictly larger power than the latter one. Also, between two uncovering of $\W$-shortcuts, the former $\W$-shortcut has strictly larger power than the latter one. So the runtime of this algorithm is $O(\log\log n)$.
\end{proof}

}

\subsection{Approximate Counters Operations --- Proof of Lemma~\ref{lemma-counters}}\label{section:proof-of-lemma-counters}

\paragraph{Update ancestor approximate $i$-counters.}
The data structure updates the approximate $i$-counters from a given $\H$-leaf $x$ to the corresponding $\H$-root.
Let $v$ be the current $(i, \const{primary})$-node.  If $v$ is a single-child $(i, \const{primary})$-node, then it adopts the approximate
$i$-counter of its $(i, \const{primary})$-child.  If $v$ is the child of an $(i, \const{primary})$-branching node $u$,
the data structure updates the approximate $i$-counters of $v$ from $\L(u)$ using Lemma \ref{lemma:local-tree-operations}.
At this point $u$ adopts
the approximate $i$-counter of the root of $\L(u)$.  There are at most $\log n$ branching nodes on the path and each costs $O((\log\log n)^2)$ time
to update an $i$-counter (Lemma \ref{lemma:local-tree-operations}), for a total of $O(\log n(\log\log n)^2)$ time.

\paragraph{Update approximate $i$-counters in an $(i, \const{primary})$-tree $\T$ rooted at $u^i$.}
At the beginning of this operation, the approximate $i$-counters at all $(i, \const{primary})$-leaves are accurate but
those at internal nodes are presumed invalid.  Beginning at the root $u^{i}$, the data structure
traverses the $(i, \const{primary})$ tree $\T$ in a postorder fashion, setting approximate $i$-counters in this order.
As in the analysis above, the cost is $O((\log\log n)^2)$ per node in $\T$,
for a total of $O(|\T|(\log\log n)^2)$.

\paragraph{Update approximate counters at a merged/split $\H$-node $x$.}
Suppose $x={\bf u^i}$ is the result of merging several siblings.  We inspect the
root of $\L(x)$ and retrieve the bitmap $I$ indicating for which $(i,\const{primary})$-pairs
$x$ is an $(i,\const{primary})$-branching node.  Using table lookups,
in $O(\log\log n)$ time we make
an $O(\log n\log\log n)$-bit mask and copy all the approximate $i$-counters from the root of $\L(x)$ to $x$.
The case when $x$ is the result of a split is handled in the same way.

\section{Amortized Analysis of Shortcut Maintenance}\label{section:amortized-analysis-of-shortcut-maintenance}

In this section, we describe how shortcuts are utilized and supported on $\H$.
Moreover, we provide a potential function
for $\H$-shortcuts that contributes to the amortized analysis for the \Delete{} operation.

\subsection{Covering All Shortcuts Touching Specified Paths --- Proof of Lemma~\ref{lemma:uncover-a-path}}%
\label{section:uncover-a-path}

The remainder of this section constitutes a
proof of Lemma~\ref{lemma:uncover-a-path}.
Let $P$ be a path from the given $\H$-node $u^i$ to the corresponding $\H$-root $u^0$.

\paragraph{Uncover and remove all $\H$-shortcuts touching $P$.}

Removing a fundamental shortcut is a local tree operation that costs 
$O((\log\log n)^2)$ time. 
Uncovering a shortcut with both endpoints on the path costs $O(\log\log n)$ time by Lemma~\ref{lemma:shortcuts-operation}.
(Such a shortcut may be an $(i',t')$-shortcut
for multiple $(i',t')$ pairs.)
Uncovering a non-fundamental deviating $(i, t)$-shortcut costs $O(1)$ time, 
by setting the appropriate $(i,t)$-bits in the supporting shortcuts.
Thus, the total cost of uncovering and removing all of the $\H$-shortcuts 
on the affected paths is $O(\log n (\log \log n)^2)$.

For each $\H$-node $x$ iterated from $u^0$ to $u^i$, the data structure first enumerates all downward $\H$-shortcuts in $\Dsc_x$. Then the data structure repeatedly uncovers the $\H$-shortcut with the largest power $>0$ until every $\H$-shortcut leaving $x$ is fundamental.

The data structure then uncovers each fundamental $\H$-shortcut leaving $x$ by the following procedure.
To uncover (remove) a fundamental $\H$-shortcut $\scut xy$, the data structure first detaches the local leaf $\ell_y$ in $\L(x)$ representing $y$ and re-inserts $\ell_y$ into the buffer tree.
Notice that this operation does not alter the structure of $\H$, so any $\H$-shortcut leaving $y$ is not affected.
Then the data structure adds local $(i, t)$-status to $\ell_y$ 
for all $(i, t)$ pairs indicated in the bitmap $b_{\scut xy}$.
This enables one to navigate from the root of $\L(x)$ to $\ell_y$ via \emph{local} $(i,t)$-shortcuts in $\L(x)$.  To preserve Invariant~\ref{invariant:it-shortcuts} (and thereby keep the whole $(i,t)$-forest in $\H$ navigable) we designate $x,y$ $(i,t)$-nodes for each $(i,t)$-bit indicated in $b_{\scut xy}$.
By the local tree operations listed in Lemma~\ref{lemma:buffer-tree-operations}, the time cost for uncovering (removing) a fundamental $\H$-shortcut is amortized $O((\log\log n)^2)$.

\paragraph{Adding a fundamental shortcut between an $\H$-node $v$ and its $\H$-parent $u$ for all $(i, t)$ pairs indicated by the bit vector $b$.}
This can be done by first invoking Lemma~\ref{lemma:local-tree-operations}, removing local $(i, t)$-status from $\ell_v$, and then adding a shortcut $\scut{u}{v}$ via Lemma~\ref{lemma:shortcuts-operation}. The time cost is $O(\log\log n)$.

\paragraph{Adding all fundamental $\H$-shortcuts touching $P$ shared by some $(i, t)$ pairs.}
There are two types of fundamental $\H$-shortcuts touching $P$: (1) having both endpoints on $P$, and (2) deviating from $P$.

To add all fundamental $\H$-shortcuts touching $P$, the data structure checks for edge depth $j$ iterated from $i$ to $1$ whether to add the fundamental shortcut $\scut {u^{j-1}}{u^j}$ or not.
It should be added if, for some $(i,t)$ pair, $u_j$ is an $(i,t)$-node but $u^{j-1}$ is \emph{not} an $(i,t)$-branching node.
To check this, the data structure first obtains a bitmap $b$ stored in $u^j$ indicating which $(i, t)$ pairs have an $(i, t)$-status at
$u^j$, and then accesses the path in the local tree $\L(u^{j-1})$ from $\ell_{u^{j}}$ to the root of $\L(u^{i-1})$.
During this traversal, if we encounter a \emph{local} 
$(i,t)$-branching node we set the corresponding 
$(i,t)$-bit in $b$ to zero.
When we reach the root of $\L(u^{j-1})$, if $b$ is still non-zero,
the data structure creates the fundamental $\H$-shortcut $\scut {u^{j-1}}{u^j}$ with $b_{\scut {u^{j-1}}{u^j}} = b$.
Furthermore, for each $(i, t)$-bit set to $1$ in $b$,
the data structure removes local $(i, t)$-status 
from the local tree leaf $\ell_{u^j}$.  
If $u_j$ is not an $(i,t)$-branching node, we also remove
$(i,t)$-status from $u_j$.

To handle the second case, notice that by Lemma~\ref{lemma:shortcuts-touching-a-path}, for each $(i, t)$ pair there is at most one fundamental $(i, t)$-shortcut deviating from $P$. In particular, for an $(i, t)$ pair, at most one deviating fundamental $(i, t)$-shortcut is added touching the unique $\H$-node $u^{j-1}$ 
such that $u^{j-1}$ belongs to an $(i, t)$-forest but $u^{j}$ 
does not.  
The data structure forms the bitmap $\mathit{diff}$ 
in $O(1)$ time indicating all such pairs. 
For each $(i,t)$ in $\mathit{diff}$ we check in $O(\log\log n)$
time whether $\L(u^{j-1})$ contains a \underline{single} leaf $\ell_y$ with local $(i,t)$-status (Lemma \ref{lemma:local-tree-operations}).  If so, we create a fundamental shortcut $\scut{u^{j-1}}{y}$, remove local $(i,t)$-status from $\ell_y$,
and remove $(i,t)$-status from $y$ if it is not an $(i,t)$-branching node. 

We now analyze the time cost.  For (1), at most $O(\log n)$ $\H$-shortcuts are covered, and each covering involves multiple $(i, t)$ pairs so each covering can be done in $O(H(u^{j})-H(u^{j+1}))$ time (Lemma \ref{lemma:local-tree-operations}), which telescopes to $O(\log n\log\log n)$.
Moreover, removing $(i, t)$-status on local tree leaves costs 
$O(\log\log n)$ time, by Lemma~\ref{lemma:local-tree-operations}.
For (2), there are $O(\log n)$ possible 
deviating fundamental shortcuts
to be created.  Each requires $O(\log\log n)$ amortized 
time, for a total of $O(\log n\log\log n)$ amortized time.

\paragraph{Cover all $(i, t)$-shortcuts having both endpoints on $P$.}
In addition to adding all of the fundamental shortcuts, the data structure adds back all of the $\H$-shortcuts on the path $P$
from $u^j$ to $u^0$.
This is done by traversing $P$ $\log\log n$ times.
In the $p$-th traversal the data structure covers all possible $\H$-shortcuts of power $p+1$ that have both endpoints on the path. Each shortcut is covered in $O(\log\log n)$ time: to cover $\scut xy$ from $\scut x{y'}$ and $\scut {y'}y$, the data structure first adds the shortcut $\scut xy$ into $\Usc_y$. Then the data structure computes the bitwise AND of two bitmaps by setting $b_{\scut xy} \gets b_{\scut x{y'}}\land b_{\scut {y'}y}$, and removes the bits in the covered shortcuts by setting $b_{\scut x{y'}} \gets b_{\scut x{y'}} \oplus b_{\scut xy}$ and $b_{\scut {y'}y}\gets b_{\scut {y'}y} \oplus b_{\scut xy}$.
Finally, the data structure updates 
$\Uptr_y$, $\Dsc_x$ and $\Dptr_x$ according to $b_{\scut xy}$,
and if $b_{\scut x{y'}}$ and/or $b_{\scut {y'}y}$ becomes 0, 
updates $\Dsc_x,\Dptr_x,\Occ_x,\Dsc_{y'},\Dptr_{y'},\Occ_{y'}$ appropriately.

It is straightforward to see that, after $\log\log n$ passes, 
if there is any $(i, t)$-shortcut with at least one endpoint on the path that could be covered, the other endpoint must be outside of the 
path and hence is a deviating $(i, t)$-shortcut.
Since there are a total of $O(\dmax)=O(\log n)$ non-fundamental $\H$-shortcuts to consider, the total time cost is 
$O(\log n \log\log n)$.

\subsection{Shortcut Cost Analysis}\label{section:cost-analysis-lazy-covering}

At first glance it seems sensible to charge the
cost of deleting a shortcut to the creation of the shortcut,
and therefore only account for their creation in the amortized
analysis.  This does not quite work because shortcuts are
\emph{shared} between many $(i,t)$ pairs and the cost of deleting
a shortcut depends on how broadly it is shared.
The amortized analysis for $\H$-shortcuts focusses on \emph{supporting potential shortcuts} defined as follows:

\begin{definition}\label{definition:lazy-cover-potential}
Let $u$ be a single-child
$(i, t)$-node and $v$ be the $(i, t)$-child of $u$. Then the \emph{maximal potential $(i, t)$-shortcuts} are the
maximal shortcuts with respect to the covering
relation having both endpoints on the path $P_{uv}$.
The \emph{supporting potential $(i,t)$-shortcuts} are the
$\H$-shortcuts that support some maximal potential $(i,t)$-shortcut.
\end{definition}

Consider a supporting potential shortcut
$\scut uv$ (which may or may not be \emph{stored})
and define $k_{\scut uv}$ to be the number of $(i,t)$
pairs for which $\scut uv$ is covered by
a maximal potential $(i,t)$-shortcut but \emph{is not}
covered by a \emph{stored} $(i,t)$-shortcut.\footnote{
The count $k_{\scut uv}$ also takes the dummy tree
into account, as if it had a special $(i,t)$-status.
Notice that the dummy tree only exists in the middle of
the $\Delete$ operation; see Section~\ref{section:proof-lemma-induced-forests}.}  Define a function $f$ as follows.
\begin{align*}
f(\scut uv) &= \begin{cases}
k_{{\scut uv}}, 	&	 \text{if $\scut uv$ is not a fundamental shortcut,}\\
0, 	&	 \text{if $\scut uv$ is a fundamental shortcut.}
\end{cases}
\end{align*}

Let $C$ be the set of all shortcuts defined over $\H$,  $C_{\operatorname{st}}$ be the set of all stored non-fundamental shortcuts,
and $C_{\operatorname{f}}$ be the set of all stored fundamental shortcuts. The potential $\Phi$ is defined as follows.
\[
 \Phi = \underbrace{\left(\sum_{\scut uv \,\in\, C} f(\scut uv)(\log\log n+1)\right)}_{\Phi_1}
 \;+\;
 \underbrace{|C_{\operatorname{st}}|\cdot \log\log n}_{\Phi_2}
 \;+\;
 \underbrace{|C_{\operatorname{f}}|\cdot (\log\log n)^2}_{\Phi_3}
\]
 Uncovering a fundamental shortcut could possibly cause a detach-reattach operation in the local tree, which costs $O((\log\log n)^2)$ time; see the proof of Lemma~\ref{lemma:uncover-a-path} in Section~\ref{section:uncover-a-path}. 
 This is the reason that we give more credit to a stored fundamental shortcut than to a non-fundamental shortcut.
 Throughout the algorithm execution, there are many places where the $(i, t)$-forests are modified. These structural changes affect the potential $\Phi$ so we list them in the following paragraphs. 

\paragraph{Adding $(i, t)$-status to an $\H$-leaf. (Lemma~\ref{lemma-induced-forests})}
Adding $(i, t)$-status to an $\H$-leaf increases $\Phi$ by $O(\log n(\log\log n)^2)$ since
all new shortcuts that need to be created lie on the path from the
leaf to its $(i, t)$-parent.  In particular, each of the $O(\log n)$ new fundamental shortcuts increases $\Phi_3$ by $(\log\log n)^2$ each,
and both $\Phi_1$ and $\Phi_2$ increase by at most
$O(\log n\log\log n)$ each.

\paragraph{Removing $(i, t)$-status from an $\H$-leaf. (Lemma~\ref{lemma-induced-forests})}
Removing $(i, t)$-status from a leaf $x$ increases $\Phi$ by $O((\log\log n)^2)$.
Let $y$ be the $(i,t)$-parent of $x$.
If $y$ loses its $(i,t)$-status and its $\H$-parent $z$ is no longer
an $(i,t)$-branching node, we will create one new
fundamental shortcut from $z$ to a sibling of $y$, 
increasing $\Phi_3$ by $(\log\log n)^2$.
All new supporting potential $(i,t)$-shortcuts will cover $z$ and
have distinct powers.
Thus, the net increase of $\Phi_1$ will be at most
$(\log\log n+1)\log\log n$.  $\Phi_2$ is unchanged.

\paragraph{Creating a dummy tree. (Lemma~\ref{lemma-induced-forests})}
Create a dummy tree $\T$ by copying a maximally covered
$(i, t)$-tree.  Recall that there are $3\dmax+1$ shortcut forests,
one for every $(i,t)$-pair and 1 for the dummy forest; we will say its shortcuts have \emph{$\perp$-status}.
After creating the dummy tree $\T$ and giving its maximal shortcuts
$\perp$-status, there is no change to $\Phi$.  Every potential
$\perp$-shortcut is a stored shortcut, and was formerly stored before $\T$ was created.

\paragraph{Removing $(i, t)$-status from a subset of $\H$-leaves. (Lemma~\ref{lemma-induced-forests})}
The data structure removes $(i, t)$-status (or $\perp$-status) from a subset of leaves in an $(i, t)$-tree $\T$ (or dummy tree $\T$).
There are $O(|\T|)$ leaves removed, and each removal
increases $\Phi$ by at most $O((\log\log n)^2)$, for a total
of $O(|\T|(\log\log n)^2)$.

\paragraph{Merging and destroying dummy trees. (Lemma~\ref{lemma-induced-forests})}
The data structure 
merges a maximally covered dummy tree $\mathcal{T}$
into an $(i', t')$-tree, and destroys $\mathcal{T}$.
Observe that in the process of merging these trees, the $(i',t')$-tree
acquires new branching nodes and the set of supporting potential $(i',t')$-shortcuts only loses elements.  Thus $\Phi_1$ does not increase.  Every shortcut supporting the merged tree was in at least
one of the two original trees before the operation, so $\Phi_2$ and $\Phi_3$ are also non-increasing.

\paragraph{Lazy Covering. (Lemma~\ref{lemma:lazy-cover-operation})}
The lazy covering method only covers non-fundamental shortcuts, so each covering costs constant actual time.
Suppose we have traversed $(i,t)$-shortcuts
$\scut xy$ and $\scut yz$ and covered
them with $\scut xz$.  (Notice that $\scut xz$ may or may not have been previously \emph{stored}.)
This causes $f(\scut xz)$ to drop by at least 1 and hence $\Phi_1$ to drop by $\log\log n+1$.  If $\scut xz$ was not already stored, $\Phi_2$ increases by $\log\log n$.  In any case, the net potential drop in $\Phi$ is at least 1, which pays for the covering.

\paragraph{The $\Delete$ Operation. (See also Section~\ref{subsubsection:detailed-deletion})}
At the beginning of a $\Delete(u,v)$ operation, the algorithm spends $O(\log\log n)$ time to
uncover each $\H$-shortcut touching an ancestor of
$u^i$ or $v^i$, where $i$ is the depth of $\{u,v\}$.
Notice that these $\H$-shortcuts may be shared by
many $(i', t')$-pairs,
so the uncovering operation may \emph{temporarily}
increase $\Phi_1$ by $\Omega(\log^2 n\log\log n)$.
Fortunately, after the deletion operation most of these $\H$-shortcuts are covered back.
As mentioned in Section~\ref{section:uncovering-a-path},
after a deletion
the data structure covers every possible supporting potential
$(i', t')$-shortcut with both endpoints at ancestors of $u^i$ or $v^i$, as well as all necessary fundamental $\H$-shortcuts
with at least one endpoint ancestral to $u^i$ or $v^i$.
We claim that after covering back all necessary $\H$-shortcuts
on the two paths, the increase of $\Phi$
is upper bounded by $O(\log n(\log\log n)^2)$.
Counting multiplicity, there are $O(\log n\log\log n)$ non-fundamental deviating shortcuts that the lazy covering method failed to restore after the $\Delete$ operation.
Each contributes $\log\log n+1$ to $\Phi_1$, for a total of $O(\log n(\log\log n)^2)$.
The number of non-fundamental shortcuts with both endpoints
at ancestors of $u^i$ or $v^i$ is $O(\log n)$, and each contributes $\log\log n$ to $\Phi_2$, for a total of $O(\log n\log\log n)$.  Similarly, the $O(\log n)$ fundamental shortcuts each contribute $(\log\log n)^2$ to $\Phi_3$, for a total of $O(\log n(\log\log n)^2)$.
The increase in $\Phi$ due to these changes are charged to the $\Delete$ operation.





\section{Main Operations --- Proof of Lemma~\ref{lemma-main}}\label{section:main-oerations}

We review how each of the 10 operations of
Lemma~\ref{lemma-main} can be implemented in the stated amortized running time.

\paragraph{Operation \ref{opr:add-1-witness} --- 
Add or remove an edge with depth $i$ and endpoint type $t$.}
The data structure first adds (or removes) the given edge to the $\H$-leaf data structures of its endpoints; see Section~\ref{section:overview:leaf-data-structure}.
If the addition/removal changes the $(i,t)$-status of either endpoint, we update them with
Lemma~{\ref{lemma-induced-forests}}
and if $t=\const{primary}$
we update the approximate $i$-counters
using Lemma~\ref{lemma-counters}.
The time cost is $O(\log n(\log\log n)^2)$.

\paragraph{Operation \ref{opr:merge-two-siblings} --- Merge a subset of  $\H$-siblings into $\bf u^i$ and promote all $i$-witness edges touching $\bf u^i$.}%

Given the subset $S$ of $\H$-siblings at depth $i$,
the algorithm first uncovers all $\H$-shortcuts that
touch any $\H$-siblings in $S$ (Lemma~\ref{lemma:lazy-cover-operation}).
We then invoke Lemma~\ref{lemma:local-tree-operations}
to merge $\H$-siblings in $S$, two at a time, into a single $\H$-node $\bf u^i$.
The amortized cost for uncovering and deleting all $\H$-shortcuts
touching $S$ is zero.
(The cost for \emph{restoring} necessary
shortcuts is not part of this operation.
It is paid for by the $\Delete$ itself; see Section~\ref{section:cost-analysis-lazy-covering}.)
Thus, by Lemma~\ref{lemma:local-tree-operations},
the amortized cost so far is $O(|S| (\log\log n)^2)$.

The algorithm then traverses the $(i, \const{witness})$-tree
rooted at $\bf u^i$,
obtains the set of leaf-descendants with $(i, \const{witness})$-status and enumerates the $|S|-1$
$(i, \const{witness})$-edges touching these vertices.
By Lemmas~\ref{lemma:lazy-cover-operation} and \ref{lemma:local-tree-operations},
the amortized cost of the traversal is $O(|S|\log\log n)$.
Now the data structure uses Lemma~{\ref{lemma-induced-forests}} (last bullet point) to promote
all these $(i, \const{witness})$-edges to $(i+1, \const{witness})$-status, which
costs $O((|S|-1)(\log\log n)^2)$ time.

Notice that every edge releases 
{$\Omega((\log\log n)^2)$} units of potential upon
promotion. 
As every unit of potential pays for some constant $\Theta(1)$ running time, 
the amortized cost of this operation can be made
$-\Omega((|S|-1)(\log\log n)^2)$
by choosing a sufficiently large constant.

\paragraph{Operation \ref{opr:convert-i-secondary} --- Upgrade all $i$-secondary endpoints touching $\bf u^i$.}
The data structure first traverses the $(i, \const{secondary})$-tree rooted at $\bf u^i$, enumerating its leaf-set $S$.
By Lemma~\ref{lemma-induced-forests},
enumerating $S$ costs $O(|S|\log\log n)$ time.
Let $s\ge |S|$ be the number of $(i, \const{secondary})$-endpoints 
stored at these leaves.
We then use Lemma~{\ref{lemma-induced-forests}}
to add $(i, \const{primary})$-status and remove $(i, \const{secondary})$-status
from all leaves in $S$, in $O(|S|(\log\log n)^2)$ amortized time.
Using the $\H$-leaf data structure, we can upgrade all $s$ $(i,\const{secondary})$-endpoints
to $(i,\const{primary})$-status in $O(s)$ time.
At this point the approximate $i$-counters at $S$ are accurate, but the approximate $i$-counters at ancestors of $S$ are out of date.
Using Lemma~\ref{lemma-counters}, we rebuild all approximate $i$-counters at descendants of $\bf u^i$ in
$O(p(\log\log n)^2)$ time, where $p\ge |S|$ is the number of
$(i, \const{primary})$-leaves descending from $\bf u^i$.

The $s$ upgrades release {$\Omega(s(\log\log n)^2)$} units of potential
whereas the cost for traversing the $(i,\const{primary})$-tree and updating
its counters is $O(p(\log\log n)^2)$.  Thus, the amortized time
of this operation is $-\Omega((s-p)(\log\log n)^2)$.

\paragraph{Operation \ref{opr:promote-i-primary} --- Promote a subset of $i$-primary endpoints touching $\bf u^i$.}
Let $R$ be the set of $(i,\const{primary})$ endpoints being promoted.
The data structure first scans through $R$, forming two leaf sets:
$S^-$ are all $\H$-leaves whose $(i,\const{primary})$-endpoints
are contained in $R$ (these will lose $(i,\const{primary})$-status)
and $S^+$ are all $\H$-leaves touched by at least one element of $R$
(these will gain $(i+1,\const{secondary})$-status, if they do not have it already).
Both $S^-$ and $S^+$ are leaves of the
$(i, \const{primary})$-tree $\T$ rooted at $\bf u^i$.
The data structure
uses Lemma~{\ref{lemma-induced-forests}} to add
$(i+1, \const{secondary})$-status to all $\H$-leaves in $S^+$ and
removes $(i, \const{primary})$-status from all $\H$-leaves in $S^-$.
By Lemma~{\ref{lemma-induced-forests}} the time cost is
$O(|\T|(\log\log n)^2+1)$. Let $p$ be the number of $i$-primary endpoints touching
$\bf u^i$, including the ones that are not promoted.
Since $|\T|\le p$ we have that this operation costs $O(p(\log\log n)^2+1)$ time.

Since the promotions release
{$|R|\cdot \Omega((\log\log n)^2)$} units of potential,
with the leading constants set properly
the amortized cost of this operation is
at most $-\Omega((12|R|-p)(\log\log n)^2)$.

\paragraph{Operation \ref{opr:convert-i-non-witness} --- Convert an $i$-non-witness edge to an $i$-witness edge.}

The data structure changes the status of the  endpoints of the converted edge to $(i,\const{witness})$ using the $\H$-leaf data structure.
If either endpoint of the edge had $(i,\const{primary})$-status prior to the conversion, the approximate $i$-counters at all ancestors of the $\H$-leaf  containing the endpoing
may be invalid and the endpoints may lose $(i,\const{primary})$-status.
The data structure updates the approximate $i$-counters at all $(i, \const{primary})$-ancestors,
and removes $(i,\const{primary})$-status of the endpoints, if necessary.
This costs $O(\log n(\log\log n)^2)$ time, by
Lemmas~\ref{lemma-counters} and \ref{lemma-induced-forests}.

\paragraph{Operation \ref{opr:split-node} --- Split an $\H$-node $u^{i-1}$ with a single child $u^i$.}
We are given pointers to $u^{i-2}$ (if it exists), $u^{i-1}$, and $u^i$.
The data structure first creates a new $\H$-node $x$,
detaches $u^i$ from $\L(u^{i-1})$, and makes $u^i$ a child of $x$ using Lemma~\ref{lemma:local-tree-operations}.
If $i=1$, then $x$ is an $\H$-root and we are done.
Otherwise, the data structure attaches $x$ to $\L(u^{i-2})$.   By Lemma~\ref{section:local-tree-cost-analysis}, the amortized time for all these operations is $O((\log\log n)^2)$.

\paragraph{Operation \ref{opr:all-i-witness} --- Enumerate all $(i, t)$-endpoints in the $(i, t)$-tree rooted at $\bf u^i$.}
The data structure traverses the $(i, t)$-tree.  For each $(i, t)$-leaf, enumerate all the endpoints of depth $i$ and type $t$
from the $\H$-leaf data structure.
By applying the operations of  Lemma~\ref{lemma-induced-forests},
the time cost is $O(l\log\log n + k) = O(k\log\log n)$, where $l$ is the number of $(i,t)$-leaves and $k$
is the number of enumerated endpoints.

\paragraph{Operation \ref{opr:find-i-component} --- Accessing $\H$-parent $v^{i-1}$ from $v^i$.}
This is a local tree operation.  According to Lemma~\ref{lemma:local-tree-operations}, the time cost is $O(H(v^{i-1}) - H(v^i))$.

\paragraph{Operation \ref{opr:counter} --- Accessing an approximate $i$-counter.}
The approximate $i$-counter is stored at the node in floating-point representation.  It can be retrieved and converted to an integer 
(Lemma~\ref{lemma:approximate-counters-addition}) in $O(1)$ time.  

\paragraph{Operation \ref{opr:sample} --- Batch Sampling Test.}
From Section~\ref{section:the-batch-sampling-test}, the batch sampling test on $k$ samples costs
worst case time
$O(\min((p+s)\log\log n+k, k\log n\log\log n))$ where $p$ is the number of $i$-primary edges touching $\bf u^i$ and $s$ is the number of $i$-secondary edges touching $\bf u^i$.

\subsection{Proof of Theorem~\ref{main-result}}\label{section:proof-of-main-result}


The correctness of the data structure follows from Section~\ref{section:deletion}'s
maintenance of Invariant~\ref{invariant:spanning-forest},
using Lemma~\ref{lemma-main} to maintain $\H$
and Theorem~\ref{lemma:witness-forest} to maintain the witness forest $\F$.
In this section we prove that the amortized time complexity is $O(\log n(\log\log n)^2)$ per $\Insert$ or $\Delete$ and $O(\log n/\log\log\log n)$ per
$\isConnected$ query.
Call $\DH$ the data structure for $\H$
described in Lemma~\ref{lemma-main} and $\DF$
the data structure for $\F$ from
Theorem~\ref{lemma:witness-forest}, fixing $t(n) = (\log\log n)^2$.

\subsubsection{Insertion} To execute $\Insert(u, v)$,
the algorithm makes a connectivity query to $\DF$ in
$O(\log n/\log t(n)) = O(\log n/\log\log\log n)$ time.
Then, there are two cases:
\begin{itemize}
   \item If $u$ and $v$ are already connected, then the algorithm invokes Operation~\ref{opr:add-1-non-witness} of Lemma~\ref{lemma-main} on the data structure $\DH$, adding the edge $\{u, v\}$ with depth $1$ and endpoint type $\const{secondary}$ in amortized $O(\log n(\log\log n)^2)$ time.
   \item Otherwise, $u$ and $v$ are not connected.
   The algorithm then invokes Operation~\ref{opr:find-i-component} $2\dmax$ times, obtaining pointers to $u^0$ and $v^0$.
   Thus, the cost of Operation~\ref{opr:find-i-component} telescopes to
   $O(\log n\log\log n)$ time.
   The algorithm then merges $u^0$ and $v^0$ using Operation~\ref{opr:merge-two-roots} in amortized $O((\log\log n)^2)$ time. Finally, $\{u,v\}$ is added to the data structure $\DH$ through Operation~\ref{opr:add-1-witness} as an edge with depth $1$ and type $\const{witness}$, in amortized $O(\log n(\log\log n)^2)$ time.
   The algorithm also inserts $\{u, v\}$ into $\DF$,
   in $O(\log n\cdot t(n)) = O(\log n(\log\log n)^2)$ time.
\end{itemize}

Hence, an $\Insert(u, v)$ operation costs amortized $O(\log n(\log\log n)^2)$ time.

\subsubsection{Deletion}\label{subsubsection:detailed-deletion} To execute a $\Delete(u, v)$ operation, where $e=\{u, v\}$,
the algorithm first removes $e$ from $\H$ through Operation~\ref{opr:remove-i-non-witness}, taking amortized
$O(\log n(\log\log n)^2)$ time. If $e$ is a non-witness edge, then the operation is done. Otherwise, the algorithm also removes $e$ from $\DF$ in $O(\log n\cdot t(n))$ time.
Then, the algorithm attempts to find a replacement edge iteratively at
depth $i=d_e, d_e-1, \ldots, 1$.

\paragraph{Preparing Iterations.}
As mentioned in Section~\ref{section:uncovering-a-path}, before the iterations begin, all ancestors of $u^{i-1}=v^{i-1}$ are found and stored in a list, using Operation~\ref{opr:find-i-component}.
The cost of Operation~\ref{opr:find-i-component} telescopes to
$O(\log n\log\log n)$ time.
In addition, all stored $\H$-shortcuts touching 
the path from $u^{i-1}$ to $u^0$ are uncovered, using Lemma~\ref{lemma:uncover-a-path}. 
We note that Invariant~\ref{invariant:it-shortcuts} now holds only for all $\H$-nodes at depth $\ge i$, which validates all operations whose implementation depends on Lemma~\ref{lemma-induced-forests}.
Once the shortcuts have been removed, the iterations begin.

\paragraph{Establishing Two Components.}
On the iteration concerning depth $i$, the algorithm runs two parallel
searches starting from $u^i$ and $v^i$,
obtaining the connected components $c_u$ and $c_v$.
Throughout the search, $\H$-siblings of $u^i$ and $v^i$ are found via $i$-witness edges enumerated by Operation~\ref{opr:all-i-witness}.
Let $S_u$ be the set of $\H$-siblings in the same component $c_u$ with $u^i$ and $S_v$ be the of $\H$-siblings for $c_v$ with $v^i$. Notice that there are exactly $|S_u|-1$ and $|S_v|-1$ $i$-witness edges in $c_u$ and $c_v$ respectively,
and each $i$-witness edge contributes 2 endpoints throughout the search.
Thus, the
searches in parallel take amortized
$O(\min\{|S_u|-1, |S_v|-1\}(\log\log n)+1)$ time
until the first completes.  At this point we can deduce which of $c_u$ or $c_v$ is the smaller weight component; suppose it is $c_u$.

The algorithm uncovers and removes 
all remaining downward shortcuts on
the siblings of $u^i$ that form $c_u$ (Lemma~\ref{lemma:uncover-a-path}), then
performs Operation~\ref{opr:promote-i-witness} to promote
all $(i,\const{witness})$-edges in $c_u$ to 
$(i+1,\const{witness})$ edges,
with a negative amortized cost of $-\Omega((|S_u|-1)(\log\log n)^2)$, 
which pays for the cost of the two searches.

In conclusion, establishing two components costs amortized constant time.

\paragraph{Finding a Replacement Edge.}
Recall from Section~\ref{section:the-batch-sampling-test}
that $\rho$ is the fraction of $i$-primary endpoints belonging to 
replacement edges and $p$ and $s$ are the number of primary and secondary endpoints.
When $\rho > 3/4$ the search for a replacement edge halts after the first or second batch sampling
test with probability $1-1/p$, 
and costs $O(\log n(\log\log n)^2)$ in expectation, which is charged to the $\Delete$ operation.
Suppose that the enumeration procedure is invoked, which 
upgrades \emph{all} $(i, \const{secondary})$ endpoints 
to $(i, \const{primary})$ status (Operation~\ref{opr:convert-i-secondary}), 
and then \emph{some} of the $(i, \const{primary})$ endpoints
to $(i+1, \const{secondary})$ status (Operation~\ref{opr:promote-i-primary}).
This procedure costs $O((p+s)\log\log n)$ time.
The amortized time cost of Operation~\ref{opr:convert-i-secondary} is $-\Omega((s-p)(\log\log n)^2)$.
At this point there are now $p'=p+s$ $(i,\const{primary})$ endpoints.
Suppose that Operation~\ref{opr:promote-i-primary} promotes $s'$ of them
to $(i+1,\const{secondary})$ status, at an amortized time cost of
$-\Omega((12s'-p')(\log\log n)^2) = -\Omega((12(1-\rho)p - (p+s))(\log\log n)^2)$.
(If $s'<p'$, then all the unpromoted endpoints belong to replacement edges.)
Let the leading constants of the amortized costs of Operations~\ref{opr:convert-i-secondary} and \ref{opr:promote-i-primary} be $c_0$ and $c_1$ times that of the cost of the enumeration procedure.
Then the amortized time cost of the enumeration procedure is proportional to
\begin{align*}
\lefteqn{\Big(\log\log n\Big)^2\Big(\big(p+s\big) - c_0\big(s-p\big) - c_1\big(12(1-\rho)p - (p+s)\big)\Big)}\\
&= \Big(\log\log n\Big)^2\Big(p\big(1 + c_0 - c_1(12(1-\rho) - 1)\big)
+
s\big(1 - c_0 + c_1\big)
\Big)
\end{align*}
When $\rho < 3/4$, the contribution of original primary 
endpoints ($p$) is at most $p(1+c_0-2c_1)(\log\log n)^2$,
which is at most 0 when $c_1\geq (1+c_0)/2$.  When $\rho>3/4$ the
enumeration procedure is invoked  with probability at most $1/p$, and the 
expected time cost is $O((\log\log n)^2)$.
Regardless of $\rho$, the contribution of original secondary endpoints
is $s(1-c_0+c_1)(\log\log n)^2$, which is at most 0 when $c_0 \geq c_1+1$.
Setting $c_0 = 3$ and $c_1=2$ satisfies both constraints.

In conclusion, \emph{successfully} 
finding a replacement edge in the first or second batch sampling test costs
$O(\log n(\log\log n)^2)$ expected time, which is charged to the $\Delete$ operation.
If the enumeration procedure is invoked, then the search for a replacement edge may fail
to find a replacement edge at level $i$.  
The amortized expected cost of the enumeration procedure at depth $i$ 
is $O((\log\log n)^2)$, which is charged to the $\Delete$ operation.

\paragraph{Preparation for Next Iteration.}
If no replacement edge is found at the current depth $i$, the algorithm splits $u^{i-1}$ into to $\H$-siblings $\bf u^{i-1}$ and $\bf v^{i-1}$, through Operation~\ref{opr:split-node}.
The split operation costs amortized $O((\log\log n)^2)$ time.
After the split, the algorithm restores all necessary downward shortcuts
touching $\bf u^{i-1}$, $\bf v^{i-1}$, $v^i$, or $\bf u^{i}$, as described in Section~\ref{section:uncovering-a-path} and Lemma~\ref{lemma:uncover-a-path}.
The covering of fundamental shortcuts ensures Invariant~\ref{invariant:it-shortcuts} to hold for all $\H$-nodes at depth $\ge i-1$.
By the same argument from Lemma~\ref{lemma:uncover-a-path},
the total cost of covering these shortcuts is
 $O(\log n (\log\log n)^2)$, which is charged to the $\Delete$ operation.

\paragraph{The End of Iteration.}
Suppose we find a replacement edge at depth $i$.  The algorithm ends
by restoring all necessary shortcuts with one endpoint at an ancestor
of $u^i$ or $v^i$.  
By Lemma~\ref{lemma:uncover-a-path}, this costs
$O(\log n(\log\log n)^2)$ time. 
Furthermore, this restores Invariant~\ref{invariant:it-shortcuts} holds for all nodes in $\H$.

\paragraph{Combining the Costs.}
Summing all costs, the total amortized 
expected time for
an edge deletion is $O(\log n(\log\log n)^2)$.

\section{Conclusion}\label{section:conclusion}

We have shown that the Las Vegas randomized amortized update time of dynamic connectivity is $O(\log n(\log\log n)^2)$, which leaves
a small $(\log\log n)^2$ gap between the cell probe lower bounds of \Patrascu{} and Demaine~\cite{PatrascuD06}
and \Patrascu{} and Thorup~\cite{PatrascuT11}.  
The main bottleneck in our approach is dealing with insertions in the
buffer trees inside local trees.  Each affects $O(\log\log n)$ local tree nodes, and the cost of updating such nodes involves adding $O(\log n)$ (floating point) approximate counters packed into $O(\log\log n)$ machine words.  If this $(\log\log n)^2$ barrier were overcome, there would 
still be a $\log\log n$-factor bottleneck, which arises from the shortcut infrastructure
and the height of the bottom, buffer, and top trees.  

It may be possible to achieve $O(\log n)$ amortized time in the Monte Carlo model with a private connectivity witness, by using connectivity sketches~\cite{AhnGM12,KapronKM13,GibbKKT15,Wang15,NelsonY19}.


\printbibliography

\end{document}